\providecommand{\tabularnewline}{\\}
\providecommand{\algorithmname}{Algorithm}
  \theoremstyle{definition}
  \newtheorem{defn}{\protect\definitionname}
  \theoremstyle{remark}
  \newtheorem{claim}{\protect\claimname}
  \theoremstyle{plain}
  \newtheorem{thm}{\protect\theoremname}
  \theoremstyle{plain}
  \newtheorem{conjecture}{\protect\conjecturename}
   \newenvironment{proof}[1][\proofname]{\par
     \normalfont\topsep6\p@\@plus6\p@\relax
     \trivlist
     \itemindent\parindent
     \item[\hskip\labelsep
           \scshape
       #1]\ignorespaces
   }{%
     \endtrivlist\@endpefalse
   }
   \providecommand{\proofname}{Proof}
  \theoremstyle{plain}
  \newtheorem{lem}{\protect\lemmaname}
\providecommand{\claimname}{Claim}
\providecommand{\conjecturename}{Conjecture}
\providecommand{\definitionname}{Definition}
\providecommand{\lemmaname}{Lemma}
\providecommand{\theoremname}{Theorem}
\begin{document}

\title{Physical-Layer Cryptography \hspace{50mm}Through Massive MIMO}

\author{Thomas R. Dean,\emph{ Member, IEEE,} and Andrea J. Goldsmith\emph{,
Fellow, IEEE} \thanks{The authors are with the Department of Electrical Engineering, Stanford
University, 350 Serra Mall, Stanford, CA 94305 (e-mail: trdean@stanford.edu;
andrea@ee.stanford.edu).}\thanks{This work was presented in part at the 2013 IEEE Information Theory
Workshop in Seville, Spain. T. Dean is supported by the Fannie and
John Hertz Foundation. This work was supported in part by the NSF
Center for Science of Information under Grant CCF-0939370.}}
\maketitle
\begin{abstract}
We propose the new technique of physical-layer cryptography based
on using a massive MIMO channel as a key between the sender and desired
receiver, which need not be secret. The goal is for low-complexity
encoding and decoding by the desired transmitter-receiver pair, whereas
decoding by an eavesdropper is hard in terms of prohibitive complexity.
The decoding complexity is analyzed by mapping the massive MIMO system
to a lattice. We show that the eavesdropper's decoder for the MIMO
system with M-PAM modulation is equivalent to solving standard lattice
problems that are conjectured to be of exponential complexity for
both classical and quantum computers. Hence, under the widely-held
conjecture that standard lattice problems are hard to solve, the proposed
encryption scheme has a more robust notion of security than that of
the most common encryption methods used today such as RSA and Diffie-Hellman.
Additionally, we show that this scheme could be used to securely communicate
without a pre-shared secret and little computational overhead. Thus,
by exploiting the physical layer properties of the radio channel,
the massive MIMO system provides for low-complexity encryption commensurate
with the most sophisticated forms of application-layer encryption
that are currently known.

\smallskip{}

\emph{Index Terms}\textemdash Cryptography, Lattices, MIMO, Quantum
Computing 
\end{abstract}

\section{Introduction}

The decoding of massive MIMO systems forms a complex computational
problem. In this paper, we exploit this complexity to form a notion
of physical-layer cryptography. The premise of physical-layer cryptography
is to allow the transmission of confidential messages over a wireless
channel in the presence of an eavesdropper. We present a model where
a given transmitter-receiver pair is able to efficiently encode and
decode messages, but an eavesdropper who has a physically different
channel must perform an exponential number of operations in order
to decode. This allows for confidential messages to be exchanged without
a shared key or key agreement scheme. Rather, the encryption exploits
physical properties of the massive MIMO channel. 

Our MIMO wiretap channel model for communication is shown in Figure
1. Here, a parallel channel decomposition allows for two users, Alice
and Bob, to communicate with an overhead of only performing linear
precoding and receiver shaping of their MIMO channel, assumed known
to both of them. To an eavesdropper, Eve, who has a different channel,
this decomposition does not aid in the ability to decode the channel
with linear complexity. In particular, we prove that it is exponentially
hard\footnote{More precisely, we prove that it is at least as hard as solving standard
lattice problems in the worst case, which are conjectured to be exponentially
hard.} for the eavesdropper to decode Alice's transmitted vector in our
system model even if it knows the channel between Alice and Bob. We
refer to the channel encryption key as a \emph{Channel State Information-}
or \emph{CSI-key}. The model requires both the transmitter and receiver
to have perfect knowledge of the channel, but this knowledge does
not need to be kept secret. For decoding by Eve to be hard, our model
requires a maximum on the SNR that Eve maintains and that Alice and
Bob use a large constellation size, where the required constellation
size is related to the number of transmit antennas. 

\begin{figure}
\begin{centering}
\includegraphics[width=1\columnwidth]{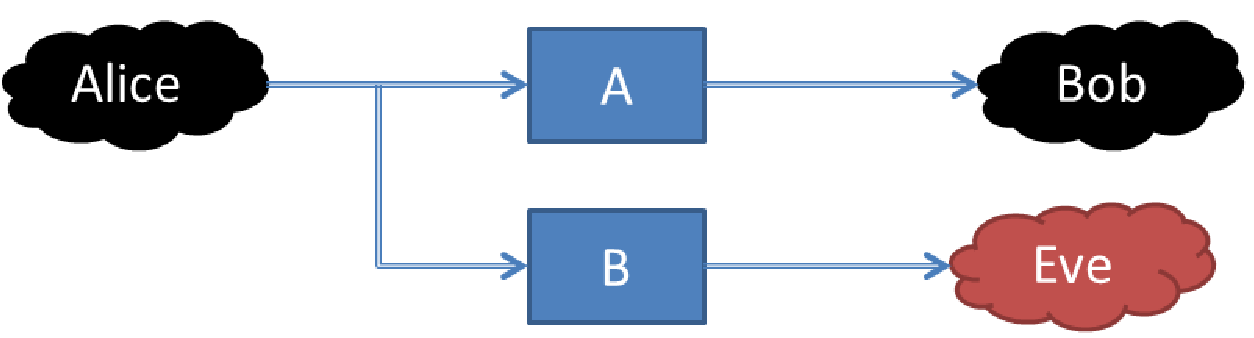} 
\par\end{centering}
\caption{A MIMO wiretap channel model, defined by a channel gain matrix $\mathbf{A}=\mathbf{U\Sigma V^{H}}$,
where $\mathbf{A}$ is known to both Alice and Bob. This allows Bob
to efficiently decode Alice's message. If Eve is not physically co-located,
then knowledge of $\mathbf{A}$, which we call the channel state information
key, does not aid her in decoding the message with low complexity.
Through the use of reductions we show that the complexity of Eve decoding
Alice's message to Bob is at least as hard as standard lattice problems.
Hence, this complexity is conjectured to be exponentially hard in
the number of transmitter antennas Alice uses. In particular, no existing
algorithms, including those of a quantum computer, have been shown
to solve such problems in sub-exponential time}
\end{figure}

To characterize the hardness of decoding MIMO systems, we use the
method of reductions found in the computational study of cryptography.
A brief description of this method will be provided below, while a
more detailed overview of this approach to cryptography can be found
in \cite{KL07}. In the method of reductions, we suppose that we have
access to an oracle, i.e., a black box which, given a channel and
a received vector, returns the proper transmitted vector in a single
operation. If the existence of such an oracle would imply efficient
solutions (i.e., ones whose runtime is bounded to within a polynomial
factor of the length of the input) to problems which are known (or
conjectured) to be hard (i.e., ones whose runtime is bounded to within
an exponential factor of the length of the input), then it follows
that such an oracle cannot (or is conjectured to not) exist. 

We show in this work that under proper conditions, the complexity
of decoding MIMO systems by an eavesdropper can be related to solving
standard lattice problems. The connection between MIMO and lattices
is not new: for example, see Damen et al. \cite{DGC03}, where the
maximum-likelihood decoder is related to solving the Closest Vector
Problem. Problems on lattices have been widely studied in cryptography
and other fields and many standard lattice problems are conjectured
to be hard ({[}3{]}\textendash {[}13{]}). Lattice problems are also
conjectured to be hard even when solved using quantum computers when
they exist (\cite{MR09}, \cite{R05}). Creating efficient cryptosystems
that achieve security given the presence of quantum computers is currently
an active area of research, since most cryptosystems today, such as
RSA (a public-key algorithm named after its inventors, Rivest, Shimir,
and Adleman, \cite{RSA}) or the Diffie-Hellman key exchange (see
\cite{DH}), could efficiently be broken by a quantum computer ({[}16{]}\textendash {[}18{]}).
Our physical-layer cryptosystem provides quantum-resistant cryptography
by exploiting the hardness associated with the eavesdropper's decoding
in a massive MIMO system.

The idea of exploiting properties of the physical layer to achieve
secrecy is not new and dates back to Shannon's notion of information
theoretic secrecy \cite{S49} and Wyner's wiretap channel \cite{W75}
(for a survey on the subject see \cite{MFHS10}). In information theoretic
secrecy, the goal is to communicate in a manner such that legitimate
users may communicate at a positive rate, while the mutual information
between the eavesdropper and the sender is negligibly small. Note
that since in our model Bob and Eve have statistically-identical channels,
information theoretic secrecy is not possible without a key rate,
coding, or using properties of Alice's and/or Bob's channel in the
transmission strategy. Along these lines, Shimizu et al. \cite{SISP10},
have suggested using properties of the radio propagation channel to
achieve information theoretic secrecy. These notions all differ from
ours as we consider encryption based on computational complexity at
the eavesdropper's decoder rather than through equivocation at the
eavesdropper related to entropy and mutual information. We believe
this work makes significant progress in addressing some of the challenges
that have been identified in applying physical layer security to existing
and future systems \cite{Trappe}.

The remainder of this paper is organized as follows. Section II outlines
our system model and the underlying assumptions upon which the security
of our system is based. In Section III we discuss lattices, lattice
problems, and lattice-based cryptography in order to provide the background
for our main result, which is stated in Section IV and proved in Appendix
A. In Section V we discuss additional notions of security for our
model, including how to achieve security under adversarial models
commonly considered by cryptographers. 

\section{Problem Formulation}

\subsection{The Wiretap Model}

Consider an $n\times m$ real-valued MIMO system consisting of $n$
transmit antennas and $m$ receive antennas: 
\begin{equation}
\mathbf{y=Ax+e},
\end{equation}
where $\mathbf{x}\in\mathbb{R}^{n}$, and $\mathbf{A}\in\mathbb{R}^{n\times m}$
is the channel gain matrix. Each entry of the channel gain matrix
is drawn i.i.d. from the Gaussian distribution with zero mean and
standard deviation $k/\mbox{\ensuremath{\sqrt{2\pi}}}$. This distribution
is henceforth written $\Psi_{k}$. The vector $\mathbf{e}\in\mathbb{R}^{m}$
is the channel noise with each entry i.i.d. $\Psi_{\alpha}$. It is
assumed that $\mathbf{A}$ is known to both the transmitter and receiver.
If we constrain the vector $\mathbf{x}$ to use a discrete, periodic
constellation, then the set of received points becomes analogous to
points on a lattice, perturbed by a Gaussian random variable. 

We assume that there are an arbitrary number of receive antennas,
restricted to an amount within a polynomial factor of the number of
transmit antennas, $n$. By making this assumption, we are considering
the advantage an eavesdropper would gain by having an arbitrarily
large number of receive antennas, but we are assuming that building
a receiver with an exponential number of antennas relative to those
at the legitimate user's transmitter would be prohibitively expensive.
Assuming that certain requirements on SNR and constellation size are
met, as described below, the security of the system can be quantified
solely by the number of transmit antennas. This number plays the role
of the \emph{security parameter} commonly used by cryptographers in
designing cryptosystems. Essentially, we are saying that decoding
the system requires a number of operations that is exponential in
$n$ and we parameterize the remaining variables in our system based
on $n$. 

We consider real systems with the transmitted signal constellation,
$\mathcal{X}$, defined as the set of integers $[0,M)$. Lattices
can easily be scaled and shifted, so we use this constellation without
loss of generality over all possible M-PAM constellations. Our analysis
assumes channels with a zero-mean gain, but this is for simplicity
of exposition as the proof given in Appendix A can easily be generalized
to systems with non-zero mean. We consider uncoded systems, but our
results can be extended to coded systems as we discuss in Section
V. 

Let the vector $\mathbf{a}_{i}\in\mathbb{R}^{n}$ denote the gains
between the transmitter and the $i$th receive antenna, let $e_{i}$
denote the noise sample at this antenna, and let $\mathbf{x}$ represent
the transmitted vector which is drawn from $\mathcal{X}^{n}$. The
$i$th receive antenna gets a noisy, random inner-product of the form

\begin{equation}
y_{i}=\left\langle \mathbf{a}_{i},\mathbf{x}\right\rangle +e_{i}.
\end{equation}

Our work can easily be extended to complex channels, where the real
and imaginary portions of the channel gain matrix and noise are drawn
independently, and M-QAM symbols are sent, using an equivalent constellation
as in the real case.

Our model considers static channels. However, our model can easily
be extended to time-varying channels: consider a single message is
transmitted $t$ times, no more than once per channel coherence period.
Each transmission is equivalent to Eve receiving the original message
with $t$ times as many receive antennas which does nothing to aid
in the computational complexity of her decoding.

Let user $\mathcal{A}$ have $n$ transmit antennas which are used
to send a message to user $\mathcal{B}$ who has a number of receive
antennas that is within a polynomial factor of $n$. Let the channel
between $\mathcal{A}$ and $\mathcal{B}$ be represented by $\mathbf{A}$,
where each entry is i.i.d. Gaussian $\Psi_{k}$. The noise at each
receive antenna is drawn from $\Psi_{M\alpha}$. The results in this
paper show that MIMO decoding can be related to solving standard lattice
problems when a certain minimum noise level and constellation size
are met. If the noise power is below the required level, efficient
decoding methods such as the zero-forcing decoder could be applied
to our system. In other words, if these conditions are not met, then
our results provide no insight on the complexity of decoding, and
hence on the security of the MIMO wiretap channel. Specifically, for
some arbitrary $m>0$, we require the following constraints on the
transmission from user $\mathcal{A}$ to user $\mathcal{B}$:

\begin{equation}
\textrm{Minimum Noise: }m\alpha/k^{2}>\sqrt{n}
\end{equation}

\begin{equation}
\textrm{Constellation Size: }M>m\,2^{n\log\log n/\log n}
\end{equation}
where the parameter $m$ may be chosen by a user or system designer
in order to trade off the SNR requirement for the size of the constellation.

Now consider an eavesdropper, $\mathcal{EVE}$, which has $\mbox{poly}(n)$
receive antennas, and receives message $\mathbf{x}$ with channel
represented by $\mathbf{B}$, where each entry is again i.i.d. $\Psi_{k}$.
Let the channel have noise be drawn from $\Psi_{M\beta}$, where $\beta\geq\alpha$.
In other words, the eavesdropper must meet at least the minimum noise
requirement stated above. Discussion on how to ensure this requirement
is met is also provided in Section V.

In order to send message $\mathbf{x}$ to user $\mathcal{B}$, user
$\mathcal{A}$ performs a linear precoding as described in \cite{G04}.
Let the singular value decomposition of $\mathbf{A}$ be given as
\textbf{$\mathbf{A}=\mathbf{U}\mathbf{\Sigma}\mathbf{V}^{H}$}. $\mathcal{A}$
now sends $\tilde{\mathbf{x}}=\mathbf{Vx}$. Upon receiving a transmission
from user $\mathcal{A}$, user $\mathcal{B}$ computes $\tilde{\mathbf{y}}=\mathbf{U}^{H}\mathbf{y}$.
It is easy to show that this expands to $\mathbf{\tilde{\mathbf{y}}=\Sigma x+\tilde{e}}$.
Since $\mathbf{\Sigma},$ representing the singular values of $\mathbf{A}$,
is a diagonal matrix, $\mathcal{B}$ can efficiently estimate $\mathbf{x}$
with linear complexity in $n$. Notice that $\mathbf{U}$ is unitary
so $\left\Vert \mathbf{e}\right\Vert =\left\Vert \mathbf{\tilde{e}}\right\Vert $.

Now consider the message received by $\mathcal{EVE}$: 
\begin{equation}
\mathbf{\tilde{y}}=\mathbf{BVx}+\mathbf{\tilde{e}}.
\end{equation}

Note that $\mathbf{V}$ consists of the right singular vectors of
$\mathbf{A}$, which is independent of $\mathbf{B}$ and unitary.
Gaussian random matrices are orthogonally invariant \cite{ER05},
so since $\mathbf{V}$ is unitary, multiplying by $\mathbf{V}$ returns
the matrix to an identical, independent distribution. In other words,
the entries of $\mathbf{BV}$ are i.i.d., following the same distribution
as $\mathbf{B}$.

As the main result of this work, we will prove that the computational
complexity for $\mathcal{EVE}$ to efficiently recover $\mathbf{x}$
can be mapped to the problem of solving standard lattice problems
which are conjectured to be computationally hard. 

\subsection{Definitions}

The following definitions will be used in the proofs of our results.
We define a \emph{negligible} amount in $n$, denoted $\mbox{negl.}\left(n\right)$
as any amount being asymptotically smaller than $n^{-c}$ for any
$c>0$. Similarly a \emph{non-negligible} amount is one that is at
least $n^{-c}$ for some $c>0$. We define a \emph{polynomial} amount,
denoted as $\mbox{poly}(n)$, as an amount that is at most $n^{c}$
for some $c>0$. An expression is \emph{exponentially small} in $n$
when it is at most $2^{-\Omega(n)}$, and \emph{exponentially close}
to 1 when it is $1-2^{-\Omega(n)}$. A probability is \emph{overwhelming}
if it holds with probability $1-n^{-c}$ for some $c>0$. An algorithm
is \emph{efficient} or \emph{efficiently computable} if its run time
is within some polynomial factor of the parameter $n$. An algorithm
is \emph{hard} if its run time is at least $2^{n}$. We assume that,
as input, algorithms accept real numbers approximated to within $2^{-n^{c}}$
for some $c>0$. $\mathbb{Z}$ represents the set of all integers
and $\mathbb{Z}_{q}$ represents the set of integers modulo $q$.
Given two probability density functions $\phi_{1}$, $\phi_{2}$ on
$\mathbb{R}^{n}$, we define the \emph{total variational distance
}as 
\begin{equation}
\Delta(\phi_{1},\phi_{2})=\frac{1}{2}\int_{\mathbb{R}^{n}}\left|\phi_{1}(\mathbf{x})-\phi_{2}(\mathbf{x})\right|d\mathbf{x}.
\end{equation}

\subsection{MIMO Signal Distributions }

In this subsection, we define various distributions that are used
in our problem. Specifically, we discuss distributions of lattice
points and distributions which can be empirically related to received
MIMO signals. 

\begin{figure}
\includegraphics[width=1\columnwidth]{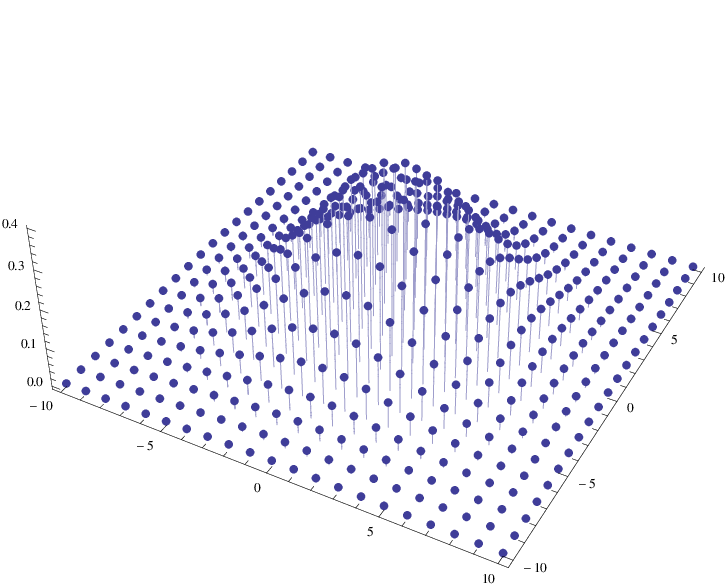}

\caption{A Gaussian distribution on a two-dimensional lattice}
\end{figure}

First, in the proof of our work, we will need to generate lattice
points according to a Gaussian distribution. Since a lattice is a
discrete set of points, we define a \emph{discrete Gaussian distribution},
$D_{A,\alpha}$, for any countable set $A$ and parameter $\alpha$
as 
\begin{equation}
\forall\mathbf{x}\in A,D_{A,\alpha}(\mathbf{x})=\frac{\Psi_{\alpha}(\mathbf{x})}{\Psi_{\alpha}(A)}.
\end{equation}

We now define the distribution, $A_{M,\alpha,k}$ on $\mathbb{R}^{n}\times\mathbb{R}$.
The distribution $A_{M,\alpha,k}$ is the distribution of channel
gains and the received signal from a single antenna in a MIMO system
for a transmitted vector. This distribution is defined for a single
antenna so a MIMO system with $m$ receive antennas would get $m$
samples from this distribution. This distribution is the input to
the MIMO decoding problem which we will define in this section. For
some arbitrary $\mathbf{x}\in\mathcal{X}^{n}$, we define the distribution
$A_{M,\alpha,k}$ as the distribution on $\mathbb{R}^{n}\times\mathbb{R}$
given by drawing $\mathbf{a}$ as defined above, choosing $e_{i}\sim\Psi_{M\alpha}$
and outputting $\left(\mathbf{a}_{i},y_{i}=\left\langle \mathbf{a}_{i},\mathbf{x}\right\rangle +e_{i}\right)$.
We could alternatively express our system as drawing $e_{i}\sim\Psi_{\alpha}$
and outputting $\left(\mathbf{a}_{i},y_{i}=\left\langle \mathbf{a}_{i},\mathbf{x}\right\rangle /M+e_{i}\right)$,
which would result in the same received SNR.

\subsection{The MIMO Decoding Problem}

Given these distributions, we now precisely define MIMO decoding for
the eavesdropper in the MIMO wiretap channel, which we denote as the
\texttt{MIMO-Search} problem. The search problem asks us to recover
the transmitted vector $\mathbf{x}$ without error. In Section V,
we discuss how to use the hardness of the problem to achieve cryptographically
secure systems, and provide a comparison between cryptographer's notions
of security with information theoretic ones. In Appendix A, we prove
that the search problem is as hard as solving certain lattice problems.
We also use the term ``MIMO decoding problem'' to refer to the search
problem.

We wish to show that the MIMO decoding problem, defined below, is
hard to solve, i.e., that this decoding is of exponential complexity
in the number of transmit antennas. We say that an algorithm solves
this problem if it returns the correct answer with a probability greater
than $1-n^{-c}$, for some $c>0$.

\vspace{2mm}

\noindent \textbf{Problem Definition 1.} $\mathtt{MIMO-Search}_{M,\alpha,k}$.
Let $M\geq2$, $\alpha\in(0,1)$, $k\in\mathbb{R}$, $n>0$. Given
a polynomial number of samples of $A_{M,\alpha,k}$, output $\mathbf{x}$.

\vspace{2mm}

The MIMO search problem above will be related to solving standard
lattice problems (discussed in Section III) through standard reductions.
In Section III and Appendix B we will also discuss the complexity
of solving these lattice problems. The result of our reduction implies
that the complexity of decoding a MIMO system grows exponentially
in the number of transmit antennas, i.e., MIMO decoding is at least
as hard as the lattice problems to which we relate them. More precisely,
we state this in a contrapositive manner below. This contrapositive
statement allows us to conjecture a lower bound on the complexity
of solving the MIMO decoding problems, based on the conjectured hardness
of solving lattice problems. 

\vspace{2mm}

\noindent \textbf{Main Result.} Let $M>m\,2^{n\log\log n/\log n}$
for some $m>0$, $\alpha\in\mathbb{R}$, and $k\in\mathbb{R}$ be
such that $m\alpha/k^{2}>\sqrt{n}$. Given access to an efficient
algorithm that can solve $\mathtt{MIMO-Search}_{M,\alpha,k}$, there
exist efficient classical and quantum solutions to standard lattice
problems, which are conjectured to be hard.

\vspace{2mm}

We breifly introduce a second problem, the \texttt{MIMO-Decision}
problem. This problem asks whether or not received samples are transmitted
from a MIMO system (with known channel gain matrix) or are generated
from a Gaussian distribution. Similar decision problems are common
in cryptography, for example, in \cite{R05}, Regev shows that the
decision variant of the LWE problem is hard, and uses this fact to
achieve semantic security \textendash{} effectively hiding information
in a random variable that is uniformly random. Due to the exponential
requirement on the constellation size in our main result, a reduction
from the \texttt{MIMO-Decision} problem to the \texttt{MIMO-Search}
problem is not apparent. It is an open problem to show whether or
not this problem is hard, and if so, how the result could be used
to construct a secure cryptosystem.

\vspace{2mm}

\noindent \textbf{Problem Definition 2.} $\mathtt{\mbox{\ensuremath{\mathtt{MIMO-Decision}}}}_{M,\alpha,k}$.
Let $M\geq2$, $\alpha\in(0,1)$, $k\in\mathbb{R},$ $n>0$. Given
a polynomial number of samples, distinguish between samples of $A_{M,\alpha,k}$
and samples of $\Psi_{M(\alpha+k)}$

\subsection{Assumptions}

In this work we offer a proof of security by showing that breaking
our cryptosystem is at least as hard as solving well-known lattice
problems in the worst case. These problems are conjectured to require
exponential running time. This conjecture and the large body of research
behind it is discussed in the following section. For a more through
treatment of the hardness of these problems, see {[}3{]}\textendash {[}12{]}.
Our proof is also based upon the following assumptions: 
\begin{itemize}
\item We assume that the Gaussian channel noise has sufficiently large power
so that $m\alpha/k^{2}>\sqrt{n}$. If the noise is too small, it is
possible that a subexponential algorithm exists to recover the message
(for motivation of this belief consider the subexponential attack
on the LWE problem with low noise described by \cite{AG}). One possible
way to ensure this requirement is met is to add noise at the transmitter.
This intentionally degrades the communication SNR for a trade-off
of security. A further discussion and characterization of our system
in terms of received SNR is found in Section V. 
\item We assume that the constellation size of the system is large relative
to the number of transmit antennas. Specifically, in order for the
proof of Theorem 1 to hold, we require that $M>m\,2^{n\log\log n/\log n}$.
Unlike the noise requirement, it is unclear that sub-exponential decoding
immediately follows if this bound is not strictly met; it is an open
problem as to whether or not this requirement is needed. 
\item We assume that each entry in the channel gain matrix is independent
and Gaussian. More importantly, we assume that the two receivers of
Bob and Eve, have independent channels. The fact that channels between
different antennas are independent is well justified in most scattering
environments. For example, in a uniform scattering environment, it
has been shown that channels are independent over a distance of $0.4\lambda$,
see e.g. \cite{C68} or \cite{J74}. This independence analysis is
extended to MIMO channels in \cite{MF02} and \cite{ECSRR98}. 
\end{itemize}

\section{Lattices}

We now provide an overview of lattices and lattice-based cryptography.
This section contains all of the concepts used in the cryptography
literature that are required in the proof of our main result. We first
define several problems on lattices that are all conjectured to be
hard to solve and are all used in the proof of our main result to
show that MIMO decoding is at least as hard as solving standard lattice
problems. We next provide a discussion on the complexity of solving
these lattice problems, followed by a discussion on the Learning With
Errors ($\mathtt{LWE}$) problem. The Learning With Errors problem
has a striking similarity to the problem of MIMO decoding. We follow
a very similar approach to show the hardness of MIMO decoding as is
used to show the hardness of LWE decoding. 

Lattice-based cryptography has generated much research interest in
recent years. Cryptographers' interest in lattices largely began with
the surprising result of Ajtai \cite{A96} which created a connection
between the average-case and worst-case complexity of lattice problems.
Cryptographers have used these results to create a wide variety of
cryptographic constructions which enjoy strong proofs of security.
Important to this paper is the work of Micciancio and Regev \cite{MR04}
which explores Gaussian measures on lattices. The Learning With Errors
problem, first introduced by Regev \cite{R05}, shows that recovering
a point that is perturbed by a small Gaussian amount from a random
integer lattice can be related to approximating solutions to standard
lattice problems. This problem is discussed more thoroughly below.
The hardness reduction in this work very closely follows the work
in \cite{R05}. For a survey on lattice-based cryptography, see \cite{MR09}.

A \emph{lattice} is a discrete periodic subgroup of $\mathbb{R}^{n}$
that is closed under addition. Alternatively, a lattice can be defined
as the set of all integer combinations of $n$ linearly independent
vectors, known as a \emph{basis}, 
\begin{equation}
\mathcal{L}(\mathbf{A})=\left\{ \mathbf{Ax}=\sum_{i=1}^{n}x_{i}\mathbf{A}_{i}:\mathbf{x}\in\mathbb{Z}^{n}\right\} .
\end{equation}

A lattice is not defined by a unique set of basis vectors. Any basis
$\mathbf{A}$ multiplied by a unimodal matrix results in an alternative
basis representation of the same lattice. We can define a set of lengths,
in the $l_{2}$-sense, $\lambda_{i}(\mathbf{A}),\:\mathrm{{for}}\,i\in\{1,\ldots,n\}$,
as the radius of the smallest ball around the origin that contains
$i$ linearly independent vectors. These lengths are known as the
\emph{successive minima}. It is easy to show that there exists vector
$\|v_{i}\|=\lambda_{i}(\mathbf{A})$ for all $1\leq i\leq n$ and
that $\lambda_{1}(\mathbf{A})\geq\min_{i}\|\mathbf{\tilde{A}}_{i}\|$,
where $\mathbf{\tilde{A}}$ is the Gram-Schmidt orthonormalization
of the basis $\mathbf{A}$, where the norm is taken as the $l_{2}$
norm.

The dual of a lattice $\mathcal{L}(\mathbf{A})\in\mathbb{R}^{n}$,
$\mathcal{L}^{*}(\mathbf{A})$, is the lattice given by all $\mathbf{y}\in\mathbb{R}^{n}$
such that for every $\mathbf{x}\in\mathcal{L}(\mathbf{A})$, $\left\langle \mathbf{x},\mathbf{y}\right\rangle \in\mathbb{Z}$.
Since $\mathbf{A}$ is full rank, $\mathcal{L}((\mathbf{A}^{T})^{-1})$
is the dual of the lattice $\mathcal{L}(\mathbf{A})$.

\subsection{Lattice Problems.}

This section considers a set of standard problems on lattices that
are conjectured to be computationally hard. These problems will be
used in the characterization of the computational complexity of MIMO
decoding. These problems are defined with an approximation factor,
$\gamma(n)\geq1$, and input given in the form of an arbitrary basis.
The precise definition of the approximation factor varies for each
problem, but is precisely stated in each definition below. The approximation
factor plays an important role in the computational complexity required
to solve a given problem. In general, for very large approximation
factors, the following lattice problems can be efficiently computed.
For small approximation factors, the problems are conjectured to require
exponential running time. The problem of MIMO decoding will be related
to solving lattice problems with an approximation factor of $\gamma=n/\alpha$,
which is conjectured to be hard. The relation between the approximation
factor and the complexity of solving the problem is discussed in detail
in Appendix B. For a survey on the hardness of these problems, see
\cite{MR09} or \cite{MG02}.

The first problem we describe is the shortest vector problem. In general,
these problems can be efficiently solved if $\gamma$ is at least
$O\left(2^{n}\right)$, but are conjectured to require exponential
run time for exact solutions or even for polynomial approximation
factors. A close variant of this problem, the decision Shortest Vector
Problem ($\mathtt{GapSVP}$), asks whether or not the shortest vector
generated by the lattice is shorter than some distance $d$. In general,
for an arbitrary basis, short vectors are hard to find. 
\begin{defn}
$\mathtt{GapSVP_{\gamma}}$ is defined as follows: given an $n$-dimensional
lattice $\mathcal{L}(\mathbf{A})$ and a number $d>0$, output YES
if $\lambda_{1}(\mathbf{A})\leq d$ or NO if $\lambda_{1}(\mathbf{A})>\gamma(n)\cdot d$.
\end{defn}
The following problem, the shortest independent vectors problem ($\mathtt{SIVP}$),
is often referred to as a lattice basis reduction.
\begin{defn}
$\mathtt{SIVP}_{\gamma}$ is defined as follows: given an $n$-dimensional
lattice $\mathcal{L}(\mathbf{A})$, output a set of $n$ linearly
independent vectors of length at most $\gamma(n)\cdot\lambda_{n}(\mathbf{A})$. 
\end{defn}
One well-known algorithm for finding approximate solutions to the
$\mathtt{SIVP}$ problem is the Lentra-Lentra-Lovász (LLL) lattice
basis reduction algorithm, which creates an ``LLL-reduced'' basis
in polynomial time. For large values of $n$, the LLL algorithm returns
a basis that is exponentially larger than the shortest possible basis
of the lattice. It is generally conjectured that no polynomial time
algorithm could approximate $\mathtt{SIVP}$ to within a polynomial
factor of $n$. For a more in-depth discussion of lattice basis reduction
algorithms and their complexity, see \cite{MR09}.

The following two lattice problems are important in the reduction
given in this paper. They are also used by Regev in \cite{R05} in
his proof of the security of LWE. The first problem, the Bounded Distance
Decoding ($\mathtt{BDD}$) problem, is equivalent to decoding a linear
code, where the received vector is at most some distance $d$ from
the nearest codeword. Exact solutions to this problem are conjectured
to require exponential run time. By requiring $d<\lambda_{1}/2$,
we ensure that the closest vector is unique. The Closest Vector Problem
(CVP) is identical to this problem but without a bounding distance.
The closest vector problem arrises in many other contexts, for example
it is equivalent to decoding a linear code, and also the maximum-likelihood
decoding problem in MIMO systems.
\begin{defn}
The $\mathtt{BDD}_{\mathcal{L}(\mathbf{A}),d}$ problem is defined
as follows: given an $n$-dimensional lattice $\mathcal{L}(\mathbf{A})$,
a distance $0<d<\lambda_{1}(\mathbf{A})/2$, and a point $\mathbf{x}\in\mathbb{R}^{n}$
which is at most distance $d$ from a point in $\mathcal{L}(\mathbf{A})$,
output the unique closest vector in $\mathcal{L}(\mathbf{A})$ 
\end{defn}
Informally, we can relate MIMO decoding to $\mathtt{BDD}$ (or more
generally to the Closest Vector Problem), as follows. Linear precoding
is well known to simplify encoding and decoding of a MIMO system to
be of polynomial complexity in $n$. In terms of lattice problems,
we say that this linear precoding transforms the lattice basis into
one in which the Closest Vector Problem is easy to solve. This is
very closely related to the cryptographic notion of the trapdoor function:
a function that is easy to compute in one direction, but computationally
infeasible to invert without a key (which serves as the ``trapdoor'')
\cite{DH}. The linear precoding transformation applied to the MIMO
channel effectively creates a spatially-varying trapdoor in that it
allows the MIMO channel to be efficiently inverted only at Bob's location.
In our model, both communicators must have knowledge of the channel
in order to perform the parallel decomposition via a singular value
decomposition (SVD) associated with linear precoding. The eavesdropper
may also learn this channel and hence learn the ``key'', but this
does not allow it to decode the message with lower complexity. 

We finally define the discrete Gaussian sampling problem ($\mathtt{DGS}$),
which, for small values of $r$, can be reduced to solving $\mathtt{GapSVP}$
and $\mathtt{SIVP}$. See \cite{R05} for this reduction. This problem
asks us to sample a Gaussian distribution, with standard deviation
$r/\sqrt{2\pi}$, with support over a lattice. 
\begin{defn}
The $\mathtt{DGS}{}_{r}$ problem is defined as follows: given an
$n$-dimensional lattice $\mathcal{L}(\mathbf{A})$ and a number $r$,
output a sample from $D_{L,r}$, the discrete distribution defined
in Section II.3. 
\end{defn}
We borrow the following claim from \cite{GPV08}, which gives an efficient
algorithm to solve $\mathtt{DGS}{}_{r}$ for large $r$. Intuitively,
solving $\mathtt{DGS}{}_{r}$ for small values of $r$ becomes a computationally
harder task because it reveals information about the short vectors
in the lattice. We use the following claim in the proof of our main
theorem, as our main theorem requires sampling Gaussian distributions
on lattices. 
\begin{claim}
\cite[Theorem 4.1]{GPV08}. There exists a probabilistic polynomial-time
algorithm that, given an $n$-dimensional lattice $\mathcal{L}(\mathbf{A})$,
and a number $r>\lambda_{n}(\mathcal{L}(\mathbf{A}))\cdot\omega\left(\sqrt{\log n}\right)$,
outputs a sample from a distribution that is within negligible distance
of $D_{\mathcal{L}(\mathbf{A}),r}$. 
\end{claim}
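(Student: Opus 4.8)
This statement is essentially Theorem~4.1 of \cite{GPV08}, so one option is simply to invoke it; for completeness I sketch the construction behind it. The plan is to build a \emph{randomized nearest-plane} sampler, a randomized variant of the nearest-plane (Babai) algorithm. Given a basis $\mathbf{a}_1,\dots,\mathbf{a}_n$ of $\mathcal{L}(\mathbf{A})$, first compute its Gram--Schmidt orthogonalization $\tilde{\mathbf{a}}_1,\dots,\tilde{\mathbf{a}}_n$. Then iterate $i=n,n-1,\dots,1$: at stage $i$, let $c_i$ be the coordinate of the current residual target along $\tilde{\mathbf{a}}_i$, draw an integer $z_i$ from the one-dimensional discrete Gaussian on $\mathbb{Z}$ with width parameter $s_i=r/\|\tilde{\mathbf{a}}_i\|$ centered at $c_i$ (density proportional to $\exp(-\pi(z-c_i)^2/s_i^2)$), subtract $z_i\mathbf{a}_i$ from the target, and continue; finally output $\sum_{i=1}^n z_i\mathbf{a}_i$. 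Each one-dimensional draw is performed by rejection sampling: the density is cheap to evaluate to the required $2^{-n^c}$ precision and has exponentially thin tails, so $\mathrm{poly}(n)$ proposals suffice with overwhelming probability.

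Correctness follows by downward induction on $i$. One shows that, conditioned on the choices made at stages $n,\dots,i+1$, the partial output is within negligible total variational distance of the discrete Gaussian $D_{\mathcal{L}_i,r}$ restricted to the appropriate coset, where $\mathcal{L}_i$ is the sublattice generated by $\mathbf{a}_1,\dots,\mathbf{a}_i$. The single-stage guarantee is the smoothing-parameter lemma of Micciancio and Regev \cite{MR04}: whenever $s_i=r/\|\tilde{\mathbf{a}}_i\|\ge\eta_\epsilon(\mathbb{Z})$, the marginal of $D_{\mathcal{L}_i,r}$ along $\tilde{\mathbf{a}}_i$ is within $O(\epsilon)$ of the one-dimensional discrete Gaussian that the sampler actually uses. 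Taking $\epsilon$ negligible in $n$ and invoking the bound $\eta_\epsilon(\mathbb{Z})=O(\sqrt{\log(1/\epsilon)})=\omega(\sqrt{\log n})$ makes every stage valid as long as $r\ge\max_i\|\tilde{\mathbf{a}}_i\|\cdot\omega(\sqrt{\log n})$; since there are only $n$ stages, the accumulated error stays negligible.

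To express the hypothesis in terms of $\lambda_n(\mathcal{L}(\mathbf{A}))$ rather than the Gram--Schmidt profile of the given basis, I would first run a polynomial-time lattice reduction producing a basis whose Gram--Schmidt norms are controlled by $\lambda_n$ (the preprocessing used in \cite{GPV08}) before invoking the sampler, which yields the stated condition $r>\lambda_n(\mathcal{L}(\mathbf{A}))\cdot\omega(\sqrt{\log n})$ and a sample within negligible distance of $D_{\mathcal{L}(\mathbf{A}),r}$. I expect the main obstacle to be the inductive error analysis: one must verify that the smoothing condition $r/\|\tilde{\mathbf{a}}_i\|\ge\eta_\epsilon(\mathbb{Z})$ holds \emph{simultaneously at every level}---which is precisely where the bound relating the Gram--Schmidt norms to $\lambda_n$ enters---and that the $n$ per-stage statistical errors together with the rejection-sampling failure probability sum to a quantity that is still negligible in $n$. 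The one-dimensional sampler with an arbitrary real center is routine but still requires a careful tail estimate to certify its polynomial running time.
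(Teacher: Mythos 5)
The paper offers no proof of this claim at all: it is imported verbatim as Theorem~4.1 of \cite{GPV08}, and your sketch is an accurate reconstruction of the argument given there (the randomized nearest-plane / Klein-style sampler, analyzed level by level via the smoothing parameter of $\mathbb{Z}$). So your proposal is correct and consistent with what the paper relies on. One small caution on your final paragraph: no polynomial-time reduction produces a basis whose Gram--Schmidt norms are bounded by $\lambda_n(\mathcal{L}(\mathbf{A}))$ itself --- that would amount to solving $\mathtt{SIVP}$ exactly; LLL/Schnorr only give $2^{n\log\log n/\log n}\lambda_n$. The GPV theorem is really stated in terms of the Gram--Schmidt norm $\max_i\|\tilde{\mathbf{a}}_i\|$ of whatever basis (or full-rank set of independent vectors) you are handed, and the paper's restatement with $\lambda_n$ is an existential convenience. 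This causes no trouble where the claim is actually used (Lemma~6 of Appendix~A invokes it only for $r>2^{n\log\log n/\log n}\lambda_n$, which LLL-type preprocessing does reach, and later iterations supply progressively shorter vectors), but as literally stated the hypothesis $r>\lambda_n\cdot\omega(\sqrt{\log n})$ is not efficiently certifiable from an arbitrary input basis.
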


\subsection{Learning With Errors}

We now give an overview of the Learning With Errors problem. The reader
may note the similarities between this problem and the problem of
MIMO decoding. In particular, the Learning with Errors problem is
very similar to the problem presented in this paper except that Learning
with Errors is set entirely over integer fields, whereas MIMO decoding
is set in the reals. We take advantage of the results related to the
Learning with Errors problem to show the hardness of MIMO decoding.

The Learning with Errors ($\mathtt{LWE}$) problem was first presented
by Regev in \cite{R05}. The problem allows one to have an arbitrary
number of samples of ``noisy random inner products'' of the form
\begin{equation}
\left(\mathbf{a},y=\left\langle \mathbf{a},\mathbf{x}\right\rangle +e\right),
\end{equation}
where $\mathbf{a}\in\mathbb{Z}_{q}^{n}$, $\mathbf{x}\in\mathbb{Z}_{q}^{n}$,
and $e\in\mathbb{Z}_{q}$. Each $\mathbf{a}$ is random from the uniform
distribution over $\mathbb{Z}_{q}^{n}$, and each $e$ is drawn from
a discrete Gaussian distribution with a small second moment (relative
to $q$). This problem is an extension of the classic \emph{learning-parity
with noise }problem of machine learning. As with many hard problems,
the $\mathtt{LWE}$ problem has two variants: the search and decision
variants. These two problems are related in complexity. In the search
variant of the problem, the goal is to recover the vector $\mathbf{x}$;
whereas the decision variant asks us to distinguish between samples
of $y$ and a random distribution. If we fix the number of samples
to which one has access, the $\mathtt{LWE}$ problem becomes analogous
to finding the closest vector in an integer lattice. In \cite{R05},
Regev proves that the $\mathtt{LWE}$ problem is at least as hard
as solving $\mathtt{GapSVP}_{n/c}$ and $\mathtt{SIVP}_{n/c}$ in
the worst case, but requires the use of a quantum computer. Peikert
in \cite{P09} demonstrates a reduction that does not require a quantum
computer for the case where $q=O\left(2^{n}\right)$.

If we fix the number of samples available to the receiver, then this
problem becomes equivalent to decoding a random linear code. Consider
the case where we take a random generator matrix $\mathbf{A}\in\mathbb{Z}_{q}^{m\times n}$,
a vector $\mathbf{t}=\mathbf{Ax}+\mathbf{e}\in\mbox{\ensuremath{\mathbb{Z}}}_{q}^{m}$
and we wish to recover the vector $\mathbf{x}$. If we restrict $m<\mbox{poly}(n)$,
and provide the proper distribution on $\mathbf{e}$, then decoding
this code is as hard as solving worst-case lattice problems to within
a factor of $\gamma<n/c$.

We briefly summarize Regev's reduction as follows. If an efficient
algorithm exists to solve $\mathtt{LWE}$, then this can be used to
construct an efficient algorithm that solves the $\mathtt{BDD}$ problem
for any $n$-dimensional lattice. This step is somewhat surprising
as it allows us to use the $\mathtt{LWE}$ algorithm, which operates
on an integer lattice reduced modulo $q$, to solve $\mathtt{BDD}$
for any real lattice. How this step is accomplished is not entirely
intuitive, but requires one to convert a single instance of $\mathtt{BDD}$
into an arbitrary number of samples of the distribution defined in
the $\mathtt{LWE}$ problem. This step also requires as an input samples
of a number of lattice points drawn from a discrete Gaussian distribution
with a large second moment. With an efficient algorithm to solve $\mathtt{BDD}$,
it is possible to generate a distribution of lattice points which
has a smaller second moment than input in the previous step. The above
process can now be iterated with this new distribution of lattice
points to solve $\mathtt{BDD}$ with a larger bounding distance than
before. Repeating this step will eventually result in a distribution
of lattice points with a very small second moment, which will reveal
information about the shortest vectors of the lattice, allowing us
to approximate solutions to $\mathtt{GapSVP}$ and $\mathtt{SIVP}$.

This step, using an $\mathtt{LWE}$ algorithm to solve any instance
of the $\mathtt{BDD}$ problem, turns out to be extremely useful to
characterize the hardness of MIMO decoding. In particular, we will
show that if we have an efficient algorithm for MIMO decoding, then
this algorithm can be used to efficiently to solve instances of the
$\mathtt{BDD}$ problem. Then, with an efficient $\mathtt{BDD}$ algorithm,
Regev's quantum reductions to worst-case lattice problems follow.
Additionally for large $M$, Peikert's classical reduction follows. 

\subsection{Lattice-Based Cryptography}

In recent years, lattice-based cryptography has become a very attractive
field for cryptographers for a number of reasons. The security guarantees
provided by many lattice-based schemes far exceeds that of many modern
schemes such as RSA and Diffie-Hellman, since lattice problems enjoy
an average-to-worst case connection, as discussed in Appendix B. Lattice
problems also appear to be resistant to quantum computers. The creation
of such computers poses a significant challenge to the state of modern
cryptography, since a quantum computer could break most modern number-theoretic
schemes. In addition, many proposed lattice schemes are competitive
with, or better than, many modern number-theoretic schemes in terms
of both key size and the computational efficiency of encryption and
decryption \cite{BB13}. For these reasons, a number of lattice-based
cryptography standards are being developed or have been developed
recently by both the IEEE and the financial industry (\cite{IEEE_LS},
\cite{ANSI}).

Lattice-based cryptography provides cryptographers with a wide variety
of tools to create many different cryptographic constructions. We
here reference some of these constructions as it is possible that
some of them could be applied to the MIMO decoding problem or even
that the MIMO decoding construction could inspire entirely new cryptographic
constructions. Importantly, the $\mathtt{LWE}$ problem can be extended
to cyclotomic rings (\cite{LPR}), which can improve the efficiency
(in terms of key size and complexity of the encryption and decryption
operations) of cryptosystems based on $\mathtt{LWE}$. In \cite{R05},
a system that is secure against Chosen-Plaintext Attack (CPA-secure)
is presented and in \cite{P09} a Chosen-Ciphertext Attack secure
(CCA-secure) system is presented. In addition, the $\mathtt{LWE}$
problem has been used to create a wide range of useful cryptographic
primitives such as identity-based encryption (\cite{BGB05}), oblivious
transfer (\cite{PVW08}), zero-knowledge systems (\cite{MV03}), and
pseudorandom functions (\cite{BPR11}). The $\mathtt{LWE}$ problem
was also used to construct a fully homomorphic cryptosystem \cite{FHE},
solving a long-standing open problem in cryptography.

\section{Main Theorem}

\begin{figure}
\begin{centering}
\includegraphics[width=1\columnwidth]{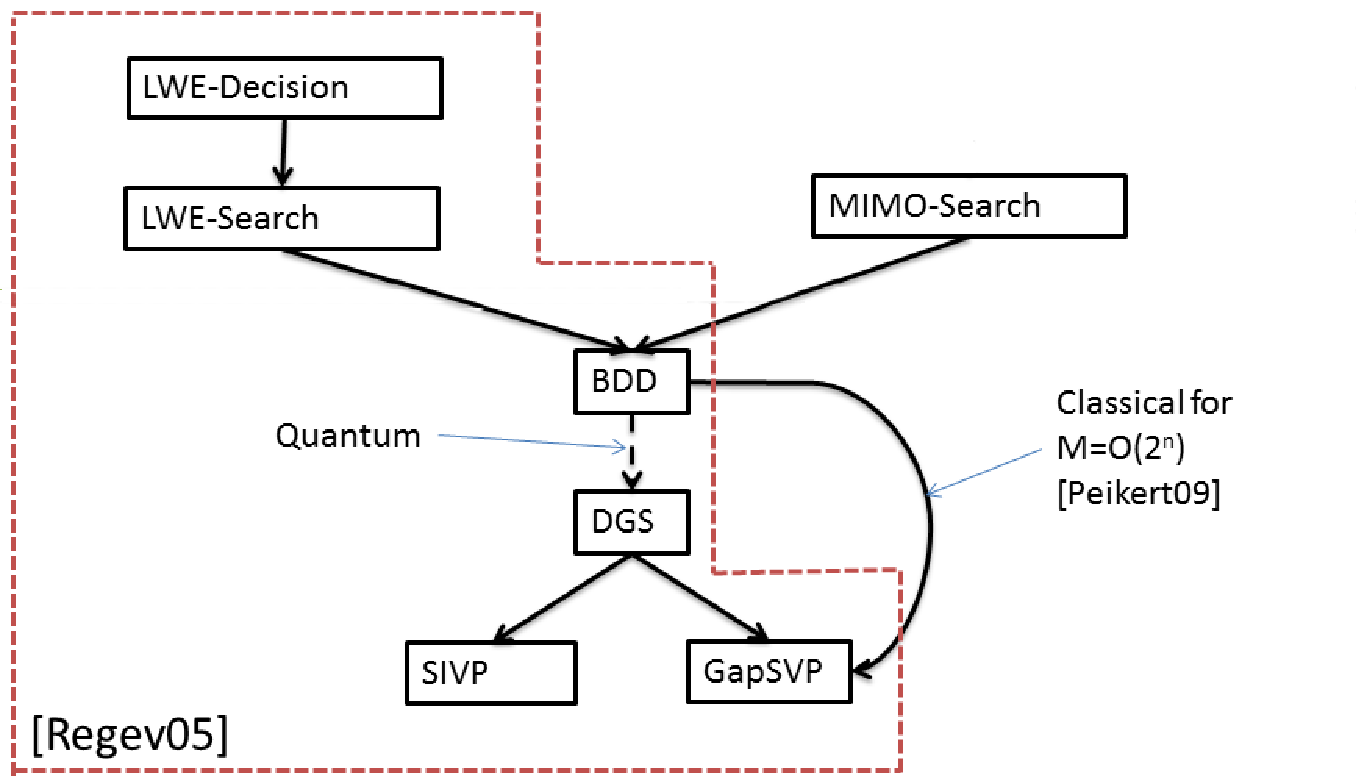} 
\par\end{centering}
\caption{A map of reductions relating MIMO decoding (the MIMO-search problem)
to solving standard lattice problems and the LWE problem. If there
exists an efficient algorithm to solve the MIMO-Search problem, then
this implies solutions to standard lattice problems. Since lattice
problems are conjectured to be hard, this conjecture follows for the
hardness of the MIMO-search problem. In this figure, $M$ refers to
the constellation size used in the MIMO system.}
\end{figure}

In this section we state our main theorem regarding the computational
hardness of $\mathtt{\mbox{\ensuremath{\mathtt{MIMO-Search}}}}_{M,\alpha,k}$.
We show that given an efficient algorithm which can solve this problem,
there exist efficient solutions to the problems of $\mathtt{GapSVP}_{n/\alpha}$
and $\mathtt{SIVP}_{n/\alpha}$. The reduction is based off the work
of \cite{R05} and involves a step using a quantum computer. In \cite{P09}
this step is removed, but the hardness is only related to the $\mathtt{GapSVP}$
problem. The relation between the reductions used to show the hardness
of MIMO decoding and the reductions used to show the hardness of $\mathtt{LWE}$
is shown in Figure 3. Examples of parameters which meet the requirements
stated in Theorem 1 are shown in Table I. The proof of this theorem
is found in Appendix A.

\noindent \textbf{Theorem 1.} \emph{$\mathtt{MIMO-Search}_{M,\alpha,k}$
to $\mathtt{GapSVP}_{n/\alpha}$ and $\mathtt{SIVP}_{n/\alpha}$.
Let} $m>0$, $\alpha\in\mathbb{R}$, $k\in\mathbb{R}$,\emph{ $M>m\,2^{n\log\log n/\log n}$,
be such that $m\alpha/k^{2}>\sqrt{n}$. Assume we have an efficient
algorithm that solves $\mathtt{\mbox{\ensuremath{\mathtt{MIMO-Search}}}}_{M,\alpha,k}$,
given a polynomial number of samples from $A_{x,\alpha,k}$. Then
there exists an efficient quantum algorithm that, given an $n$-dimensional
lattice $\mathcal{L}(\mathbf{A})$, solves the problems $\mathtt{GapSVP}_{n/\alpha}$
and $\mathtt{SIVP}_{n/\alpha}$. Hence, since $\mathtt{GapSVP}_{n/\alpha}$
and $\mathtt{SIVP}_{n/\alpha}$ are conjectured to be hard, it is
also conjectured that MIMO-Search is hard.}

An outline of the reductions is shown in Figure 3. Figure 3 also relates
the work of Regev and Peikert to our work. The steps used to prove
the theorem are outlined as follows: 
\begin{itemize}
\item We first show that, given the MIMO oracle as described in Section
II, we can solve problems where the coefficients of the channel gain
matrix are instead drawn from a discrete Gaussian distribution as
described in Lemma 1 in Appendix A. 
\item We begin by reducing a lattice basis \textbf{$\mathbf{A}$} by using
the Lensta-Lensta-Lovász (LLL) lattice-basis reduction algorithm.
We then, using the procedure described in \cite{GPV08}, create a
discrete Gaussian distribution on this lattice, with a second moment
around the length of the largest vector given in the reduced basis.
We use this as the starting point for the iterative portion of the
algorithm. 
\item The main step in the proof is given in Lemma 7, where we use this
MIMO decoding oracle to solve the $\mathtt{BDD}$ problem given access
to a $\mathtt{DGS}$ oracle. This allows us to directly use the results
from Regev \cite{R05} and Peikert \cite{P09}. 
\item In Lemma 8, from \cite{R05}, the $\mathtt{BDD}$ oracle is used to
(quantumly) solve $\mathtt{DGS}_{L^{*},\sqrt{n}/\left(\sqrt{2}d\right)}$,
that is return samples of $D_{L^{*},r}$. Note that we can efficiently
sample from $D_{L,r}$ for $r>\eta_{\epsilon}(L)$. If in Lemma 7
we set parameters so that $\sqrt{2}d>\sqrt{n}$, then we can reduce
the value of $r$ to below the value for which we could previously
efficiently sample, that is we can construct a distribution that is
more narrow than previously possible.
\item The $\mathtt{BDD}$ and $\mathtt{DGS}$ oracles can now be applied
iteratively, shrinking the second moment of the discrete Gaussian
distribution with each iteration. Eventually, the distribution becomes
narrow enough to reveal information about the shortest vectors of
the lattice, thereby solving the \emph{$\mathtt{GapSVP}_{n/\alpha}$
}and\emph{ $\mathtt{SIVP}_{n/\alpha}$ }problems\emph{.}
\end{itemize}
\begin{table}
\begin{centering}
\caption{Maximum SNRs}
\begin{tabular}{|c|c|c|}
\hline 
$n$ & $\log_{2}M$ & SNR (dB)\tabularnewline
\hline 
80  & 33.7 & 87.1\tabularnewline
\hline 
128  & 51.3 & 139.2 \tabularnewline
\hline 
196  & 75.4 & 210.7 \tabularnewline
\hline 
256 & 96 & 272.2\tabularnewline
\hline 
\end{tabular}
\par\end{centering}
\smallskip{}

Example sets of parameters for various numbers of transmit antennas.
In order for the security proofs contained in this paper, the system
must meet these minimum constellation size and maximum SNR values.
Here we set $m=1$.
\end{table}

\section{Security and the MIMO Wiretap Channel}

\noindent In the previous sections we have shown that the MIMO wiretap
model is secure in the sense that the decoding complexity of any eavesdropper
is exponential in the number of transmit antennas. In this section,
we relate this notion of security for our system to other notions
of security used by information theorists and cryptographers. 

Until this section, we have referred to the third user on Alice and
Bob's public channel to be an eavesdropper. This implies that the
eavesdropper is passive and has no ability but to receive and decode
information. In practice, this model is limited and, in order to provide
more practical and robust notions of security, a more powerful adversarial
model should be considered. The need for using more robust security
models in the study of physical-layer security was recently discussed
in \cite{Trappe}. In this work the author suggests bridging the gap
between notions of security in information theory and cryptography
in order to make physical-layer schemes more widely accepted by the
security community. Considering an adversary which, for example, has
the ability to manipulate, inject, alter or duplicate information
is considered essential when a cryptographer develops and designs
a cryptosystem, but is rarely, if ever, considered in information-theoretic
settings. In this section, we discuss stronger adversarial models
which are commonly considered in cryptography and show how to use
the hardness result derived in Section IV in order to construct secure
schemes under these models. We note that there are many other adversarial
models studied in the field of cryptography which we do not discuss
here. A more complete discussion of these models is provided in \cite{KL07}.

In this section, we present two ways in which Alice and Bob can securely
use our result. In the first scheme, we show how Alice may securely
transmit a message to Bob. This scheme achieves both the notion of
Chosen Plaintext Security (also known as Semantic Security) as well
as Chosen Ciphertext Security. However, the downside of this scheme
is that it relies on additional cryptographic primitives and assumptions:
this scheme is secure in what cryptographers call the Random Oracle
Model, which we describe in Section V.B. In Section V.C, we present
a simple scheme which allows Alice and Bob to agree on a secret key
which, in order to be practical, must take on additional assumptions
about Eve's ability to approximately decode. Constructing a cryptosystem
which is provably secure using only the assumptions listed in Section
II is not readily apparent. We believe that the schemes proposed are
practical and that applying these notions of security in the context
of physical-layer security is not only novel, but makes substantial
progress in bridging ideas from information-theoretic security and
cryptography, as discussed in \cite{Trappe}. 

\subsection{Information-Theoretic Security}

We have shown that it is hard for Eve to decode Alice's message. However,
we defined decoding as exactly recovering the message without error.
This is not the same as stating that, information-theoretically, Eve
receives no information about the message. Consider that Eve applies
some suboptimal decoder that runs in polynomial time (for example,
successive interference cancellation or zero-forcing decoding). Eve
would then get some estimate for Alice's message with some bit error
rate that is potentially less than half. In some sense, this is insecure
as it implies that Eve in fact receives information about Alice's
message. In this section, we provide a means to use our system in
a manner that is in fact secure under cryptographic notions of security
rather than information-theoretic ones.

Under the widely accepted conjecture that lattice problems cannot
efficiently be approximated to even subexponential factors, Eve can
at best recover a constellation point that is exponentially far away
from Alice's transmitted vector. Effectively, this means that for
a single channel use, Eve can reduce the number of possible transmitted
vectors from $M^{n}$ to $2^{O(n)}$. 

We could use this result in an information-theoretic sense along with
our hardness result to say that there is now a positive secrecy rate
in the wiretap channel, which could be accomplished through either
a key rate, coding, or using a transmission strategy that takes into
account the channel characteristics of Bob and/or Eve (e.g. puts a
null in the direction of Eve). While this approach would extend existing
results on information-theoretic secrecy, it is different from our
approach in two main aspects, as we now describe.

First, assume that the average received SNR at Bob's location is identical
to the average received SNR at Eve's location and, at both locations,
the sufficient conditions of our hardness result holds. Since the
channel gain matrices between Alice and Bob and Alice and Eve have
the same distribution, on average they have the same number of spatial
streams that can be supported and at the same SNR. In other words,
the mutual information between the channel input and the output at
Bob and at Eve is the same, unless the input is encoded in a manner
that allows Bob to decode whereas Eve cannot. This will reduce the
mutual information between Alice and Eve, possibly to zero. On the
other hand, our approach shows that there is (likely) no efficient
algorithm that allows Eve to decode, and thus, \emph{practically},
if Eve is subject to complexity constraints she will receive less
information than Bob. To our knowledge, this notion of information\emph{
}exchange\emph{,} based on the conjectured complexity of decoding,
is novel.

Second, relying on the information theoretic secrecy argument does
not provide a constructive result. By this we mean that it merely
shows the existence of a scheme that allows for Alice and Bob to reliably
and securely communicate, rather than actually providing such a scheme,
as is the case for most information-theoretic results. In contrast,
the cryptographic notion of security that we use in fact gives a practical
scheme for secure communication.

\subsection{Secure Message Transmission in the Random Oracle Model}

We have shown that it is computationally infeasible for Eve to exactly
recover Alice's transmitted message, but Bob can decode in polynomial
time. This implies that (assuming lattice problems are hard) the MIMO
channel naturally forms what cryptographers call a secure one-way
trapdoor function\footnote{A one-way function is a function which can be evaluated in polynomial
time, but requires exponential complexity to invert. Such functions
are conjectured to exist, but proving their existence is an open problem
and would prove $\mathtt{{P}}\neq\mathtt{{NP}}$.}. In this subsection, we construct a secure cryptosystem by using
a standard cryptographic construction, known as OAEP+ (Optimal Asymmetric
Encryption Padding). OAEP+ was introduced in \cite{OAEPP} as an improvement
to the original OAEP scheme given in \cite{OAEP}. The results of
\cite{OAEPP} allow for the construction of a secure cryptographic
system given any one-way permutation. The results imply that, in the
random oracle model, the resulting system has the properties of Chosen
Ciphertext (CCA) Security and Chosen Plaintext (CPA) Security. Precisely,
the security of our system follows from the following theorem, which
is proven in \cite{OAEPP} for any one way permutation. The fact that
the MIMO-Search problem is one way, under the assumption that lattice
problems are hard, is given by Theorem 2\footnote{The MIMO channel maps messages from bits to real values and thus is
a function and not a permutation. However, the MIMO search problem
asks Eve to decode a MIMO signal back to bits, and thus we can view
the process of transmitting, detecting and decoding as a permutation
of the message bits.}. Theorem 2 gives strong evidence that at least one bit (and likely
more) will remain computationally hidden from Eve, and thus Eve will
learn no information about the transmitted message.

\setcounter{thm}{1}
\begin{thm}
\cite[Thm. 3]{SecretKeyAgreement} If the underlying trapdoor permutation
is one way, then OAEP+ is secure against adaptive chosen ciphertext
attack in the random oracle model.
\end{thm}
We now give an informal treatment of CPA and CCA Security, as well
as the random oracle model. For a formal treatment, see \cite{KL07}
or \cite{BellEq}. A reader familiar with these notions may wish to
skip ahead to Algorithm 1. 

Succinctly, the condition necessary to achieve security under Chosen
Plaintext Attack (CPA security \textendash{} here we describe the
notion referred to as \texttt{IND-CPA}) is that an adversary must
be unable (without violating our computational hardness assumptions)
to match up pairs of plaintext (unencrypted data) and ciphertext (encrypted
data). This notion of CPA security was formally introduced in \cite{MG84}.
There are many reasons why such a notion of security is needed. For
example, consider that an adversary knows that a user communicates
with only a known small subset of possible messages. If the adversary
can match plaintext and ciphertext, then the adversary can exhaustively
search over the possible transmitted messages and learn the message. 

We illustrate the value of this notion of security with an example
from our model in Figure 1. Sending non-random messages, or messages
from a source with a small entropy, could be problematic from a security
perspective. Consider, for example, that Alice is casting a vote in
an election which she sends to the election official, Bob. In the
election, there are a small number, $v$, of candidates. Alice votes
by sending some message $\mathbf{x}_{i}$, for $i\in[0,v)$, which
is predefined, and publicly known, for each candidate. Eve could receive
$\mathbf{y}=\mathbf{B}\mathbf{x}_{i}+\mathbf{e}$, and compute the
$l_{2}$ norm between $\mathbf{y}$ and each possible $\left\{ \mathbf{Bx}_{0},\ldots,\mathbf{Bx}_{v-1}\right\} $
and tell with some certainty which candidate received Alice's vote.
If instead Alice transmits her vote using a scheme that is CPA secure,
receiving the vector $\mathbf{y}$ will convey no information about
Alice's vote to Eve.

Security under Chosen Ciphertext Attack (CCA) is a stronger notion
than CPA security (and in fact CCA security, as described here, implies
CPA security, see \cite{BellEq}). Under the CCA adversarial model,
an adversary first gets access to a decryption oracle (but not the
secret key) and is allowed to present his choice of ciphertext to
the oracle and learn the corresponding plaintext. Then after loosing
access to this oracle, the adversary is then challenged to match pairs
of plaintext and ciphertext. This is also referred to as a ``lunchtime'',
non-adaptive CCA security, or \texttt{IND-CCA1}. In a stronger notion
of CCA security (known as adaptive CCA or \texttt{IND-CCA2}), the
adversary can continue to submit queries to the oracle after being
presented with the challenge plaintext and ciphertext pairs, under
the condition that he does not submit the challenge ciphertext to
the oracle. 

The necessity for CCA security in the context of our system is not
as readily apparent as it is in a traditional keyed cryptosystem.
Suppose, for example, a device exists which decrypts ciphertext, but
from which a party cannot recover the secret key. If an adversary
is given access to such a device, then a CCA secure cryptosystem is
still secure given that this access is temporary. In our system model
however, such a device cannot exist: the ability for Bob to decode
Alice's messages depends on both Alice and Bob to be at certain spatial
locations and not on the possession of a key. In fact, when Eve receives
a message through the wiretap channel, she receives the vector $\mathbf{BVx+e}$,
which was precoded for Bob's channel. There is no way for \emph{anyone}
to efficiently decode this vector. Nonetheless, our system is still
secure if such an oracle were to exist, as adaptive CCA security (and
hence CPA security) follows from using the OAEP+ construction.

The Random Oracle model was introduced by \cite{BellRO}. The model
assumes that parties have access to an oracle which choses uniformly
at random a function which maps each possible input query into its
output domain. When implemented in a practical scheme, such an oracle
can be replaced by a cryptographic hash function, such as the SHA
family of hashes. For many cryptographic hash functions, proving certain
security properties is a difficult task. The Random Oracle model provides
cryptographers an idealized model of a hash function to make possible
rigorous analysis of cryptographic constructions that include such
hash functions. Constructions which are provably secure under the
Random Oracle model have been standardized and their use is extremely
common.

\begin{algorithm}[h]
\caption{MIMO-OAEP+}

Let $K=n\log M$ be the number of bits transmitted per MIMO channel
use. Alice wishes to send Bob a message, \textbf{$\mathbf{m}$}, that
is $\eta=K-2n$ bits. Assume Alice and Bob both have access to three
random oracle functions: $G:\left\{ 0,1\right\} ^{n}\rightarrow\left\{ 0,1\right\} ^{\eta},\,H':\left\{ 0,1\right\} ^{n+\eta}\rightarrow\left\{ 0,1\right\} ^{n},\,H:\left\{ 0,1\right\} ^{n+\eta}\rightarrow\left\{ 0,1\right\} ^{n}$.
Alice computes $s\in\left\{ 0,1\right\} ^{n+\eta},\:t\in\left\{ 0,1\right\} ^{n},\,x\in\left\{ 0,1\right\} ^{K}$
as shown under \texttt{Encrypt} below. Alice then multiplies $\mathbf{x}$
by the right singular vectors of Bob's channel and transmits the message
to Bob. Bob recovers $\mathbf{x}$ through his channel and then recovers
$\mathbf{m}$ using the proceedure shown under \texttt{Decrypt}. Bob
verifies that $c=H'(r\|m)$. If these quantities are not equal then
Bob has not properly received Alice's message and he rejects.

\begin{align*}
    &\omit\texttt{Encrypt}                        &   & \omit\texttt{Decrypt} \\    
s   &= \left(G(r) \oplus m \right) \| H'(r \| m)   & s &= x[0,\ldots,\eta+n-1]         \\
t   &= H(s) \oplus r                               & t &= x[\eta+n,\ldots,k]           \\
x   &= s \| t									  & r &= H(s) \oplus t               \\
	&											  & m &= G(r) \oplus s[0,\ldots,\eta-1] \\
    &                                              & c &= s[\eta\ldots, \eta+n - 1]       \\
    &                                              & c &\stackrel{?}{=} H'(r\|m)      
\end{align*}
\end{algorithm}

The proposed scheme is given in Algorithm 1, which simply encodes
a message using the OAEP+ construction given in \cite{OAEPP} before
transmitting the message over the MIMO channel. As a result of this
encoding, the decoder must receive the message $\mathbf{x}$ without
error in order to decode. If the decoder has even a single error,
then the resulting output will be computationally indistinguishable
from uniform random by Theorem 2 and the one-wayness of the MIMO channel. 

We note that this algorithm requires Bob to receive a perfect copy
of the message $\mathbf{x}$ through the MIMO channel. However, Bob
receives this message through a channel with Gaussian noise and thus
has some probability of error that he does not receive this message
perfectly. In fact, given the large noise requirement in Theorem 2,
it can be shown that the symbol error rate at Bob's location will
be close to one, even when Bob has a large number of antennas.

In order to allow Alice and Bob to communicate, we must reduce the
rate of communication between them; however, in order to allow for
Theorem 2 to hold, we must still maintain the large constellation
size and noise requirement. Both of these objectives can be accomplished
through the use of error correcting codes. Note, however, that if
we encode the message $\mathbf{x}$ with a code that can correct up
to $\mathbf{e}$ errors, then this also aids Eve in recovering the
plaintext. Because Eve cannot apply ML decoding, she will experience
an increased error rate over Bob. We can use this fact to construct
a code which Bob can decode but Eve cannot.

\subsubsection{Eavesdropper Decoding}

We now consider the limits to which Eve can estimate the message $\mathbf{x}$
which is not precoded for her channel. The analysis in this section
describes how much Eve's estimate of $\mathbf{x}$ differs from its
maximum-likelihood, which we denote as $\hat{\mathbf{x}}$. We obtain
a lower bound on this distortion by considering state-of-the-art algorithms
which are used to attack lattice-based cryptography schemes. We begin
by stating a conjecture about the limits of approximating lattice
algorithms. This conjecture is widely accepted to be true (see \cite{MR09},
for example) and discussed in more detail in Appendix B:
\begin{conjecture}
There is no polynomial time algorithm which can approximate $\mathtt{SIVP}$
to within polynomial factors.
\end{conjecture}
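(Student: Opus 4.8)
The plan is necessarily different in kind from the reductions elsewhere in this paper, because the statement is a worst-case \emph{hardness} conjecture, not a theorem: no unconditional proof is known, and an unconditional proof would require super-polynomial time lower bounds against general polynomial-time algorithms, which is far beyond the reach of current complexity theory. What I would actually write --- and what Appendix~B is meant to support --- is therefore a justification: a survey of the algorithmic and complexity-theoretic evidence, together with an explicit acknowledgement that the downstream constructions of Section~V inherit this conditional status, exactly as is standard for $\mathtt{LWE}$- and SIS-based cryptosystems.

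First I would record the algorithmic state of the art. The LLL algorithm runs in polynomial time but attains only an approximation factor exponential in $n$; Schnorr's block hierarchy and its refinements improve this, and the best factor known to be achievable in polynomial time is $2^{O(n\log\log n/\log n)}$ --- precisely the exponent appearing in our constellation-size requirement $M > m\,2^{n\log\log n/\log n}$, which is no accident. Driving the factor down toward $\mathrm{poly}(n)$ is known only at super-polynomial (and, for small enough factors, exponential) cost via BKZ-type block reduction, while exact $\mathtt{SIVP}$ costs $2^{\Theta(n)}$ by sieving and Voronoi-cell methods. After three decades no algorithm is known that closes this gap --- a polynomial-time algorithm achieving a polynomial approximation factor --- and, crucially for the claims of this paper about quantum adversaries, the best known quantum algorithms for these lattice problems improve only the constant in the exponent, which is exactly why lattice assumptions are regarded as post-quantum.

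Next I would assemble the complexity-theoretic evidence. For small factors, $\mathtt{GapSVP}$ and the search shortest vector problem are $\mathrm{NP}$-hard under randomized reductions (Ajtai's original hardness, Micciancio's $\sqrt{2}$ bound, Khot's any-constant bound, the almost-polynomial factors $n^{O(1/\log\log n)}$ of Haviv--Regev; see \cite{MG02}), which already rules out an \emph{exact} polynomial-time algorithm unless $\mathrm{NP}\subseteq\mathrm{RP}$. In the opposite direction, Ajtai's worst-case-to-average-case reduction \cite{A96} and its descendants --- including the $\mathtt{LWE}$ reduction of \cite{R05} used in the proof of Theorem~1 --- show that an efficient algorithm approximating $\mathtt{SIVP}$ to within $\mathrm{poly}(n)$ would \emph{simultaneously} break essentially every lattice-based cryptosystem, and the continued absence of such an attack under sustained scrutiny is the main empirical support (see the survey \cite{MR09}). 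The conjecture thus asserts that the true threshold lies strictly between the known upper bound --- polynomial time achieves only $\gamma = 2^{O(n\log\log n/\log n)}$ --- and the known $\mathrm{NP}$-hardness lower bound --- hardness is established only up to $\gamma = n^{O(1/\log\log n)}$ --- and the width of that gap is precisely what is open.

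The main obstacle is the one that obstructs all of complexity-based cryptography. For the approximation regime of interest, $\mathtt{GapSVP}_\gamma$ lies in $\mathrm{NP}\cap\mathrm{coNP}$ (and, near $\gamma\approx\sqrt{n/\log n}$, in $\mathrm{coAM}$), so it cannot be $\mathrm{NP}$-hard unless the polynomial hierarchy collapses; this blocks the standard route to a hardness proof, and unconditional super-polynomial time lower bounds against arbitrary algorithms are simply not available with present techniques. Consequently the honest ``proof proposal'' is to adopt the statement as a named assumption, cite the evidence above (developed in Appendix~B) as its justification, and state plainly in Section~V that the security of the secret-key transmission protocol --- and of any indistinguishability- or semantic-security-style construction built on the decision problem --- rests on this conjecture in addition to the unconditional reduction of Theorem~1.
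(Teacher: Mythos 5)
You have correctly identified that this statement is a named assumption rather than a theorem: the paper offers no proof of it and instead justifies it in Appendix~B with exactly the evidence you assemble (LLL/Schnorr achieving only $2^{O(n\log\log n/\log n)}$ factors in polynomial time, exponential cost of exact and small-factor algorithms, NP-hardness for constant factors versus NP$\cap$coNP containment for $\sqrt{n}\le\gamma\le n$, and the worst-case-to-average-case connection), while flagging in Section~V that the downstream key-agreement scheme is conditional on it. Your treatment matches the paper's in substance and is, if anything, more carefully argued.
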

Alice transmits a message $\mathbf{x}\in[0,M)^{n}$, which is distributed
uniformly, across her channel. Eve receives this message through her
channel and can apply her favorite lattice-basis reduction algorithm
to decoding the message, but is unable to obtain a lattice basis that
is within a polynomial factor of the shortest basis. She can use this
shorter basis along with Babai's nearest plane algorithm \cite{Babai}
to recover a vector, $\mathbf{x}'$, that is a super-polynomial distance
away from $\hat{\mathbf{x}}$, that is $\|\mathbf{x'}-\hat{\mathbf{x}}\|>2^{\omega(\log n)}$.
The fact that Eve can recover this vector $\mathbf{x}'$ implies that
Eve can in fact learn a non-negligible amount of information about
Alice's message. 

In order to more precisely determine the length of the basis which
Eve can practically find, one can consider the current state-of-the-art
for lattice basis reduction algorithms, For a study on limits of lattice
basis reduction algorithms, see \cite{MiccW}, for a more general
overview see \cite{MR09}. In \cite{MiccW}, the authors suggest that
the closest Eve can decode relative to the original message (given
a realistic amount of computational resources) is bounded by $\|\mathbf{x'}-\hat{\mathbf{x}}\|>2^{2\sqrt{n\log M\log1.005}}$.
According to \cite{MiccW}, this is an extremely conservative estimate,
and we claim that it would require either an infeasible amount of
computational resources or a substantial advancement in lattice-basis
reduction algorithms for Eve to learn more about the original message
than this bound indicates.

In order to prevent Eve from being able to recover a coded version
of the message $\mathbf{x}$, we apply a code which has a minimum
distance $d_{min}\leq2^{2\sqrt{n\log M\log1.005}}$. With high probability,
Eve will be unable to recover an error-free estimate of $\mathbf{x}$,
and will be unable learn any information about the message $\mathbf{m}$
sent using Algorithm 1.

\subsubsection{Correctness}

We now show that, if Alice encodes the output of Algorithm 1 with
an error-correcting code of $d_{min}=2^{2\sqrt{n\log M\log1.005}}$,
then given a sufficient number of recieve antennas, Bob will be able
to receive and decode the vector $\mathbf{x}$ without error, thus
allowing him to recover the message $\mathbf{m}$ from Algorithm 1.
We proceed by choosing the parameters $M=2^{n\log\log n/\log n},m=1,\alpha=1$,
and $k=1$, but this analysis is easily generalized to other parameters
which meet our security condition.

Assume that Bob has $n_{r}=t\cdot n$ receiever antennas for some
$t>2$. With probability at most $2e^{-n/2}$, the smallest singular
value in the channel between Alice and Bob will be at least $\sqrt{(t-2)n}$
(see for example \textbackslash{}cite\{rudelson\}, Eq. 2.3). This
implies that the noise variance in each of the decomposed channels
between Alice and Bob will decrease by at least of a factor of $\sqrt{(t-2)n}$
compared to the ambient channel noise. For Bob to be able to decode,
we now require the following condition to hold with some reasonable
probability:

\begin{equation}
\frac{M\alpha}{\sqrt{(t-2)n}}<2^{2\sqrt{n\log M\log1.005}},
\end{equation}

where the left-hand side is an upper bound of the noise variance in
each of Alice and Bob's parallel channels, and the right-hand side
is the minimum distance of the error correcting code employed by Alice.
Substituting the values of our constants, this gives us:

\begin{equation}
t>\frac{2.86}{n}
\end{equation}
and thus, for any $t>2$, the probabily of having a symbol error when
Bob decodes is bounded below by:

\begin{equation}
\mathrm{erfc}\left(\frac{n_{r}}{2.86}\right)=\mathtt{negl}(n),
\end{equation}
where $\mathrm{erfc}(\cdot)$ represents the complimentary error function.
We note that the condition that $t>2$ is only neccessary so that,
with overhelming probability, there will be no ill conditioned channels
in the parallel decomposition between Alice and Bob. For large $n$,
the probability of error will remain small for values of $t$ close
to 1.

\subsection{Secret Key Agreement and Computational Secrecy Capacity}

We now proceed to describe a secret key transmission protocol. In
the context of our model in Figure 1, such a protocol allows Alice
to transmit a ``secret'' to Bob in the form of bits undecodable by
Eve. Information-theoretic protocols which achieve this secret key
transmission based on equivocation at the eavesdropper are known,
for example \cite{SecretKeyAgreement}. Here we present a protocol
in which the key is kept secret under computational rather than information-theoretic
assumptions. We call the number of bits Alice can securely transmit
to Bob per channel use under computational assumptions the \emph{Computational
Secrecy Capacity}. 

Alice transmits a message $m\in[0,M)^{n}$ chosen uniformly at random
through her channel. This message has $n\log M$ bits of entropy.
By similar reasoning to the previous subsection, Eve receives this
message through her channel and recovers a vector, $m'$, such that
$\|m-m'\|>2^{2\sqrt{n\log M\log1.005}}$. The fact that Eve can recover
this vector $m'$ implies that Eve can in fact learn a non-negligible
amount of information about Alice's message. Here we argue that, in
the context of our proposed schemes, this information is in fact useless
unless Eve can violate our computational assumptions.

Achieving this bound in terms of Eve's ability to decode the transmitted
message implies that the entropy associated with Eve's estimate of
$m$ is at least $2\sqrt{n\log M\log1.005}$. By applying the leftover
hash lemma \cite{ILL89}, it follows that Alice can transmit to Bob
$2\sqrt{n\log M\log1.005}$ bits of information per message without
Eve learning any (non-negligible) amount of information about the
message. Thus, the security of the bits transmitted via the scheme
described in Algorithm 2 immediately holds. The quantity $2\sqrt{n\log M\log1.005}$
is the Computational Secrecy Capacity of our model in Figure 1. Table
II gives examples of the computational secrecy capacity for various
parameters.

Algorithm 2 requires the use of a universal hash function, which we
briefly define as a function which takes an arbitrary-sized input
and returns a fixed-size output (the hash value). A universal hash
is constructed so that there is a low expectation that two distinct
inputs chosen adversarially hash to the same value (collide). The
leftover hash lemma \cite{ILL89} essentially states that using a
universal hash function allows us to extract randomness from a source
from which an adversary has partial information, such that the adversary
is left almost no information about the output of the hash function.

\begin{algorithm}[h]
\caption{Key-Agreement Scheme}

Alice wishes to send Bob $\eta$ secret bits. Alice generates some
number $c=c(n)$, such that $2c\,\sqrt{n\log M\log1.005}>\eta$, of
random messages $m\in[0,M)^{n}$ and sends them to Bob over the MIMO
channel with channel parameters meeting the constraints in Theorem
1. Alice and Bob ensure that the message is exchanged without error
for example through channel coding. Alice and Bob then hash the message
(after decoding if channel coding is used), using a universal hash
which outputs $\eta$ bits and use the result as their secret. 
\end{algorithm}

We note that if Alice and Bob use channel coding to ensure for reliable
communications, then the rate of the coding scheme employed would
also effect the number of bits Alice and Bob could securely exchange
per channel use. That is, if Alice and Bob used a scheme that could
correct up to $e$ bit errors, then the bound on $\eta$ must be reduced
by $e$ bits. Similarly, this quantity, $2\sqrt{n\log M\log1.005}$,
also serves maximum number of bit errors that can be allowed to be
corrected in any code chosen between Alice and Bob in the scheme given
in Section V.B. 

\begin{table}
\caption{Computational Secrecy Capacity}

\begin{centering}
\begin{tabular}{|c|c|c|c|}
\hline 
$n$ & $\log_{2}M$ & SNR (dB) & Computational Secrecy Capacity\tabularnewline
\hline 
80  & 33.7 & 87.1 & 8.80\tabularnewline
\hline 
128  & 51.3 & 139.2  & 13.75\tabularnewline
\hline 
196  & 75.4 & 210.7  & 20.62\tabularnewline
\hline 
256 & 96 & 272.2 & 26.60\tabularnewline
\hline 
\end{tabular}
\par\end{centering}
\smallskip{}

For unit channel gain variance, the minimum constellation size, and
the minimum SNR that meets the noise requirements for the hardness
condition to hold. Computational Secrecy Capacity gives the number
of bits Alice can securely transmit to Bob per channel use using Algorithm
1.
\end{table}

\subsection{Computational Attacks}

We wish to briefly note the importance of maintaining the security
parameters as stated in Theorem 1. When the minimum noise requirement
is not met, it is possible that attacks follow on our system. By attack,
we mean that an adversary could, by applying a sub-exponential algorithm
in order to decode. As an example, in \cite{SS15}, the authors show
an attack on a version of our system with security parameters not
meeting those defined in Theorem 1. Finally, we note that it is not
apparent that an attack follows immediately for smaller values of
$M$ than required in Theorem 1, but we leave it as an open question
as to whether or not it can be shown that Eve can successfully decode
with non-exponential complexity for smaller values of M, or if a smaller
bound on the requirement for $M$ can be found that still entails
decoding to be hard for Eve.

\section{Conclusion}

We have demonstrated that the complexity of an eavesdropper decoding
a large-scale MIMO systems with M-PAM modulation can be related to
solving certain lattice problems which are widely conjectured to be
hard. This suggests that the complexity of solving these problems
grows exponentially with the number of transmitter antennas.  Unlike
the computationally hard problems underlying many of the most common
encryption methods used today, such as RSA and Diffie-Hellman, it
is believed that the underlying lattice problems are hard to solve
using a quantum computer, and thus this scheme presents a practical
solution to post-quantum cryptography.

It is not new to exploit properties of a communication channel to
achieve security; however, to our knowledge, this is the first scheme
which uses physical properties of the channel to achieve security
based on computational complexity arguments. Indeed, the notion of
the channel is not typically considered by cryptographers. We thus
describe our system as a way of achieving physical-layer cryptography.

Further novel to our scheme is the role that the channel gain matrix
plays in decoding. A transmitted message can only be decoded by a
user with the corresponding channel gain matrix. The channel gain
matrix, or more specifically the precoding of the message using the
right-singular vectors of the channel gain matrix, essentially plays
the role of a secret key in that it allows for efficient decoding
at the receiver. However, this value does not need to be kept secret,
nor does it play the traditional role of a public key. We term this
type of key as the Channel State Information- or CSI-key. In cryptography
terminology, this system is a trapdoor function, for which the trapdoor
varies both spatially and temporally. The fact that this is a new
type of cryptographic primitive suggests the possibility of entirely
new cryptographic constructions.

We have used the hardness result, in conjunction with a new notion
of computational secrecy capacity, to construct a method in which
two users can perform a key-agreement scheme, without a pre-shared
secret. In addition, we give a scheme that allows Alice and Bob to
securely communicate in the presence of an eavesdropper. We relate
the parameters required to maintain security to SNR requirements and
constellation size and show that they are practical to achieve assuming
a system with enough transmitter antennas and the corresponding number
of receivers, and relatively large constellation sizes.

\appendices{}

\setcounter{thm}{0}

\setcounter{conjecture}{0}

\appendices

\section{Proof of Main Theorem}
\begin{thm}
$\mathtt{MIMO-Search}_{M,\alpha,k}$ to $\mathtt{GapSVP}_{n/\alpha}$
and $\mathtt{SIVP}_{n/\alpha}$. Let $\alpha>0$, $m>0$ , $k>0$
be such that $m\alpha/k^{2}>\sqrt{n}$, and $M>m\,2^{n\log\log n/\log n}$.
Assume we have access to an oracle that solves $\mathtt{\mbox{\ensuremath{\mathtt{MIMO-Search}}}}_{M,\alpha,k}$,
given a polynomial number of samples from $A_{M,\alpha,k}$. Then
there exists an efficient quantum algorithm that given an $n$-dimensional
lattice $\mathcal{L}(\mathbf{A})$, solves the problems $\mathtt{GapSVP}_{n/\alpha}$
and $\mathtt{SIVP}_{n/\alpha}$. Additionally, there exists a classical
solution to $\mathtt{GapSVP}_{n/\alpha}$. 
\end{thm}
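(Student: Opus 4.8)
The plan is to mirror Regev's reduction for $\mathtt{LWE}$ \cite{R05}, adapted to the real-valued MIMO setting, and to chain together a sequence of lemmas culminating in an iterative step that shrinks a discrete Gaussian on the dual lattice until it reveals the short vectors. First I would reduce the continuous channel-gain distribution to a discrete one: given the $\mathtt{MIMO\text{-}Decision}$ oracle, I would show (this is the analogue of ``Lemma 1'') that one can also solve the variant in which the columns $\mathbf{a}_i$ of the channel matrix are drawn from a discrete Gaussian $D_{\mathcal{L}^*(\mathbf{A}),r}$ rather than from $\Psi_k$, for an appropriate $r$. This is the step that lets the lattice enter the picture at all: a sample $(\mathbf{a}_i, y_i)$ with $\mathbf{a}_i$ a dual-lattice point and $y_i = \langle \mathbf{a}_i, \mathbf{x}\rangle + e_i$ can be manufactured, via the standard ``round a $\mathtt{BDD}$ solution against a Gaussian-perturbed dual point'' trick, out of a single $\mathtt{BDD}$ instance on $\mathcal{L}(\mathbf{A})$ together with samples from $D_{\mathcal{L}^*(\mathbf{A}),r}$.

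Next I would set up the outer loop. Starting from an arbitrary basis, run LLL to get a reduced basis of $\mathcal{L}(\mathbf{A})$, and invoke Claim 1 (the $\mathtt{DGS}$ sampler of \cite{GPV08}) to produce samples from $D_{\mathcal{L}^*(\mathbf{A}),r_0}$ with $r_0$ of order the longest LLL basis vector, i.e. safely above the smoothing parameter. The core lemma (the ``Lemma 7'' referenced in the outline) is: given a $\mathtt{DGS}$ oracle producing samples from $D_{\mathcal{L}^*(\mathbf{A}),r}$ and given the (discrete-channel) $\mathtt{MIMO}$ oracle, one can solve $\mathtt{BDD}$ on $\mathcal{L}(\mathbf{A})$ at radius $d = \alpha/(\sqrt2\, r) \cdot (\text{small factor})$ — this is exactly where the constellation bound $M > m\,2^{n\log\log n/\log n}$ is consumed, since $M$ must be large enough that the $M$-ary modulation leaves the ``wrap-around'' ambiguity negligible while the effective noise $M\alpha$ stays within the regime where decoding is information-theoretically possible. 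Then ``Lemma 8'' (from \cite{R05}) uses that $\mathtt{BDD}$ oracle to quantumly sample $D_{\mathcal{L}^*(\mathbf{A}),\, \sqrt{n}/(\sqrt2\, d)}$, i.e. a \emph{narrower} Gaussian than the one we started with provided $\sqrt2\, d > \sqrt n$; tuning so that $m\alpha/k^2 > \sqrt n$ guarantees precisely this strict decrease. Iterating $\mathtt{BDD} \to \mathtt{DGS} \to \mathtt{BDD}$ a polynomial number of times drives $r$ down to $\sqrt2\cdot\eta_\epsilon(\mathcal{L}^*)$-scale, at which point standard arguments (counting short dual vectors, or Banaszczyk-type bounds) solve $\mathtt{GapSVP}_{n/\alpha}$ and $\mathtt{SIVP}_{n/\alpha}$. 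The classical-only conclusion for $\mathtt{GapSVP}$ follows by substituting Peikert's dequantized version of the $\mathtt{BDD}\to\mathtt{GapSVP}$ step \cite{P09}, which is available here because the effective modulus $M$ is exponentially large.

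Two points deserve care. The passage from $\mathtt{MIMO\text{-}Decision}$ to a usable $\mathtt{BDD}$ solver (rather than merely a distinguisher) requires the standard search-to-decision amplification: one recovers the transmitted $\mathbf{x}$ coordinate by coordinate using the decision oracle on suitably re-randomized samples, and for the real-valued / $M$-PAM setting one must check that re-randomization over $\mathbb{Z}_M$ (shifting $\mathbf{x}$ by known vectors and rescaling) keeps the sample distribution within negligible total-variational distance of $A_{M,\alpha,k}$ — this is where the distribution $R_{\alpha,\psi}$ and the second-moment calculation of Section II.3 are used to certify indistinguishability. The main obstacle, and the step I expect to be most delicate, is the quantitative bookkeeping in the core lemma: one must simultaneously (i) keep the $\mathtt{BDD}$ radius $d$ below $\lambda_1(\mathcal{L}(\mathbf{A}))/2$ so the closest vector is unique, (ii) keep the induced Gaussian parameter on $\mathcal{L}^*$ above its smoothing parameter so the folded distribution is statistically flat, and (iii) ensure $M$ is large enough to absorb the discretization error from mapping a continuous $\mathtt{BDD}$ instance onto the integer constellation $[0,M)^n$ yet not so large that the noise $M\alpha$ exceeds the decodable regime — and the choice $M > m\,2^{n\log\log n/\log n}$ is the sweet spot that makes all three compatible while matching the $\gamma = n/\alpha$ approximation factor in the worst-case lattice problems.
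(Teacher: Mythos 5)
Your proposal follows the same architecture as the paper's proof: a continuous-to-discrete sample conversion (the paper's Lemma 1), an initialization via LLL plus the sampler of \cite{GPV08}, a core lemma turning the MIMO oracle plus $D_{\mathcal{L}^*(\mathbf{A}),r}$ samples into a $\mathtt{BDD}$ solver (the paper's Lemma 7, with bounding distance $d<M\sigma\alpha/(k^{2}r\sqrt{2})$ where $\sigma$ is the smallest eigenvalue of $\mathbf{A}^{T}\mathbf{A}$ --- your $d=\alpha/(\sqrt{2}r)\cdot(\text{small factor})$ is this with the factor left implicit), Regev's quantum $\mathtt{BDD}\to\mathtt{DGS}$ step, the iteration that shrinks $r$, and Peikert's dequantization for the classical $\mathtt{GapSVP}$ claim. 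The verification that the Gaussian strictly narrows ($\sqrt{2}d>\sqrt{n}$) via $m\alpha/k^{2}>\sqrt{n}$ together with $M\sigma/(mr)>1$ is exactly the paper's Claim 7, and your identification of where the constellation bound is consumed is correct.

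The one step of yours that does not go through as written is the ``standard search-to-decision amplification.'' Recovering $\mathbf{x}$ coordinate by coordinate from a decision oracle requires exhaustively guessing each entry over $[0,M)$, so it costs $\Theta(Mn)$ oracle calls; the paper's own decision-to-search lemma (Appendix C) explicitly requires $M<\mathrm{poly}(n)$ for this reason, whereas the theorem's hypotheses force $M>m\,2^{n\log\log n/\log n}$, which is super-polynomial. The paper does not resolve this tension: its proof of the theorem actually proceeds from the \emph{search} oracle throughout (Lemma 7 is stated for $\mathtt{MIMO\text{-}Search}$), the main-text statement of the theorem is in terms of $\mathtt{MIMO\text{-}Search}$, and the hardness of the decision variant for large $M$ is explicitly left open (Conjecture 2 of Section V). So if you start from a genuine decision oracle, your reduction has a gap that cannot be closed by the cited machinery; if you start from the search oracle, as the paper in fact does, your argument matches the paper's.
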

\begin{proof}
The lemmas required to prove this theorem are given below. We summarize
the proof of the main theorem as follows. 
\end{proof}
\begin{enumerate}
\item We first show that, given the MIMO oracle as described in Section
II, we can solve problems where the coefficients of the channel gain
matrix are instead drawn from a discrete Gaussian distribution as
described in Lemma 1. 
\item We begin with an arbitrary lattice basis $\mathbf{A}$ and apply the
LLL algorithm. We then create a discrete Gaussian distribution on
this lattice, with a second moment around the length of the largest
vector given in the reduced basis. We use this as the starting point
for the iterative portion of the algorithm. 
\item In Lemma 7, we use this MIMO decoding oracle to solve the $\mathtt{BDD}$
problem. The input to this problem is an (arbitrary) $n$-dimensional
lattice $\mathcal{L}(\mathbf{A})$, a number $r>\sqrt{2}\eta_{\epsilon}\left(\mathcal{L}(\mathbf{A})\right)$,
and a target point $\mathbf{y}$ within distance $d<M\sigma\alpha/k^{2}r\sqrt{2}$
(the bounding distance) of $\mathcal{L}(\mathbf{A})$. We take this
instance of a $\mathtt{BDD}$ problem and, using a number of samples
from the distribution $D_{L^{*},r}$, we are able to construct a number
of samples in the form of $\left(\mathbf{A},y=\left\langle \mathbf{a},\mathbf{x}\right\rangle +e\right)$,
in the exact form of distribution expected by MIMO decoding oracle
using Lemma 1. Here, returning the correct vector $\mathbf{x}$ solves
the $\mathtt{BDD}$ problem. We now have an oracle which solves the
$\mathtt{BDD}$ problem for arbitrary lattices. 
\item In Lemma 8, from \cite{R05}, the $\mathtt{BDD}$ oracle is used to
(quantumly) solve $\mathtt{DGS}_{L^{*},\sqrt{n}/\left(\sqrt{2}d\right)}$,
that is return samples of $D_{L^{*},r}$. Note we can efficiently
sample from $D_{L,r}$ for $r>\eta_{\epsilon}(L)$. If, in Lemma 7
we set parameters so that $\sqrt{2}d>\sqrt{n}$, then we can reduce
the value of $r$ to below the value for which we could previously
efficiently sample, that is we can construct a distribution that is
more narrow than previously possible. 
\item In \cite{R05}, the steps of Lemma 7 and 8 are iteratively applied,
resulting in a more narrow distribution of lattice points. Eventually,
this distribution becomes narrow enough to reveal information about
the shortest vectors of the lattice, solving the $\mathtt{GapSVP}_{n/\alpha}$
and $\mathtt{SIVP}_{n/\alpha}$. We refer the reader to \cite{R05}
for the rigorous treatment of this process. 
\item We refer the reader to \cite{P09} for the classical reduction, which
requires an oracle to solve the $\mathtt{BDD}$ problem. Replacing
Regev's $\mathtt{LWE}$-based $\mathtt{BDD}$ oracle with our MIMO-based
$\mathtt{BDD}$ oracle, the classical reduction follows. 
\end{enumerate}

\subsection{Smoothing Parameter.}

Before we prove the main theorem, we review the \emph{smoothing parameter}
and state some of its properties that we will require in our proof.
The smoothing parameter was introduced in \cite{MR04} and is an important
property of the behavior of a discrete Gaussian distribution on lattices.
It is precisely defined as follows. 
\begin{defn}
For an $n$-dimensional lattice $\mathcal{L}(\mathbf{\mathbf{A}})$
and a real $\epsilon>0$, the smoothing parameter, $\eta_{\epsilon}(\mathcal{L}(\mathbf{\mathbf{A}}))$,
is the smallest $\alpha$ such that $\Psi_{1/\alpha}\left(\mathcal{L}^{*}(\mathbf{\mathbf{A}})\backslash\left\{ 0\right\} \right)\leq\epsilon$.

The smoothing parameter defines the smallest standard deviation such
that, when the inverse is sampled over the dual $\mathcal{L}^{*}(\mathbf{A})$,
all but a negligible amount of weight is on the origin. More intuitively,
it is the width at which a discrete Gaussian measure begins to behave
as a continuous one. The motivation for the name `smoothing parameter'
is given in \cite{MR04}. For $\alpha>\sqrt{2}\eta_{\epsilon}(\mathcal{L})$,
if we sample lattice points from $D_{L,\alpha}$ then add Gaussian
noise $\Psi_{\alpha}$, then the resulting distribution is at most
distance $4\epsilon$ from Gaussian. We borrow the following two technical
claims from \cite{MR04}.
\end{defn}
\begin{claim}
\cite[Lemma 3.2]{MR04}. For any $n$-dimensional lattice $\mathcal{L}(\mathbf{A})$,
$\eta_{\epsilon}\left(\mathcal{L}\right)\leq\sqrt{n}/\lambda_{1}\left(\mathcal{L}^{*}(\mathbf{A})\right)$
where $\epsilon=2^{-n}$. 
\end{claim}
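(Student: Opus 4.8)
The plan is to unwind the definition of $\eta_\epsilon$ and then bound, by a crude volume/packing estimate, how much Gaussian weight the dual lattice puts away from the origin. Write $r=\lambda_1(\mathcal{L}^*(\mathbf{A}))$ and set $\alpha=\sqrt n/r$. Since $\Psi_{1/\alpha}(\mathcal{L}^*(\mathbf{A})\setminus\{0\})$ is non-increasing in $\alpha$, to prove $\eta_{2^{-n}}(\mathcal{L})\le\sqrt n/r$ it suffices to verify that at this particular $\alpha$ we already have $\Psi_{1/\alpha}(\mathcal{L}^*(\mathbf{A})\setminus\{0\})\le 2^{-n}$, i.e. (using the unnormalized Gaussian weight $e^{-\pi\alpha^2\|\mathbf{x}\|^2}$, as in \cite{MR04}) that $\sum_{\mathbf{x}\in\mathcal{L}^*\setminus\{0\}}e^{-\pi\alpha^2\|\mathbf{x}\|^2}\le 2^{-n}$. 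The only facts about $\mathcal{L}^*(\mathbf{A})$ I would use are that it is a rank-$n$ lattice and that every nonzero point has Euclidean norm at least $r$.

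First I would partition $\mathcal{L}^*\setminus\{0\}$ into the spherical shells $S_i=\{\mathbf{x}\in\mathcal{L}^*: ir\le\|\mathbf{x}\|<(i+1)r\}$ for $i=1,2,\ldots$; this is a genuine partition precisely because $\lambda_1(\mathcal{L}^*)=r$. On $S_i$ each summand is at most $e^{-\pi\alpha^2 i^2 r^2}=e^{-\pi n i^2}$, since $\alpha r=\sqrt n$, so the whole sum is at most $\sum_{i\ge 1}|S_i|\,e^{-\pi n i^2}$. The remaining ingredient is a count of lattice points in a ball: distinct points of $\mathcal{L}^*$ are at distance $\ge r$, so the open balls of radius $r/2$ about them are pairwise disjoint, and each such ball centred in $B(0,R)$ lies inside $B(0,R+r/2)$; comparing volumes gives $|\mathcal{L}^*\cap B(0,R)|\le(1+2R/r)^n$. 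Taking $R=(i+1)r$ yields $|S_i|\le(2i+3)^n$, so the total weight is bounded by $\sum_{i\ge 1}\bigl((2i+3)\,e^{-\pi i^2}\bigr)^n$.

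The last step is a routine numerical check that this series is at most $2^{-n}$ for every $n\ge 1$: the $i=1$ term equals $(5e^{-\pi})^n<(1/4)^n$, and the factors $(2i+3)e^{-\pi i^2}$ collapse super-geometrically for $i\ge 2$, so the sum is dominated by its first term and stays comfortably below $2^{-n}$. (Alternatively, one can obtain the same conclusion in essentially one line from Banaszczyk's Gaussian tail bound applied to the scaled dual lattice $(\sqrt n/r)\,\mathcal{L}^*$, all of whose nonzero vectors have norm at least $\sqrt n$; this is, I believe, how \cite{MR04} actually argues.)

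I expect the only real difficulty to be making sure the constants truly land at $\epsilon=2^{-n}$ rather than at a weaker $2^{-cn}$: the packing bound $(1+2R/r)^n$ is quite lossy, so the argument has to genuinely use the $\sqrt n$ in the choice $\alpha=\sqrt n/r$ — that is exactly what makes $e^{-\pi n i^2}$ crush the polynomial-in-$i$ shell counts raised to the $n$-th power — and one should carry enough slack through the shell sum. A minor point worth stating explicitly is the normalization convention for $\Psi_{1/\alpha}$ evaluated on a countable set; I would take it to be the unnormalized Gaussian mass $\sum e^{-\pi\alpha^2\|\mathbf{x}\|^2}$, consistent with \cite{MR04}, since with a density normalization an extra factor of $\alpha^n$ would appear and the statement would have to be rescaled. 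Since the claim is quoted verbatim from \cite[Lemma 3.2]{MR04}, citing it would of course also suffice, but the self-contained argument above is short.
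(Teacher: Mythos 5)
Your proof is correct. The paper itself offers no argument here: Claim 2 is quoted verbatim from \cite[Lemma 3.2]{MR04} and used as a black box, so there is no in-paper proof to match. Your self-contained derivation is a valid and somewhat more elementary route than the one in \cite{MR04}: there the bound follows from Banaszczyk's Fourier-analytic Gaussian tail inequality ($\rho$ of a lattice outside a ball of radius $\sqrt{n}$ is an exponentially small fraction of $\rho$ of the whole lattice), whereas you replace that machinery with a direct shell decomposition plus the packing estimate $|\mathcal{L}^*\cap B(0,R)|\le(1+2R/r)^n$. The arithmetic checks out: with $\alpha r=\sqrt{n}$ the $i$-th shell contributes at most $\bigl((2i+3)e^{-\pi i^2}\bigr)^n$, the $i=1$ factor is $5e^{-\pi}\approx 0.216<1/4$, and since every factor is below $1/4$ one gets $\sum_{i\ge1}a_i^n\le(1/4)^{n-1}\sum_{i\ge1}a_i<4^{-n}\le 2^{-n}$, so the constant genuinely lands at $\epsilon=2^{-n}$ and not merely $2^{-cn}$. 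Your remark on normalization is also well taken and worth keeping: the paper's $\Psi_{1/\alpha}$ is elsewhere defined as a normalized density, and evaluated on the countable set $\mathcal{L}^*\setminus\{0\}$ it must be read as the unnormalized Gaussian mass $\sum e^{-\pi\alpha^2\|\mathbf{x}\|^2}$ (the $\rho$ of \cite{MR04}) for the stated inequality to be the intended one; otherwise a spurious factor of $\alpha^n$ appears. What the citation buys the paper is brevity and a sharper toolbox (Banaszczyk's lemma also gives the shifted-coset versions needed elsewhere in \cite{MR04}); what your argument buys is transparency and independence from the Poisson-summation machinery.
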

More generally, we can characterize the smoothing parameter for any
$\epsilon$. 
\begin{claim}
\cite[Lemma 3.3]{MR04}. For any $n$-dimensional lattice $\mathcal{L}(\mathbf{A})$
and $\epsilon>0$, 
\begin{equation}
\eta_{\epsilon}(\mathcal{L})\leq\sqrt{\frac{\ln(2n(1+1/\epsilon))}{\pi}}\cdot\lambda_{n}(\mathbf{A}).
\end{equation}
Equivalently, for any superlogarithmic function $\omega\left(\log n\right)$,
$\eta_{\epsilon}(\mathcal{L})\leq\sqrt{\omega\left(\log n\right)}\cdot\lambda_{n}(\mathbf{A}).$ 
\end{claim}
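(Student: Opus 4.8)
The plan is to derive the stated inequality from the familiar basis-dependent bound on the smoothing parameter and then to prove that bound by the classical Gram--Schmidt/dual-basis manipulation of Gaussian lattice sums. I use the normalization of \cite{MR04}: writing $\rho_{s}(\mathbf{x})=e^{-\pi\|\mathbf{x}\|^{2}/s^{2}}$ for the Gaussian weight underlying $\Psi_{s}$, the smoothing parameter $\eta_{\epsilon}(\mathcal{L})$ is the least $s>0$ with $\rho_{1/s}(\mathcal{L}^{*}\setminus\{0\})\le\epsilon$, and I abbreviate $c=\sqrt{\ln(2n(1+1/\epsilon))/\pi}$, so the goal is $\eta_{\epsilon}(\mathcal{L})\le c\,\lambda_{n}(\mathcal{L})$.

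First I would pass to a sublattice carrying a short basis. By definition of the $n$th successive minimum there are linearly independent $\mathbf{v}_{1},\dots,\mathbf{v}_{n}\in\mathcal{L}$ with $\|\mathbf{v}_{i}\|\le\lambda_{n}(\mathcal{L})$; they span a full-rank sublattice $\Lambda=\sum_{i}\mathbb{Z}\mathbf{v}_{i}\subseteq\mathcal{L}$, so $\mathcal{L}^{*}\subseteq\Lambda^{*}$, hence $\rho_{1/s}(\mathcal{L}^{*}\setminus\{0\})\le\rho_{1/s}(\Lambda^{*}\setminus\{0\})$ for every $s$, and therefore $\eta_{\epsilon}(\mathcal{L})\le\eta_{\epsilon}(\Lambda)$. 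Since $\|\widetilde{\mathbf{v}}_{i}\|\le\|\mathbf{v}_{i}\|\le\lambda_{n}(\mathcal{L})$, where $\widetilde{\mathbf{v}}_{i}$ denotes the Gram--Schmidt orthogonalization of $(\mathbf{v}_{i})$, it suffices to prove the basis bound $\eta_{\epsilon}(\Lambda)\le c\cdot\max_{i}\|\widetilde{\mathbf{v}}_{i}\|$.

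For the basis bound, set $s=c\cdot\max_{i}\|\widetilde{\mathbf{v}}_{i}\|$ and let $\mathbf{d}_{1},\dots,\mathbf{d}_{n}$ be the basis of $\Lambda^{*}$ dual to $(\mathbf{v}_{i})$, i.e.\ $\langle\mathbf{v}_{i},\mathbf{d}_{j}\rangle=\delta_{ij}$. Two classical facts finish the estimate. First, the Gram--Schmidt lengths of the \emph{reversed} dual basis $\mathbf{d}_{n},\dots,\mathbf{d}_{1}$ are $1/\|\widetilde{\mathbf{v}}_{n}\|,\dots,1/\|\widetilde{\mathbf{v}}_{1}\|$, each at least $1/\max_{i}\|\widetilde{\mathbf{v}}_{i}\|=c/s$. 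Second, for any lattice $\Lambda'$ with a basis whose Gram--Schmidt lengths are $\ell_{1},\dots,\ell_{n}$ one has $\rho_{1/s}(\Lambda')\le\prod_{k=1}^{n}\rho_{1/s}(\ell_{k}\mathbb{Z})$; this is shown by peeling off one basis vector at a time, using that the component of a lattice point along the last Gram--Schmidt direction is an integer multiple of $\ell_{n}$ while its remaining projection lies in a coset of the rank-$(n{-}1)$ sublattice, to which one applies the standard coset inequality $\rho_{1/s}(\Lambda''+\mathbf{u})\le\rho_{1/s}(\Lambda'')$. Because $t\mapsto\rho_{1/s}(t\mathbb{Z})=\sum_{k}e^{-\pi s^{2}t^{2}k^{2}}$ is nonincreasing in $t$, these two facts give
\[
\rho_{1/s}(\Lambda^{*}\setminus\{0\})=\rho_{1/s}(\Lambda^{*})-1\;\le\;\Bigl(\sum_{k\in\mathbb{Z}}e^{-\pi c^{2}k^{2}}\Bigr)^{n}-1 .
\]

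It remains to verify that the right-hand side is at most $\epsilon$ for this $c$. With $q:=e^{-\pi c^{2}}=\tfrac{1}{2n(1+1/\epsilon)}\le\tfrac12$ we have $\sum_{k\in\mathbb{Z}}e^{-\pi c^{2}k^{2}}=1+2\sum_{k\ge1}q^{k^{2}}\le1+2q(1+2q^{2})$, so the bound is at most $e^{2nq(1+2q^{2})}-1=e^{(1+2q^{2})\epsilon/(1+\epsilon)}-1$; and since $\ln(1+\epsilon)-\tfrac{\epsilon}{1+\epsilon}\ge\tfrac{\epsilon^{2}}{2(1+\epsilon)^{2}}$ whereas $q^{2}\le\tfrac{\epsilon}{4(1+\epsilon)}$, this last expression is $\le\epsilon$, so $\eta_{\epsilon}(\Lambda)\le s$ and the proof is complete. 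The ``equivalently'' reformulation follows by choosing, for a prescribed superlogarithmic $\omega(\log n)$, a negligible $\epsilon=\epsilon(n)$ with $2n(1+1/\epsilon)\le e^{\pi\omega(\log n)}$, which is legitimate exactly because $1/\epsilon$ may then be any $n^{\omega(1)}$. The only ingredients that are not routine are the two classical facts above --- the reciprocal Gram--Schmidt lengths of the reversed dual basis and the submultiplicativity of the Gaussian mass over a Gram--Schmidt decomposition --- and the single delicate point is the theta-function estimate, where the constant inside the logarithm must be tracked carefully to come out as exactly $2n$.
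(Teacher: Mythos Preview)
Your argument is correct, but note that the paper does not actually prove this claim: it is stated as Claim~3 and simply cited from \cite[Lemma~3.3]{MR04}, so there is no ``paper's own proof'' to compare against. What you have written is a faithful reconstruction of the standard Micciancio--Regev argument (pass to the sublattice spanned by $n$ independent vectors of length $\le\lambda_n$, use $\mathcal{L}^*\subseteq\Lambda^*$ to get $\eta_\epsilon(\mathcal{L})\le\eta_\epsilon(\Lambda)$, then bound the dual Gaussian mass via the reciprocal Gram--Schmidt lengths of the reversed dual basis). Your theta-series estimate checks out: with $q=\epsilon/(2n(1+\epsilon))$ one indeed has $q^2\le\epsilon/(4(1+\epsilon))$ for all $n\ge1$ and $\epsilon>0$, which combined with $\ln(1+\epsilon)-\epsilon/(1+\epsilon)\ge\epsilon^2/(2(1+\epsilon)^2)$ closes the inequality. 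The one cosmetic slip is that $2\sum_{k\ge1}q^{k^2}\le 2q(1+2q^3)$ is the natural bound (from $q^4+q^9+\cdots\le q^4/(1-q)\le 2q^4$), and you wrote $2q^2$ instead of $2q^3$; this is harmless since $2q^3\le 2q^2$ and the rest of your chain only uses the weaker form.
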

We will also note the following property of the smoothing parameter,
which follows from the linearity of lattices. If we scale the basis
of a lattice, then all of the successive minima will scale by the
same amount: 
\begin{claim}
For any $n$-dimensional lattice $\mathcal{L}(\mathbf{A})$,$\epsilon>0$,
and $c>0$,$\eta_{\epsilon}(\mathcal{L}\left(c\cdot\mathbf{A}\right))=c\cdot\eta_{\epsilon}(\mathcal{L}\left(\mathbf{A}\right))$. 
\end{claim}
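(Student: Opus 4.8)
The plan is to reduce the claimed identity to two elementary observations: how the dual lattice behaves under scaling, and how the width parameter of the Gaussian weight interacts with a spatial dilation. Write $\mathcal{L}=\mathcal{L}(\mathbf{A})$, and recall that by definition $\eta_{\epsilon}(\mathcal{L}(c\mathbf{A}))$ is the smallest $\alpha$ for which $\Psi_{1/\alpha}\bigl(\mathcal{L}^{*}(c\mathbf{A})\setminus\{0\}\bigr)\le\epsilon$.

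First I would compute the dual of the scaled lattice. Since scaling a basis scales the lattice, $\mathcal{L}(c\mathbf{A})=c\,\mathcal{L}$. A vector $\mathbf{y}$ then lies in $\mathcal{L}^{*}(c\mathbf{A})$ exactly when $\langle\mathbf{y},c\mathbf{x}\rangle=\langle c\mathbf{y},\mathbf{x}\rangle\in\mathbb{Z}$ for every $\mathbf{x}\in\mathcal{L}$, i.e.\ exactly when $c\mathbf{y}\in\mathcal{L}^{*}$. Hence $\mathcal{L}^{*}(c\mathbf{A})=c^{-1}\mathcal{L}^{*}$; note the inverse scaling, which is precisely what makes the width parameter transform in the correct direction.

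Next I would use the shape-scaling of the Gaussian weight. For the Gaussian function appearing in the definition of the smoothing parameter, dilating the argument by $c^{-1}$ while leaving the width fixed is the same as widening by $c$, so $\Psi_{1/\alpha}(\mathbf{v}/c)=\Psi_{c/\alpha}(\mathbf{v})$. Summing over the punctured dual and using the dual-scaling from the previous step gives $\Psi_{1/\alpha}\bigl(\mathcal{L}^{*}(c\mathbf{A})\setminus\{0\}\bigr)=\Psi_{c/\alpha}\bigl(\mathcal{L}^{*}\setminus\{0\}\bigr)$. Thus the defining inequality for $\mathcal{L}(c\mathbf{A})$ at width parameter $\alpha$ is literally the defining inequality for $\mathcal{L}$ at width parameter $\alpha/c$.

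I would then finish with a change of variable. Setting $\beta=\alpha/c$, the condition $\Psi_{c/\alpha}(\mathcal{L}^{*}\setminus\{0\})\le\epsilon$ becomes $\Psi_{1/\beta}(\mathcal{L}^{*}\setminus\{0\})\le\epsilon$, whose smallest solution is by definition $\eta_{\epsilon}(\mathcal{L})$. Because $c>0$, the map $\beta\mapsto c\beta$ is increasing and carries the minimal $\beta$ to the minimal $\alpha$, so the smallest admissible $\alpha$ equals $c\,\eta_{\epsilon}(\mathcal{L})$, which is the claim; the monotonicity of $\Psi_{1/\alpha}(\mathcal{L}^{*}\setminus\{0\})$ in $\alpha$ guarantees these ``smallest $\alpha$'' thresholds are well defined. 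There is no substantive obstacle here: the only points that require care are the \emph{direction} of the dual scaling (the dual contracts by $c^{-1}$ rather than expanding by $c$) and the convention that the smoothing parameter is defined through the unnormalized Gaussian weight, so that no dimension-dependent factor $c^{n}$ intrudes; with a normalized density one would pick up exactly such a factor and the clean identity would instead relate $\eta_{\epsilon}(c\mathcal{L})$ to $\eta_{\epsilon/c^{n}}(\mathcal{L})$.
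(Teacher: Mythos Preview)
Your argument is correct and complete: the dual-scaling $\mathcal{L}^{*}(c\mathbf{A})=c^{-1}\mathcal{L}^{*}(\mathbf{A})$, the pointwise identity $\Psi_{1/\alpha}(\mathbf{v}/c)=\Psi_{c/\alpha}(\mathbf{v})$ for the unnormalized Gaussian weight, and the change of variable $\beta=\alpha/c$ together yield the claim without any gaps. The caution you flag about normalization is also apt.

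By comparison, the paper does not give a proof at all. It simply remarks that the property ``follows from the linearity of lattices'' and that scaling a basis scales the successive minima, and then states the claim. That heuristic is suggestive---via Claims~2 and~3 one knows $\eta_{\epsilon}$ is controlled by the successive minima, which do scale linearly---but it is not a derivation from the definition, since those claims give only one-sided bounds. Your route is therefore genuinely more direct: you work straight from the defining condition $\Psi_{1/\alpha}(\mathcal{L}^{*}\setminus\{0\})\le\epsilon$ rather than passing through bounds involving $\lambda_{1}$ or $\lambda_{n}$, and you obtain the exact identity rather than inferring it from how auxiliary quantities scale.
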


\subsection{Preliminary Lemmas}

Before we can proceed with the main part of the proof showing the
reduction from standard lattice problems to the MIMO decoding problem,
we require several preliminary lemmas. We will first show, in Lemma
1, that it is sufficient to solve the MIMO decoding from a discrete
Gaussian distribution rather than a continuous one. This distribution
is used as input to the MIMO oracle in Lemma 7. We define the distribution
$D_{M,\alpha,k}$, which is the discrete analog of the distribution
$A_{M,\alpha,k}$. Given an arbitrary lattice $\mathcal{L}(\mathbf{A})$,
and a number $r>\sqrt{2}\eta_{\epsilon}(\mathcal{L}(\mathbf{A}))$,
we first sample a vector $\mathbf{a}$ from the distribution $D_{\mathcal{L}(\mathbf{A}),r}$,
and a point $e$ from the distribution $\psi_{\alpha}$. We now output:
\begin{equation}
\left(\frac{k\mathbf{a}}{r},\mathbf{y}=\left\langle \frac{k\mathbf{a}}{r},\mathbf{x}\right\rangle /M+e\right)
\end{equation}

\begin{lem}
Continuous-to-Discrete Samples. Given an oracle which can solve $\mathtt{\mbox{\ensuremath{\mathtt{MIMO-Decision}}}}_{M,\alpha,k}$,
there exists an efficient algorithm to recover $\mathbf{x}$ given
samples from $D_{M,\alpha,k}$.
\end{lem}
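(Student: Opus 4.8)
The plan is to prove Lemma~1 by a single ``sample-conversion'' followed by an invocation of the assumed oracle: I would show how to turn one sample of $D_{M,\alpha,k}$ into something statistically indistinguishable from one sample of $A_{M,\alpha,k}$. This is the real-valued counterpart of the discrete-versus-continuous step in Regev's $\mathtt{LWE}$ reduction. Given $(\mathbf{a}',y')$ from $D_{M,\alpha,k}$ --- so $\mathbf{a}' = \tfrac{k}{r}\mathbf{a}$ with $\mathbf{a}\sim D_{\mathcal{L}(\mathbf{A}),r}$ and $y' = \langle\mathbf{a}',\mathbf{x}\rangle/M + e$, $e\sim\Psi_\alpha$ --- the plan is to output
\begin{equation}
\bigl(\tfrac{1}{\sqrt2}\mathbf{a}' + \mathbf{c},\ \tfrac{1}{\sqrt2}y' + e''\bigr),\qquad \mathbf{c}\sim\Psi_{k/\sqrt2},\ \ e''\sim\Psi_{\alpha/\sqrt2},
\end{equation}
with $\mathbf{c}$ and $e''$ drawn freshly and independently, and then to feed the polynomially many manufactured pairs to the oracle. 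Crucially this transformation uses no knowledge of $\mathbf{x}$ or of $e$.

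The channel-gain coordinate is the routine part. The vector $\tfrac{1}{\sqrt2}\mathbf{a}'$ is a discrete Gaussian of width $k/\sqrt2$ over the scaled lattice $\tfrac{k}{\sqrt2\,r}\mathcal{L}(\mathbf{A})$; using the scaling property $\eta_\epsilon(c\mathcal{L}) = c\,\eta_\epsilon(\mathcal{L})$, the condition $k/\sqrt2 > \sqrt2\,\eta_\epsilon\bigl(\tfrac{k}{\sqrt2\,r}\mathcal{L}(\mathbf{A})\bigr)$ is equivalent to $r > \sqrt2\,\eta_\epsilon(\mathcal{L}(\mathbf{A}))$, which is exactly the hypothesis of the lemma (with $\epsilon$ chosen negligibly small, which the bounds on $\eta_\epsilon$ recalled above still permit). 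The smoothing fact recalled in the paper --- that $D_{\mathcal{L},s} + \Psi_s$ is within $4\epsilon$ of a continuous Gaussian once $s > \sqrt2\,\eta_\epsilon(\mathcal{L})$ --- then applies with $s = k/\sqrt2$, and a one-line covariance computation identifies the limiting Gaussian as $\Psi_k$. Hence $\tfrac{1}{\sqrt2}\mathbf{a}'+\mathbf{c}$ is within negligible total variation distance of the channel distribution of $A_{M,\alpha,k}$.

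The received-value coordinate is the delicate step, and the place I expect the real work to be. The value one would ``like'' to pair with the new channel vector is $\langle\tfrac{1}{\sqrt2}\mathbf{a}'+\mathbf{c},\mathbf{x}\rangle/M$ plus $\Psi_\alpha$ noise, but we cannot form the term $\langle\mathbf{c},\mathbf{x}\rangle/M$ (since $\mathbf{x}$ is the unknown), nor subtract off $e$. The resolution is that this is unnecessary: substituting $\langle\mathbf{a}',\mathbf{x}\rangle/M = y'-e$, the difference $\bigl(\tfrac{1}{\sqrt2}y'+e''\bigr) - \langle\tfrac{1}{\sqrt2}\mathbf{a}'+\mathbf{c},\mathbf{x}\rangle/M$ equals $e'' + \tfrac{1}{\sqrt2}e - \langle\mathbf{c},\mathbf{x}\rangle/M$, a sum of three independent Gaussians of widths $\alpha/\sqrt2$, $\alpha/\sqrt2$ and $\|\mathbf{x}\|\,k/(\sqrt2\,M)$. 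The first two combine, by the choice of $e''$, to exactly $\Psi_\alpha$; the third --- although it is correlated with the channel vector through $\mathbf{c}$ --- has negligible width because $\|\mathbf{x}\|=P$ is bounded while $M$ is exponentially large, and this is where the constellation-size requirement $M > m\,2^{n\log\log n/\log n}$ (together with the power constraint on $\mathbf{x}$) enters. Conditioning on $\mathbf{c}$ and then integrating, the channel-vector/received-value pair is jointly within negligible distance of the independent pair $\Psi_k\otimes\Psi_\alpha$, i.e.\ of a genuine sample of $A_{M,\alpha,k}$. Summing the per-sample distances over the polynomially many samples keeps the total negligible, so the assumed oracle --- which succeeds with probability $1-n^{-c}$ on genuine samples --- still recovers $\mathbf{x}$ with overwhelming probability, as claimed. (The conversion preserves the decision structure as well, so a decision oracle for $A_{M,\alpha,k}$ versus $R_{\alpha,\psi}$ would yield one for $D_{M,\alpha,k}$; extracting $\mathbf{x}$ itself from a pure decision oracle would additionally invoke a standard $\mathtt{LWE}$-style search-to-decision step.)
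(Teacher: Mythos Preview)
Your approach differs substantially from the paper's, and the second-coordinate step contains a real gap. The paper does not smooth the discrete channel vector with fresh continuous Gaussian noise. Instead it takes two discrete samples $(\mathbf{a}_i,y_i)$, $(\mathbf{a}_j,y_j)$, draws $c_i,c_j$ uniformly from $\mathbb{Z}_{2^{n^c}}$, and outputs the convex combination
\[
\Bigl(\tfrac{c_i\mathbf{a}_i+c_j\mathbf{a}_j}{c_i+c_j},\ \tfrac{c_iy_i+c_jy_j}{c_i+c_j}\Bigr).
\]
Because the \emph{same} linear map is applied to both coordinates, the relation $y=\langle\mathbf{a},\mathbf{x}\rangle/M+e$ survives exactly, with no term involving the unknown $\mathbf{x}$ left over. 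The paper then invokes the model assumption that all algorithms approximate reals only to within $2^{-n^c}$, arguing that the densified support is indistinguishable from $\mathbb{R}^n$ at that resolution.

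The gap in your argument is the assertion that $\langle\mathbf{c},\mathbf{x}\rangle/M$ has negligible width ``because $\|\mathbf{x}\|=P$ is bounded while $M$ is exponentially large.'' The paper never establishes that $P$ is small relative to $M$: the constellation is $\mathcal{X}=[0,M)$, so $\mathbf{x}\in[0,M)^n$, and in the intended application (Lemma~7 together with Lemma~4) the secret is the mod-$M$ coefficient vector of a BDD solution, generically of norm on the order of $M\sqrt{n}$. In that regime your extra term has width $\|\mathbf{x}\|\,k/(\sqrt{2}M)\sim k\sqrt{n}$, which (using $m\alpha/k^{2}>\sqrt{n}$, i.e.\ $\alpha>k^{2}\sqrt{n}/m$) is comparable to $\alpha$, not negligible --- and it is correlated with the first coordinate through $\mathbf{c}$, so it cannot be absorbed into the independent noise required by $A_{M,\alpha,k}$. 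The paper's combine-two-samples trick is precisely what sidesteps ever having to pair a freshly chosen vector with the unknown $\mathbf{x}$; your smoothing construction reintroduces that dependence and the largeness of $M$ does not rescue it.
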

\begin{proof} We first claim that every point in the distribution
$D_{\mathcal{L}(\mathbf{A}),r}$ is proportional to $\psi_{r}$, to
within a negligible amount. This effectively follows from the fact
that we choose $r$ to be larger than the square root of two times
the smoothing parameter of the lattice (formally, see equation 11
of \cite[Claim 3.9]{R05}).

We now create a unitary transformation that can be applied to both
$\mathbf{a}$ and $\mathbf{y}$, creating $\mathbf{\widetilde{a}}$
and $\mathbf{\widetilde{y}}$, so that the support of $\mathbf{\widetilde{a}}$
is effectively $\mathbb{R}^{n}$. W do not have to actually cover
all of $\mathbb{R}^{n}$, nor do we in fact have to come close to
doing so. Our algorithms, by assumption, can only approximate points
in $\mathbb{R}^{n}$ to within a factor of $2^{-n^{c}}$, for some
$c>0$. Thus, we only need to have a support in which no point is
more than a distance of $2^{-n^{c}}$ away from any point in $\mathbb{R}^{n}$.
Then, by the fact that we choose a unitary transformation, all norms
and inner products will be preserved, and we will achieve our desired
result. We describe an appropriate transformation as follows.

Take two samples from the discrete distribution, call them $(\mathbf{a}_{i},\mathbf{y}_{i})$
and $(\mathbf{a}_{j},\mathbf{y}_{j})$. Now generate two numbers $c_{i},c_{j}\in\mathbb{Z}_{2^{n^{c}}}$,
uniformly, and output 
\[
\left(\frac{c_{i}\mathbf{a}_{i}+c_{j}\mathbf{a}_{j}}{c_{i}+c_{j}},\frac{c_{i}\mathbf{y}_{i}+c_{j}\mathbf{y}_{j}}{c_{i}+c_{j}}\right)=
\]
\[
\left(\frac{c_{i}\mathbf{a}_{i}+c_{j}\mathbf{a}_{j}}{c_{i}+c_{j}},\left\langle \frac{c_{i}\mathbf{a}_{i}+c_{j}\mathbf{a}_{j}}{c_{i}+c_{j}},\mathbf{x}\right\rangle +\frac{c_{i}e_{i}+c_{j}e_{j}}{c_{i}+c_{j}}\right)
\]
since each $e$ is generated i.i.d., it is not hard to see that the
noise has the correct distribution. Similarly, since each $\mathbf{a}$
is i.i.d., moments of the quantity $\frac{c_{i}\mathbf{a}_{i}+c_{j}\mathbf{a}_{j}}{c_{i}+c_{j}}$
will be unchanged, and this quantity will be proportional to the desired
Gaussian distribution. We have now increased the support of the distribution.
Indeed, the support of the quantities $\frac{c_{i}}{c_{i}+c_{j}}$
and $\frac{c_{j}}{c_{i}+c_{j}}$ covers every point in the interval
$[0,1)$ to within a factor of $2^{-n^{c}}$, and this new distribution
will be indistinguishable from the distribution expected by the MIMO
oracle. \end{proof} 

We next state the following claim which is proven in \cite{R05} and
shows that a small change in $\alpha$ results in a small change in
the distribution of $\Psi_{\alpha}$. This claim is required in the
proofs of Lemmas 2 and 3. 
\begin{claim}
\cite[Claim 2.2]{R05}. For any $0<\beta<\alpha\leq2\beta$, 
\begin{equation}
\Delta(\Psi_{\alpha},\Psi_{\beta})\leq9\left(\frac{\alpha}{\beta}-1\right).
\end{equation}

We next show that given a vector $\mathbf{x}$, it is easy to verify
whether or not it is the correct solution to the MIMO-Search problem:
\end{claim}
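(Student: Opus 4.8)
The plan is to prove this one-dimensional statistical-distance bound by reducing to a single scale parameter and then controlling the derivative of the total variation with respect to that parameter. Recall that $\Psi_\alpha$ has density $\Psi_\alpha(x)=\tfrac{1}{\alpha}\exp(-\pi x^2/\alpha^2)$ on $\mathbb{R}$. First I would eliminate one parameter by the change of variables $x=\beta u$: a direct computation shows $\Delta(\Psi_\alpha,\Psi_\beta)=\tfrac12\int_{\mathbb{R}}\bigl|\tfrac1t e^{-\pi u^2/t^2}-e^{-\pi u^2}\bigr|\,du$, where $t=\alpha/\beta\in(1,2]$. Denoting this quantity $f(t)$, it depends only on $t$ and satisfies $f(1)=0$. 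Hence it suffices to show $f'(t)\le 9$ on $(1,2]$ and integrate, since $f(t)=\int_1^t f'(s)\,ds$.

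To obtain a workable form for $f(t)$, I would set $g_t(u)=\tfrac1t e^{-\pi u^2/t^2}-e^{-\pi u^2}$ and use that both densities integrate to $1$, so $\int_{\mathbb{R}} g_t=0$ and therefore $f(t)=\int_{\mathbb{R}} g_t^+(u)\,du$. Since $g_t(0)=\tfrac1t-1<0$ while $g_t(u)>0$ for large $|u|$, the densities cross at a single symmetric pair $\pm u_0(t)$, with $g_t>0$ exactly on $|u|>u_0$; solving $g_t(u_0)=0$ gives $u_0^2=t^2\ln t/\bigl(\pi(t^2-1)\bigr)$. Differentiating (Leibniz for the positive part, the boundary contribution at $u_0$ vanishing because $g_t(u_0)=0$) yields $f'(t)=\int_{|u|>u_0}\partial_t g_t\,du$, where only the first term of $g_t$ carries the $t$-dependence.

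The computation then collapses after the substitution $v=u/t$: one finds $\partial_t g_t=t^{-2}e^{-\pi u^2/t^2}(2\pi u^2/t^2-1)$, and since $\tfrac{d}{dv}\bigl(-v e^{-\pi v^2}\bigr)=e^{-\pi v^2}(2\pi v^2-1)$, the tail integral telescopes to a boundary term, giving the closed form $f'(t)=\tfrac{2}{t}\,v_0 e^{-\pi v_0^2}$ with $v_0=u_0/t=\sqrt{\ln t/(\pi(t^2-1))}$. Finally I would bound the elementary function $v\mapsto v e^{-\pi v^2}$, whose maximum over $v\ge 0$ is $1/\sqrt{2\pi e}$, attained at $v=1/\sqrt{2\pi}$; combined with $t>1$ this gives $f'(t)\le 2/\sqrt{2\pi e}<1$, so a fortiori $f'(t)\le 9$ and $f(t)\le 9(t-1)=9(\alpha/\beta-1)$. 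The stated constant $9$ is thus extremely loose, leaving ample slack should one prefer a cruder estimate in place of the exact maximization.

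The step I expect to require the most care is the differentiation of $f$ through the absolute value: I must justify exchanging $\tfrac{d}{dt}$ with the integral over the $t$-dependent region $\{g_t>0\}$ and verify that the moving-boundary term drops out precisely because the integrand vanishes at the crossing point $u_0(t)$. Once that is secured, recognizing $-v e^{-\pi v^2}$ as an antiderivative of $e^{-\pi v^2}(2\pi v^2-1)$ renders the remaining estimate routine.
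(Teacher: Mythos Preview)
Your argument is correct and quite clean: the scale reduction to $t=\alpha/\beta$, the positive-part representation $f(t)=\int g_t^+$, the Leibniz differentiation with vanishing boundary term at the crossing point, and the explicit antiderivative $-ve^{-\pi v^2}$ all check out, yielding the closed form $f'(t)=\tfrac{2}{t}v_0 e^{-\pi v_0^2}\le \sqrt{2/(\pi e)}<1$. So you obtain the much sharper bound $\Delta(\Psi_\alpha,\Psi_\beta)\le \alpha/\beta-1$, of which the stated $9(\alpha/\beta-1)$ is a loose consequence.

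There is, however, nothing in the paper to compare against: the paper does not prove this claim at all but simply quotes it verbatim from Regev \cite[Claim~2.2]{R05}. Your derivation thus goes well beyond what the paper supplies. For what it is worth, Regev's own argument also reduces to the single parameter $t$ but proceeds by cruder pointwise bounds on the density difference rather than your exact computation of $f'(t)$; that is why his constant is $9$ whereas yours is below $1$. Your approach is more work but gives a tight Lipschitz constant and a closed form for the derivative, which is a genuine improvement if one ever needs the sharp dependence.
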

\begin{lem}
Verifying solutions of $\mathtt{MIMO-Search}_{M,\alpha,k}$ . There
exists an efficient algorithm that, given $\mathbf{x}'$ and a polynomial
number of samples from $A_{x,\alpha,k}$, for an unknown $\mathbf{x}$,
outputs whether $\mathbf{x=x'}$ with overwhelming probability.
\end{lem}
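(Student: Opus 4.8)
The plan is to turn verification into a statistical hypothesis test on \emph{residuals}. Write the $N=\mathrm{poly}(n)$ samples as $(\mathbf{a}_i,y_i)$ with $y_i=\langle\mathbf{a}_i,\mathbf{x}\rangle+e_i$, the $\mathbf{a}_i$ having i.i.d.\ $\Psi_k$ entries and $e_i\sim\Psi_{M\alpha}$, and set $r_i:=y_i-\langle\mathbf{a}_i,\mathbf{x}'\rangle$. First dispose of the trivial case: if $\mathbf{x}'\notin\mathcal{X}^n$, output NO. Otherwise put $\mathbf{d}:=\mathbf{x}-\mathbf{x}'\in\mathbb{Z}^n$, so that $r_i=\langle\mathbf{a}_i,\mathbf{d}\rangle+e_i$. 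If $\mathbf{x}'=\mathbf{x}$ then $\mathbf{d}=\mathbf{0}$ and the $r_i$ are i.i.d.\ draws from the \emph{known} distribution $\Psi_{M\alpha}$; if $\mathbf{x}'\neq\mathbf{x}$ then $\|\mathbf{d}\|\geq 1$, and since $\langle\mathbf{a}_i,\mathbf{d}\rangle\sim\Psi_{k\|\mathbf{d}\|}$ is independent of $e_i$, the $r_i$ are i.i.d.\ $\Psi_{\sqrt{(M\alpha)^2+k^2\|\mathbf{d}\|^2}}$, of strictly larger variance (by at least $k^2/(2\pi)$). So verification reduces to distinguishing a zero-mean Gaussian of known variance from one of larger variance, given i.i.d.\ samples.

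For the test I would not merely use the empirical second moment $\tfrac1N\sum_i r_i^2$ (which works but is wasteful); I would exploit that under $H_1$ the residual is correlated with $\mathbf{a}_i$. The estimator $\widehat{\mathbf{d}}:=\tfrac{2\pi}{k^2N}\sum_i r_i\mathbf{a}_i$ is unbiased for $\mathbf{d}$ --- using only that the coordinates of $\mathbf{a}_i$ are i.i.d.\ of variance $k^2/(2\pi)$ and independent of $e_i$ --- and it is just the least-squares recovery of $\mathbf{x}$ from the samples, to be compared with $\mathbf{x}'$; output YES iff $\|\widehat{\mathbf{d}}\|<\tfrac12$. Under $H_0$, each coordinate of $\widehat{\mathbf{d}}$ is mean zero with variance $\Theta((M\alpha)^2/(k^2N))$, so $\|\widehat{\mathbf{d}}\|^2$ is a scaled $\chi^2_n$ concentrating around $\Theta(n(M\alpha)^2/(k^2N))$ with sub-exponential tails; under $H_1$, $\|\widehat{\mathbf{d}}\|\geq\|\mathbf{d}\|-\|\widehat{\mathbf{d}}-\mathbf{d}\|\geq 1-o(1)$ once $N$ is large enough (the self-noise $\|\widehat{\mathbf{d}}-\mathbf{d}\|$ contributes only $O(\|\mathbf{d}\|^2 n/N)$ to the second moment). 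The bound $\Delta(\Psi_\alpha,\Psi_\beta)\le 9(\alpha/\beta-1)$ of \cite[Claim 2.2]{R05} makes the statement ``$r_i$ behaves like $\Psi_{M\alpha}$'' and the threshold estimates quantitative, and a Chernoff estimate over the $N$ samples drives the error probability below $n^{-c}$.

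The step that needs genuine care --- and the place I expect the real difficulty --- is the sample count. The test above requires $N$ of order $n\,(M\alpha/k)^2$, up to polylogarithmic factors; this is essentially a Fisher-information barrier, since the per-sample ``signal'' distinguishing a wrong guess, $\Psi_{k\|\mathbf{d}\|}$, is buried in noise $\Psi_{M\alpha}$. This contrasts sharply with the $\mathtt{LWE}$ verification lemma of \cite{R05}, where a wrong guess forces the residual (essentially) uniform over $\mathbb{Z}_q$ and a handful of samples suffice; the continuous-Gaussian channel gives no such blow-up. One therefore has to check that $(M\alpha/k)^2=\mathrm{poly}(n)$ in the parameter regime at hand --- equivalently, that the eavesdropper's effective noise scale is not superpolynomially larger than the per-antenna channel scale --- or, for a broader regime, replace the least-squares statistic with a sharper one. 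Everything else (unbiasedness of $\widehat{\mathbf{d}}$, the $\chi^2$ concentration, the probability amplification) is routine.
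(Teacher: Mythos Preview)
Your core idea matches the paper exactly: form the residuals $r_i=y_i-\langle\mathbf{a}_i,\mathbf{x}'\rangle$, observe that under $\mathbf{x}=\mathbf{x}'$ they are i.i.d.\ $\Psi_{\alpha}$ (in the paper's normalization) while under $\mathbf{x}\neq\mathbf{x}'$ they are Gaussian with strictly larger parameter $\sqrt{\alpha^2+k^2\|\mathbf{x}-\mathbf{x}'\|^2}\geq\sqrt{\alpha^2+k^2}$, and run a variance test. The paper stops there and simply estimates the sample standard deviation; your least-squares estimator $\widehat{\mathbf{d}}$ is a legitimate alternative statistic, but it buys nothing here---both tests have the same Fisher-information bottleneck, since the signal-to-noise gap per sample is governed by $k^2/\alpha^2$ either way. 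So the elaboration is harmless but unnecessary.

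You are right to flag the sample-count issue, and in fact the paper does not resolve it either: it closes with the one-line assumption ``assuming that $k^2$ is non-negligible in $n$, then given an arbitrary number of samples within a polynomial factor of $n$, we can distinguish \ldots\ by estimating the sample standard deviation.'' That is, the paper simply \emph{assumes} the relevant ratio is $1/\mathrm{poly}(n)$ and does not revisit the constraint $(M\alpha/k)^2=\mathrm{poly}(n)$ that you correctly isolate. Your analysis is therefore at least as careful as the paper's; the ``difficulty'' you anticipate is real but is treated as a standing parameter assumption rather than something to be proved.
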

\begin{proof} Let $\xi$ be the distribution on $y-\mathbf{\mathbf{\left\langle a,x'\right\rangle }}$.
The same distribution can be obtained by sampling $e\sim\Psi_{\alpha}$
and outputting $e+\left\langle \mathbf{a,x-x'}\right\rangle $. In
the case $\mathbf{x=x'}$, this reduces to $e$, and the distribution
on $\xi$ is exactly $\Psi_{\alpha}$. In the case where $\mathbf{x\neq x'}$,
$\mathbf{\left\Vert x-x'\right\Vert }>1$ by the restriction on our
choices of $\mathbf{x}$. The inner product of $\left\langle \mathbf{a,x-x'}\right\rangle $
is Gaussian with zero mean and a standard deviation of at least $k/\sqrt{2\pi}$,
and the standard deviation on $e+\left\langle \mathbf{a,x-x'}\right\rangle $
must be at least $\sqrt{\alpha^{2}+k^{2}}/\sqrt{2\pi}$. We now must
distinguish between the random variables of $\Psi_{\alpha}$ and $\Psi_{\sqrt{\alpha^{2}+k^{2}}}$.

Assuming that $k^{2}$ is non-negligible in $n$, then given an arbitrary
number of samples within a polynomial factor of $n$, we can distinguish
between the two distributions with overwhelming probability by estimating
the sample standard deviation. \end{proof}

The following lemma is used from \cite[Lem. 3.7]{R05}. In \cite{R05},
the lemma applies to the case of LWE over an integer field, this proof
is repeated in this appendix in order to demonstrate that it follows
for the case of MIMO channels, given Lemma 7. Specifically, this lemma
shows that if we can solve the MIMO problems with noise parameter
$\alpha$, then we can solve the problems given samples with noise
drawn according to $\Psi_{\beta}$for any $\beta\leq\alpha.$ 
\begin{lem}
\emph{\cite[Lem 3.7]{R05}. Error Handling for $\beta\leq\alpha$.
Assume we have access to an oracle which solves $\mathtt{MIMO-Search}_{M,\alpha,k}$
by using a polynomial number of samples. Then there exists an efficient
algorithm that given samples from $A_{x,\beta,k}$ for some (unknown)
$\beta\leq\alpha$, outputs $\mathbf{x}$ with overwhelming probability.}
\end{lem}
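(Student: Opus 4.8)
The plan is to transcribe Regev's argument for \cite[Lem.~3.7]{R05} to the real/MIMO setting: reduce the case of small, \emph{unknown} noise $\beta$ to the case of known noise $\alpha$ by \emph{adding} fresh Gaussian noise to the samples, guessing the amount to add over a polynomially fine grid, and then using the verifier of Lemma~2 to sift the resulting list of candidate vectors. The single structural observation that makes a finite grid suffice is that, to turn noise $\Psi_{M\beta}$ into noise $\Psi_{M\alpha}$, one must add independent noise $\Psi_{M\gamma}$ with $\gamma^2=\alpha^2-\beta^2$, and since $0\le\beta\le\alpha$ the target $\gamma^2$ always lies in the \emph{bounded} interval $[0,\alpha^2]$; the unknown $\beta$ only affects where in this interval the target sits.

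First I would fix the polynomial $N=N(n)$ of samples the $\mathtt{MIMO-Search}_{M,\alpha,k}$ oracle consumes and a polynomial $Z=Z(n)$ (to be chosen $\gg N$), and set grid points $\gamma_i^2=i\alpha^2/Z$ for $i=0,1,\dots,Z$. For each $i$: draw $N$ fresh samples $(\mathbf a_j,y_j)$ from $A_{x,\beta,k}$, draw independent $e'_j\sim\Psi_{M\gamma_i}$, and feed the pairs $(\mathbf a_j,\,y_j+e'_j)$ to the oracle, obtaining a candidate $\mathbf x'_i$ (rounded coordinatewise into $\mathcal X^n$). Because the $\mathbf a_j$ are untouched and each $e_j+e'_j$ is Gaussian of width $M\sqrt{\beta^2+\gamma_i^2}$, the transformed $N$-tuple is distributed exactly as $A_{x,\alpha',k}^{\otimes N}$ with $\alpha'=\sqrt{\beta^2+\gamma_i^2}$. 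For the index $i^\ast$ whose $\gamma_{i^\ast}^2$ is nearest to $\alpha^2-\beta^2$ we get $|\alpha'^2-\alpha^2|\le\alpha^2/(2Z)$, hence $\alpha'/\alpha\in[1-O(1/Z),1+O(1/Z)]$, so by \cite[Claim~2.2]{R05} we have $\Delta(\Psi_{M\alpha'},\Psi_{M\alpha})=O(1/Z)$ and therefore $\Delta\big(A_{x,\alpha',k}^{\otimes N},A_{x,\alpha,k}^{\otimes N}\big)\le N\cdot O(1/Z)$ by subadditivity of total variation over product distributions. Choosing $Z$ a large enough polynomial multiple of $N$ makes this distance at most $1/4$; since the oracle succeeds on genuine $A_{x,\alpha,k}$ samples with probability $1-\mathrm{negl}(n)\ge3/4$, on guess $i^\ast$ it returns the true $\mathbf x$ with probability at least $1/2$.

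Next I would verify and amplify. Lemma~2 applies with $\beta$ in place of $\alpha$ essentially verbatim — its distinguishing task, telling $\Psi_{M\beta}$ from $\Psi_{\sqrt{(M\beta)^2+k^2}}$, only becomes easier as $\beta$ shrinks, and $k$ is non-negligible — so with $\mathrm{poly}(n)$ further fresh samples from $A_{x,\beta,k}$ one can test, for each of the $Z$ candidates $\mathbf x'_i$, whether $\mathbf x'_i=\mathbf x$, erring (in either direction) with negligible probability. Thus one pass over all $Z$ guesses followed by verification outputs $\mathbf x$ with probability at least $1/2-\mathrm{negl}(n)$. Finally, repeat the whole procedure $t=\mathrm{poly}(n)$ (e.g. $t=n$) times with fresh samples throughout, outputting any candidate that passes verification: a union bound over the $Zt=\mathrm{poly}(n)$ verification calls controls the verification error, while the probability that no round produces $\mathbf x$ is at most $2^{-t}$, so the overall success probability is overwhelming. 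The total number of samples drawn from $A_{x,\beta,k}$ is $t\cdot Z\cdot(N+\mathrm{poly}(n))=\mathrm{poly}(n)$ and every step is polynomial-time, giving the claim.

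I expect the only real obstacle to be the point just addressed: the samples fed to the oracle on the good guess are not exactly distributed as $A_{x,\alpha,k}$ but only $O(N/Z)$-close in total variation, so a single oracle call need not succeed with overwhelming probability. This is harmless precisely because that distance can be pushed down to $1/\mathrm{poly}(n)$ (rather than merely $o(1)$) and because the problem comes equipped with the efficient, negligible-error verifier of Lemma~2, so the standard guess--check--repeat amplification recovers overwhelming success without assuming any robustness of the oracle itself. Care is only needed to give each oracle invocation independent fresh samples, so that the repetitions are genuinely independent.
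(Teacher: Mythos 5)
Your proposal is correct and follows essentially the same route as the paper's proof: grid the added noise variance over $[0,\alpha^2]$ in polynomially fine steps, add fresh Gaussian noise to shift $\beta$ to approximately $\alpha$, bound the distributional error via Regev's Claim 2.2 (the paper's Claim 5), and sift candidates with the Lemma~2 verifier, amplifying by repetition. If anything, your accounting of the total variation distance over all $N$ samples fed to the oracle is more explicit than the paper's, which asserts the per-sample bound $9n^{-2c}$ is ``negligible'' without spelling out the product-distribution union bound that you supply.
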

\begin{proof} Assume we have at most $n^{c}$ samples for some $c>0$.
Let $Z$ be the set of all integer multiples of $n^{-2c}\alpha^{2}$
between 0 and $\alpha^{2}$. For each $\gamma\in Z$, do the following
$n$ times. For each sample, add a small amount of noise sampled from
$\Psi_{\sqrt{\gamma}}$, which creates samples in the form $A_{x,\sqrt{\beta^{2}+\gamma},k}$.
Apply the oracle and recover a candidate $x'$. Use Lemma 2 and check
whether $x'=x$. If yes, output $x'$, otherwise continue.

We now show the correctness of this algorithm. By Lemma 2, a result
can be verified to be a correct solution of $\mathtt{\mbox{\ensuremath{\mathtt{MIMO-Search}}}}_{M,\alpha,k}$
with probability exponentially close to 1. Thus we must only show
that in one iteration of the algorithm, we output samples that are
close to $A_{x,\alpha,k}$. Consider the smallest $\gamma\in Z$ such
that $\gamma\geq\alpha^{2}-\beta^{2}$. Then $\gamma\leq\alpha^{2}-\beta^{2}+n^{-2c}\alpha^{2}$.
And 
\[
\alpha\leq\sqrt{\beta^{2}+\gamma}\leq\sqrt{\alpha^{2}+n^{-2c}\alpha^{2}}\leq\left(1+n^{-2c}\right)\alpha.
\]

And by Claim 5, $\Delta\left(\Psi_{\alpha},\Psi_{\sqrt{\beta^{2}+\gamma}}\right)\leq9n^{-2c}$,
which is negligible in $n$. \end{proof} Standard lattice problems
are formulated with coefficient vectors that span all integers. In
our definition, we have limited our constellation size. Regev in \cite{R05}
introduces a variant on $\mathtt{BDD}_{\mathcal{L}(\mathbf{A}),d}$,
which we designate as $\mathtt{BDD}_{\mathcal{L}(\mathbf{A}),d}^{(M)}$.
This problem is identical to the $\mathtt{BDD}_{\mathcal{L}(\mathbf{A}),d}$
problem, with the exception that the coefficient vectors of the solution
are reduced modulo $M$, for arbitrary $M$. Regev shows that if we
can solve this variant of the problem in polynomial time, there in
fact exists a polynomial time algorithm which solves $\mathtt{BDD}_{\mathcal{L}(\mathbf{A}),d}$
in the general case. Thus for further lemmas, we can ignore the effect
of the limited constellation size. 
\begin{lem}
\cite[Lem. 3.5]{R05}. Finding coefficients modulo $M$ is sufficient.
Given a lattice $\mathcal{L}(\mathbf{A})$, a number $d<\lambda_{1}(\mathcal{L}(\mathbf{A})/2$,
and an integer $M\geq2$, access to an oracle which solves $\mathtt{BDD}_{\mathcal{L}(\mathbf{A}),d}^{(M)}$,
there exists an efficient algorithm that solves $\mathtt{BDD}_{\mathcal{L}(\mathbf{A}),d}$. 
\end{lem}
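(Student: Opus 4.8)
The plan is to follow Regev and recover the coefficient vector $\mathbf{z}^{\star}\in\mathbb{Z}^{n}$ of the (unique) closest lattice point $\mathbf{A}\mathbf{z}^{\star}$ to the target $\mathbf{y}$ one base-$M$ ``digit'' at a time, stripping off the low-order digit with each oracle call while scaling the lattice up by a factor of $M$. First I would set $\mathbf{y}_{0}=\mathbf{y}$, $\mathbf{A}_{0}=\mathbf{A}$, and for $i=0,1,2,\dots$ present the lattice $\mathcal{L}(\mathbf{A}_{i})=\mathcal{L}(M^{i}\mathbf{A})$, via the basis $\mathbf{A}_{i}$, to the $\mathtt{BDD}^{(M)}$ oracle with target $\mathbf{y}_{i}$. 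Inductively $\mathbf{y}_{i}$ is within distance $d$ of $\mathcal{L}(\mathbf{A}_{i})$, with closest point $\mathbf{A}_{i}\mathbf{w}_{i}$ for a unique $\mathbf{w}_{i}\in\mathbb{Z}^{n}$, and — crucially — $d<\lambda_{1}(\mathcal{L}(\mathbf{A}))/2\le M^{i}\lambda_{1}(\mathcal{L}(\mathbf{A}))/2=\lambda_{1}(\mathcal{L}(\mathbf{A}_{i}))/2$, so this is again a legitimate $\mathtt{BDD}^{(M)}$ instance. The oracle returns the residue $\mathbf{w}_{i}\bmod M$; I lift it to its balanced representative $\mathbf{c}_{i}$ (so that the residual magnitudes will contract), set $\mathbf{y}_{i+1}=\mathbf{y}_{i}-\mathbf{A}_{i}\mathbf{c}_{i}$ and $\mathbf{A}_{i+1}=M\mathbf{A}_{i}$, and repeat. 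Writing $\mathbf{w}_{i}=M\mathbf{w}_{i+1}+\mathbf{c}_{i}$, a one-line identity gives $\mathbf{y}_{i+1}-\mathbf{A}_{i+1}\mathbf{w}_{i+1}=\mathbf{y}_{i}-\mathbf{A}_{i}\mathbf{w}_{i}$, so the bound $\le d$ propagates unchanged, and unwinding the recursion yields $\mathbf{z}^{\star}=\mathbf{w}_{0}=\sum_{i<T}M^{i}\mathbf{c}_{i}+M^{T}\mathbf{w}_{T}$; once $T$ is large enough that $\mathbf{w}_{T}=0$ I output $\mathbf{A}\mathbf{z}$ with $\mathbf{z}=\sum_{i<T}M^{i}\mathbf{c}_{i}$.

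It then remains to check that $T$ and all intermediate quantities are polynomial. Since $\|\mathbf{A}\mathbf{z}^{\star}\|\le\|\mathbf{y}\|+d$, we get $\|\mathbf{z}^{\star}\|_{\infty}\le\sigma_{\min}(\mathbf{A})^{-1}(\|\mathbf{y}\|+d)$, and under the paper's standing convention that the input numbers have polynomially bounded bit-length, both $\|\mathbf{y}\|$ and $1/\sigma_{\min}(\mathbf{A})$ are at most $2^{\mathrm{poly}(n)}$; hence $\|\mathbf{z}^{\star}\|_{\infty}\le 2^{\mathrm{poly}(n)}$, and because the balanced-digit recursion forces $\|\mathbf{w}_{i+1}\|_{\infty}\le\|\mathbf{w}_{i}\|_{\infty}/M+\tfrac12$ so that $\|\mathbf{w}_{i}\|_{\infty}$ drops to $0$, it suffices to take $T=O(\log_{M}\|\mathbf{z}^{\star}\|_{\infty})=\mathrm{poly}(n)$. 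The scaled bases $M^{i}\mathbf{A}$ have bit-length $O(i\log M)$ plus that of $\mathbf{A}$, which stays polynomial, so each of the $T$ oracle calls and the surrounding matrix--vector arithmetic is efficient. Finally, a candidate closest vector is trivially verifiable — one just tests $\|\mathbf{y}-\mathbf{A}\mathbf{z}\|\le d$, which under $d<\lambda_{1}/2$ pins $\mathbf{z}$ down uniquely — so by repeating each oracle call $O(\log n)$ times and keeping the first verified answer, the $\mathrm{poly}(n)$ calls succeed with overwhelming probability overall.

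The step I expect to be the main obstacle — or rather, the only point needing care — is this termination/complexity bookkeeping: a priori $\mathbf{z}^{\star}$ could be enormous, and the reduction is polynomial only because the bit-length convention forces $\|\mathbf{z}^{\star}\|$ to be at most singly exponential, so that the number of base-$M$ digits is polynomial; one also has to verify that the scaled bases do not blow up, which they do not. Everything else (propagation of the $\mathtt{BDD}$ promise under scaling, the fact that $\lambda_{1}$ only grows so the sub-instances stay strictly inside the unique-decoding radius, and the digit arithmetic modulo $M$) is routine. I would also remark that this lemma is invoked in this paper with the $\mathtt{BDD}^{(M)}$ oracle constructed in Lemma 7, which already handles arbitrary lattices, so applying it to the scaled lattices $\mathcal{L}(M^{i}\mathbf{A})$ is legitimate — and this is exactly what lets the subsequent lemmas disregard the bounded constellation size.
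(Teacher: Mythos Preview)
Your proposal is correct and matches the argument the paper relies on: the paper does not actually prove this lemma but simply cites \cite[Lem.~3.5]{R05}, and what you have written is precisely Regev's digit-extraction argument. The only cosmetic difference is that Regev (and hence implicitly the paper) phrases the iteration by keeping the lattice fixed and \emph{dividing the target} by $M$ at each step---so the new target $(\mathbf{y}-\mathbf{A}\mathbf{c}_i)/M$ lies within distance $d/M^{i}$ of $\mathcal{L}(\mathbf{A})$---whereas you keep the target fixed and \emph{multiply the basis} by $M$. These are the same reduction up to a global scaling, and your remark that the $\mathtt{BDD}^{(M)}$ oracle supplied by Lemma~7 works for arbitrary lattices (hence for $\mathcal{L}(M^{i}\mathbf{A})$) is exactly the justification needed for your variant; in Regev's phrasing this point is moot because the lattice never changes. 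Your termination bookkeeping is also fine and in fact slightly more explicit than Regev's.
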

The following lemma shows that when sufficient noise is added to a
discrete Gaussian variable, it behaves like a continuous one. This
establishes a formal notion of the structure of the lattice being
`statistically hidden' by the noise. This lemma is used to show that
the distribution constructed in the proof of the main theorem is negligibly
close to the distribution required by the $\mathtt{MIMO-Search}$
oracle. 
\begin{lem}
\cite[Cor. 3.10]{R05}. For a lattice $\mathcal{L}(\mathbf{A})$,
vectors $\mathbf{z,u}\in\mathbb{R}^{n}$ , and two reals $r,\alpha>0$.
Assume that $1/\sqrt{1/r^{2}+\left(P/\alpha\right)^{2}}\geq\eta_{\epsilon}(\mathbf{A})$
for some $\epsilon<\frac{1}{2}$. Then the distribution of $\left\langle \mathbf{z,v}\right\rangle +e$,
where $\mathbf{v}$ is distributed according to $D_{L+\mathbf{u},r}$,
the norm of $\mathbf{z}$ is constrained to $P$, and $e$ is a normal
variable with zero mean and standard deviation $\alpha/\sqrt{2\pi}$,
is within total variational distance $2\epsilon$ of a normal variable
with zero mean and standard deviation $\sqrt{(rP)^{2}+\alpha^{2}}/\sqrt{2\pi}$. 
\end{lem}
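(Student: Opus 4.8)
The plan is to reduce this to the $n$-dimensional smoothing statement already recalled above — in the asymmetric form of \cite[Claim 3.9]{R05} — by absorbing the scalar noise $e$ into the image of an auxiliary $n$-dimensional spherical Gaussian under the functional $\langle\mathbf{z},\cdot\rangle$, and then using the fact that total variational distance cannot increase under a deterministic map.

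Concretely, first I would fix $\mathbf{z}$ with $\|\mathbf{z}\|=P$, set $s=\alpha/P$, and draw $\mathbf{w}$ from $\Psi_{s}$ on $\mathbb{R}^{n}$ (coordinatewise i.i.d., hence spherical), independently of $\mathbf{v}\sim D_{L+\mathbf{u},r}$. Since the image of a spherical Gaussian of parameter $s$ under $\langle\mathbf{z},\cdot\rangle$ with $\|\mathbf{z}\|=P$ is a one-dimensional zero-mean Gaussian of parameter $sP=\alpha$, the scalar $\langle\mathbf{z},\mathbf{w}\rangle$ has exactly the law $\Psi_{\alpha}$ of $e$ and is independent of $\mathbf{v}$; hence $\langle\mathbf{z},\mathbf{v}\rangle+e$ is distributed identically to $\langle\mathbf{z},\mathbf{v}+\mathbf{w}\rangle$. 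Next I would apply \cite[Claim 3.9]{R05} to $\mathbf{v}+\mathbf{w}$: the hypothesis of that claim is precisely that $(1/r^{2}+1/s^{2})^{-1/2}=(1/r^{2}+(P/\alpha)^{2})^{-1/2}\geq\eta_{\epsilon}(\mathbf{A})$, so the distribution of $\mathbf{v}+\mathbf{w}$ on $\mathbb{R}^{n}$ is within total variational distance $2\epsilon$ of the zero-mean spherical Gaussian $\Psi_{\sqrt{r^{2}+s^{2}}}$ — in particular the coset shift $\mathbf{u}$ is washed out, which is the ``structure statistically hidden by the noise'' phenomenon referred to above. Pushing both laws forward through the fixed linear functional $\mathbf{q}\mapsto\langle\mathbf{z},\mathbf{q}\rangle$ can only decrease total variational distance, so $\langle\mathbf{z},\mathbf{v}+\mathbf{w}\rangle$ is within $2\epsilon$ of $\langle\mathbf{z},\mathbf{g}\rangle$ for $\mathbf{g}\sim\Psi_{\sqrt{r^{2}+s^{2}}}$; and $\langle\mathbf{z},\mathbf{g}\rangle$ is a zero-mean Gaussian of parameter $\|\mathbf{z}\|\sqrt{r^{2}+s^{2}}=P\sqrt{r^{2}+\alpha^{2}/P^{2}}=\sqrt{(rP)^{2}+\alpha^{2}}$, i.e.\ of standard deviation $\sqrt{(rP)^{2}+\alpha^{2}}/\sqrt{2\pi}$, which is the asserted limiting distribution.

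The only genuinely nontrivial ingredient is \cite[Claim 3.9]{R05} itself (the asymmetric version of the smoothing fact behind Claim 6 above): that convolving $D_{L+\mathbf{u},r}$ with a continuous Gaussian of parameter $s$ produces a distribution $2\epsilon$-close to a continuous Gaussian whenever $(1/r^{2}+1/s^{2})^{-1/2}\geq\eta_{\epsilon}(\mathbf{A})$, with no residual dependence on the coset representative $\mathbf{u}$. That step uses the smoothing parameter essentially (via Poisson summation over $\mathcal{L}^{*}(\mathbf{A})$) and is where the hypothesis is consumed; everything else — the choice $s=\alpha/P$, the one-dimensional-marginal identities for spherical Gaussians, and the data-processing inequality for $\Delta(\cdot,\cdot)$ — is routine bookkeeping. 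The main things to keep straight are that $\mathbf{w}$ is taken independent of $\mathbf{v}$ (so that $\langle\mathbf{z},\mathbf{w}\rangle\sim\Psi_{\alpha}$ even conditionally on $\mathbf{v}$) and that $\mathbf{z}$ is an arbitrary but fixed vector of norm exactly $P$; I do not expect any obstacle beyond this parameter tracking.
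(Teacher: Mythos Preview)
Your argument is correct and is exactly the standard derivation of \cite[Cor.~3.10]{R05} from \cite[Claim~3.9]{R05}: lift the scalar noise $e$ to an $n$-dimensional spherical Gaussian via $s=\alpha/P$, apply the smoothing claim to $\mathbf{v}+\mathbf{w}$, then push forward through $\langle\mathbf{z},\cdot\rangle$ using the data-processing inequality. The paper does not give its own proof of this lemma at all---it simply imports the statement by citation---so your proposal supplies precisely the detail the paper omits, and by the same route Regev uses.
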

We now state the following claim about a polynomial-time lattice basis
reduction algorithm, the LLL algorithm, given in $\cite{LLL82}$ and
improved by Schnorr in $\cite{S87}$. 
\begin{claim}
\cite{LLL82,S87}. For some $\mathcal{L}\left(\mathbf{A}\right)$,
we apply Schnorr's variant of the LLL algorithm, and obtain a new,
shorter basis for this lattice $\mathbf{\widetilde{A}}$. The norms
of the new basis vectors in this lattice, given by $\sigma_{1},...,\sigma_{n}$,
are bounded by: 
\begin{equation}
\sigma_{n}<2^{n\log\log n/\log n}\lambda_{n}
\end{equation}

and 
\begin{equation}
\sigma_{1}<2^{n\log\log n/\log n}\lambda_{1}
\end{equation}
\end{claim}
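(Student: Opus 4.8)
The plan is to obtain the claimed blow-up factor $2^{n\log\log n/\log n}=2^{o(n)}$ by combining the ordinary LLL algorithm with Schnorr's hierarchy of block reductions, tuning the block size to $\Theta(\log n)$ so that the per-block work stays polynomial. First I would recall the classical guarantee of \cite{LLL82}: in polynomial time one gets a basis $\mathbf{b}_1,\dots,\mathbf{b}_n$ (with Gram--Schmidt vectors $\tilde{\mathbf{b}}_i$) satisfying $\|\mathbf{b}_i\|\le 2^{(n-1)/2}\,\lambda_i(\mathcal{L})$ for every $i$. This already has the \emph{shape} of the claim --- one multiplicative factor relating each basis vector to the corresponding successive minimum --- but with the far-too-weak constant $2^{(n-1)/2}=2^{\Theta(n)}$, so the LLL output serves only as a warm start.

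The heart of the argument is to strengthen the LLL local condition (pairwise reduction of consecutive Gram--Schmidt vectors) to Schnorr's block condition: slide a window of $2k$ consecutive projected basis vectors across the basis and, inside each window, invoke an \emph{exact} shortest-vector / HKZ subroutine, iterating until no window improves. When $k=\Theta(\log n)$ the subroutine acts on a lattice of dimension $2k=\Theta(\log n)$, so with a singly-exponential SVP algorithm (e.g.\ sieving) it costs $2^{O(k)}=\mathrm{poly}(n)$; together with a block-potential argument bounding the number of window updates polynomially, the whole reduction runs in polynomial time. Schnorr's analysis then yields $\|\mathbf{b}_1\|\le \beta_k^{\,n/k}\,\lambda_1(\mathcal{L})$ with $\beta_k=\mathrm{poly}(k)$ (essentially a computable bound on the $2k$-dimensional Hermite constant); choosing the constant in $k=\Theta(\log n)$ appropriately gives $\beta_k^{\,n/k}=2^{O((\log k)\,n/k)}\le 2^{\,n\log\log n/\log n}$, which is the stated factor for $\sigma_1$. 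Applying the same reduction to the projected sublattices $\pi_i(\mathcal{L})$ (equivalently, a transference argument on the dual $\mathcal{L}^*$) upgrades this to $\|\mathbf{b}_i\|\le 2^{\,n\log\log n/\log n}\,\lambda_i(\mathcal{L})$ for all $i$; relabelling the output vectors in nondecreasing order of norm as $\sigma_1\le\cdots\le\sigma_n$ yields both displayed inequalities.

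The main obstacle is the bookkeeping inside Schnorr's recursion: one must verify that the per-window blow-ups \emph{telescope} to the single global factor $2^{n\log\log n/\log n}$ rather than compounding extra $2^{o(n)}$ (or even $\mathrm{poly}(n)$) slack, and that $k=\Theta(\log n)$ really lands the exponent at or below $n\log\log n/\log n$ with coefficient $1$, which is what the downstream lemmas of the appendix consume. A secondary subtlety is the all-minima bound on $\sigma_n$: the clean ``$\|\mathbf{b}_i\|\le 2^{(n-1)/2}\lambda_i$ for every $i$'' property of LLL does not transfer verbatim to block-reduced bases, so one must either iterate reduction on the orthogonal projections of $\mathcal{L}$ or invoke a transference theorem and then check the constant is not degraded. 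Since the statement is quoted from \cite{LLL82,S87}, in the final write-up I would defer the termination proof and the exact form of $\beta_k$ to those references and spell out only the scaling $k=\Theta(\log n)$ that produces the particular exponent $n\log\log n/\log n$ used in the hardness bound.
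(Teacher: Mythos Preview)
The paper does not prove this claim at all: it is stated purely as a citation to \cite{LLL82,S87} and then used downstream. Your sketch---run Schnorr's block/semi-block reduction with block size $k=\Theta(\log n)$ so that the per-block SVP calls cost $2^{O(k)}=\mathrm{poly}(n)$ and the resulting approximation factor is $\beta_k^{\,n/k}=2^{O((\log k)\,n/k)}=2^{n\log\log n/\log n}$---is exactly the standard derivation behind the cited bound, and your plan to defer the termination and constant-tracking to \cite{S87} matches what the paper itself does (namely, nothing beyond the citation).

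One small caveat: your outline for the $\sigma_n$ bound via ``apply the same reduction to the projected sublattices $\pi_i(\mathcal L)$'' is morally right but a bit loose, since applying block reduction separately on each $\pi_i(\mathcal L)$ does not in general return a \emph{single} basis of $\mathcal L$; the clean statement that a semi-block-$2k$-reduced basis satisfies $\|\mathbf b_i\|\le\gamma^{\,n/k}\lambda_i(\mathcal L)$ for all $i$ simultaneously is proved directly in \cite{S87} from the Korkine--Zolotarev structure of the windows, so you should cite that rather than invoke a dual/transference argument. This is cosmetic and does not affect correctness.
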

The rest of this proof proceeds with the following assumption:

\begin{equation}
1\leq\frac{\lambda_{n}}{\lambda_{1}}<2^{n\log\log n/\log n}
\end{equation}

The lower bound is evident from Minkowski's bound, and the upper bound
comes from the fact that we have reduced the basis by applying the
LLL algorithm. While no such upper bound would exist on an arbitrary
lattice, were this ratio to be bigger than this bound, then the LLL
algorithm would have returned exactly the shortest basis and we would
have already exactly solved the $\mathtt{GapSVP}$ and $\mathtt{SIVP}$
problems.

\subsection{Reducing MIMO Decoding to Standard Lattice Problems}

We now begin the main procedure of the reduction. We begin by taking
the basis and applying the LLL algorithm. 
\begin{lem}
We start with an arbitrary lattice $\mathcal{L}(\mathbf{A})$, and
apply the LLL algorithm to the basis $\mathbf{A}$. We now the use
procedure given in \cite{GPV08} and Schnorr's variant of LLL to get
a distribution $D_{L,r}$, for some $r>2^{n\log\log n/\log n}\lambda_{n}$ 
\end{lem}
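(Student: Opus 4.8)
The plan is to read this lemma as the assembly of two tools already in hand — Schnorr's variant of the LLL algorithm (\cite{LLL82,S87}), which yields a reduced basis whose longest vector has length $\sigma_n<2^{n\log\log n/\log n}\lambda_n$, and the polynomial-time discrete Gaussian sampler of Claim~1 (\cite[Theorem 4.1]{GPV08}) — the one genuine subtlety being how to pick the width $r$ without knowing $\lambda_n$.

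First I would run Schnorr's LLL on the input basis $\mathbf{A}$, obtaining a reduced basis $\widetilde{\mathbf{A}}$ whose longest vector has length $\sigma_n$; by the cited bound $\sigma_n<2^{n\log\log n/\log n}\lambda_n$, and since the $n$ columns of $\widetilde{\mathbf{A}}$ are $n$ linearly independent lattice vectors of length at most $\sigma_n$ we also have $\lambda_n\le\sigma_n$. Thus $\sigma_n$ is an efficiently computable quantity that traps $\lambda_n$ from above by exactly the factor in the statement and trivially from below. Next I would fix a superlogarithmic function $\omega(\sqrt{\log n})$ (the one in Claim~1) and set $r:=\sigma_n\cdot\omega(\sqrt{\log n})$. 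Then $r\ge\lambda_n\cdot\omega(\sqrt{\log n})$, so the hypothesis of Claim~1 is met, and running that sampler on $\mathcal{L}(\mathbf{A})$ (with $\widetilde{\mathbf{A}}$ as the basis that makes it run in polynomial time) produces, in probabilistic polynomial time, samples from a distribution within negligible total variational distance of $D_{\mathcal{L}(\mathbf{A}),r}$; this negligible discrepancy is harmless and is simply carried into the error budget of the later steps that consume the samples. Finally I would record the quantitative conclusion: up to the fixed $\omega(\sqrt{\log n})$ factor, $r$ lies between $\lambda_n$ and $2^{n\log\log n/\log n}\lambda_n$, and moreover $r>\sqrt{2}\,\eta_\epsilon(\mathcal{L}(\mathbf{A}))$ for the constant $\epsilon<1/2$ used downstream, because the smoothing bound $\cite[Lemma 3.3]{MR04}$ gives $\eta_\epsilon(\mathcal{L})\le\sqrt{\omega(\log n)}\,\lambda_n\le\sqrt{\omega(\log n)}\,\sigma_n$ — exactly the form of starting width that Lemma~7 and the iterative $\mathtt{BDD}$/$\mathtt{DGS}$ steps require. (If the literal inequality $r>2^{n\log\log n/\log n}\lambda_n$ is wanted, it suffices to scale $r$ up by that factor, which is legitimate since $\sigma_n\ge\lambda_n$ and only multiplies the number of downstream iterations by $O(n\log\log n/\log n)$, still polynomial.)

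The main — and essentially only — obstacle is the apparent circularity in choosing $r$: the sampler needs $r$ above a threshold that depends on $\lambda_n$, yet computing $\lambda_n$ is itself a hard lattice problem. This is exactly why LLL is applied before invoking the sampler: the longest reduced-basis vector $\sigma_n$ is an efficiently computable surrogate that brackets $\lambda_n$ within the factor $2^{n\log\log n/\log n}$, so taking $r$ proportional to $\sigma_n$ meets the sampler's lower-bound requirement while keeping $r$ close enough to $\lambda_n$ that the subsequent shrinking of the Gaussian width terminates after only $\mathrm{poly}(n)$ rounds. A minor bookkeeping point, already anticipated in the displayed bound on $\lambda_n/\lambda_1$, is the degenerate case where that ratio exceeds $2^{n\log\log n/\log n}$: there LLL would already have returned a shortest basis, so $\mathtt{GapSVP}$ and $\mathtt{SIVP}$ are solved outright and nothing further is needed.
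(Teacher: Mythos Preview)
Your proposal is correct and matches the paper's intent: the paper gives no explicit proof of this lemma, treating it as an immediate consequence of Claim~1 (the GPV sampler) and Claim~6 (Schnorr's LLL bound), and your argument is precisely the natural way to assemble those two ingredients. Your discussion of the circularity in choosing $r$ and the use of $\sigma_n$ as an efficiently computable surrogate for $\lambda_n$ fills in detail the paper leaves implicit.
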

The following lemma is the main mathematical contribution of this
work, and allows the MIMO oracle to be used in place of the LWE oracle
in the framework of Regev's reduction for LWE. From Figure 3, this
replacement implies that an efficient solution of the MIMO decoding
problem would also provide an efficient solution for standard lattice
problems. We briefly restate notation defined in Section II.C: $A_{M,\alpha,k}$
is the distribution of channel gains and the received signal from
a single antenna in a MIMO system, and $D_{A,\alpha}$ is the discrete
Gaussian distribution drawn over lattice $A$ with variance proportional
to $\alpha$.
\begin{lem}
\emph{$\mathtt{MIMO-Search}_{M,\alpha,k}$ to $\mathtt{BDD}_{L,r}$.
Let $\alpha>0$, $k>0$, $m>0$, and $M>m\,2^{n\log\log n/\log n}$.
Assume we have access to an oracle that, for all $\beta\leq\alpha$,
finds $\mathbf{x}$ given a polynomial number of samples from $A_{M,\beta,k}$
(without knowing $\beta$). Then there exists an efficient algorithm
that given an $n$-dimensional lattice $\mathcal{L}(\mathbf{A})$,
a number $r>\sqrt{2}\eta(\mathcal{L}(\mathbf{A}))$, and a target
point $\mathbf{y}$ within distance $d<M\sigma\alpha/\left(k^{2}r\sqrt{2}\right)$
of $\mathcal{L}(\mathbf{A})$, where $\sigma$ is the smallest eigenvalue
of $\mathbf{A}^{T}\mathbf{A}$, returns the unique $\mathbf{x}\in\mathcal{L}(\mathbf{A})$
closest to $\mathbf{y}$ with overwhelming probability.}
\end{lem}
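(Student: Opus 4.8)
The plan is to run Regev's $\mathtt{LWE}$-to-$\mathtt{BDD}$ reduction (\cite[Lem.~3.4]{R05}) in reverse: from a single $\mathtt{BDD}$ instance on $\mathcal{L}(\mathbf{A})$ we manufacture a polynomial stream of fake ``MIMO observations'' whose hidden message is (an invertible linear image of) the coefficient vector $\mathbf{c}$ of the closest lattice point, feed them to the MIMO oracle, and read off the answer. Two earlier reductions make this clean: by Lemma 4 it suffices to recover $\mathbf{c}$ modulo $M$, so the hidden message genuinely lies in $[0,M)^{n}=\mathcal{X}^{n}$, which is exactly what the oracle expects; and Lemma 1 (with the $\beta\le\alpha$ error-handling of Lemma 3 layered on as there) means it is enough to produce samples within negligible total-variation distance of $D_{M,\beta,k}$ for some unknown $\beta\le\alpha$, rather than of the exact $A_{x,\alpha,k}$. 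Write $\mathbf{y}=\mathbf{A}\mathbf{c}+\mathbf{e}'$ with $\mathbf{c}\in\mathbb{Z}^{n}$ and $\|\mathbf{e}'\|\le d$; since $d<\lambda_{1}(\mathcal{L}(\mathbf{A}))/2$ this $\mathbf{c}$ is unique.

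To generate one sample I would draw $\mathbf{a}\sim D_{\mathcal{L}(\mathbf{A}),r}$ — via the Gaussian sampler of Claim 1 when $r>\lambda_{n}\cdot\omega(\sqrt{\log n})$, or via the $\mathtt{DGS}$ samples supplied by the outer iteration (Lemmas 6 and 8) otherwise — compute the known integer vector $\mathbf{s}=\mathbf{A}^{-1}\mathbf{a}$, and output
\[
\Bigl(\tfrac{k}{r}\mathbf{a},\ \ \tfrac{k}{rM}\,\mathbf{s}^{T}\mathbf{A}^{T}\mathbf{A}^{-1}\mathbf{y}\ +\ e\Bigr),\qquad e\sim\Psi_{\beta'},
\]
with $\beta'$ a small slack parameter (the scalings follow the literal definition of $D_{M,\alpha,k}$; Regev organizes the same computation through the dual lattice). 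Using $\mathbf{A}^{T}\mathbf{A}^{-1}\mathbf{y}=\mathbf{A}^{T}\mathbf{c}+\mathbf{A}^{T}\mathbf{A}^{-1}\mathbf{e}'$ and $\mathbf{s}^{T}\mathbf{A}^{T}=\mathbf{a}^{T}$, the second coordinate equals $\tfrac1M\langle\tfrac{k}{r}\mathbf{a},\mathbf{c}\rangle+\hat{e}$ with effective noise $\hat{e}=\langle\mathbf{z},\mathbf{a}\rangle+e$, where $\mathbf{z}=\tfrac{k}{rM}\mathbf{A}^{-1}\mathbf{e}'$. So the fakes are of the form $D_{M,\beta,k}$ with hidden vector $\mathbf{c}\bmod M$, provided $\hat{e}$ is distributed like $\Psi_{\beta}$ for some $\beta\le\alpha$.

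The crux is controlling $\hat{e}$. It is a noisy inner product of the fixed vector $\mathbf{z}$ — of norm at most $\tfrac{kd}{rM\sqrt{\sigma}}$, since the spectral norm of $\mathbf{A}^{-1}$ is $1/\sqrt{\sigma}$ — with the discrete Gaussian $\mathbf{a}\sim D_{\mathcal{L}(\mathbf{A}),r}$, plus the independent Gaussian $e$. This is precisely the hypothesis of Lemma 5 (Regev's Cor.~3.10): because $r>\sqrt{2}\,\eta_{\epsilon}(\mathcal{L}(\mathbf{A}))$ and $\|\mathbf{z}\|$ is tiny, the smoothing condition there is met, so $\hat{e}$ is within negligible distance of a Gaussian of parameter $\sqrt{(r\|\mathbf{z}\|)^{2}+\beta'^{2}}$; the distance bound $d<M\sigma\alpha/(k^{2}r\sqrt{2})$ is calibrated exactly so this parameter stays $\le\alpha$ (after the channel- and transmit-vector normalizations), and the $r\|\mathbf{z}\|$ factor is what lets Lemma 5 absorb the correlation between the fake channel vector and the fake noise. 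Having thus produced samples negligibly close to $D_{M,\beta,k}$ for some $\beta\le\alpha$, I would call Lemma 1 to densify the lattice-supported channel vectors into the i.i.d.-Gaussian form the oracle needs and apply the $\beta\le\alpha$ handling of Lemma 3; the oracle returns $\mathbf{c}\bmod M$, Lemma 4 upgrades this to the full closest vector, and re-running with candidate verification (Lemma 2) boosts success to overwhelming probability.

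I expect the real work — and the main obstacle — to be the two-sided squeeze on $r$. The channel part of every fake sample must be statistically indistinguishable from a genuine i.i.d.-Gaussian MIMO channel, which forces $r$ comfortably above the smoothing parameter and leans on Lemma 1 to erase the residual lattice structure; but the effective-noise parameter $r\|\mathbf{z}\|\approx kd/(M\sqrt{\sigma})$ must stay below $\alpha$, which forces $r$ and $d$ down and is where $\sigma$, $M$ and $\alpha$ all enter. Threading this needle — and, for the outer theorem, checking that the constellation-size hypothesis $M>m\,2^{n\log\log n/\log n}$ leaves a usable gap between the $\mathtt{BDD}$ radius this lemma can reach and the radius the subsequent $\mathtt{DGS}$ step (Lemma 8, which needs $\sqrt{2}d>\sqrt{n}$) demands, given the LLL-reduced-basis bound (Claim 6) — is the delicate part; the rest is bookkeeping with total-variation distances.
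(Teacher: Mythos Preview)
Your proposal is correct and follows essentially the same approach as the paper: both manufacture MIMO samples from a single $\mathtt{BDD}$ instance by drawing discrete-Gaussian lattice vectors, scaling by $k/r$ to mimic the channel, taking the inner product with $\mathbf{A}^{-1}\mathbf{y}$, adding extra Gaussian noise of width $\alpha/\sqrt{2}$, and invoking Lemma~5 to show the effective noise is negligibly close to $\Psi_{\beta}$ for some $\beta\le\alpha$; Lemmas~1--4 then do exactly the cleanup you describe. The one organizational difference is that the paper samples $\mathbf{v}\sim D_{\mathcal{L}^{*}(\mathbf{A}),r}$ from the dual (consistent with Regev and with the $\mathtt{DGS}$ samples that Lemma~8 actually supplies in the outer iteration), whereas you sample from the primal; you flag this yourself, and since the MIMO channel vectors are real Gaussians rather than integers there is no structural obstruction to your variant---just be aware that plugging into the iterative step as written requires dual samples, so either switch to the dual or carry the primal/dual swap through Lemma~8 as well.
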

\begin{proof} We describe a procedure that, given $\mathbf{y}$,
outputs a polynomial number of samples from the distribution of $D_{x,\beta,k}$.
Then, using the MIMO-Search oracle returns the closest point $\mathbf{x=As}$.

First, using Claim 1, sample a vector $\mathbf{v}\in\mathcal{L}^{*}\left(\mathbf{A}\right)$
from $D_{L^{*},r}$. We now output 
\begin{equation}
\left(k\mathbf{v}/r,k\left\langle \mathbf{v},\mathbf{A}^{-1}\mathbf{y}\right\rangle /rM+ke/r\right)
\end{equation}

We note that the right-hand side is equal to 
\[
k\left\langle \mathbf{v},\mathbf{A}^{-1}\mathbf{y}\right\rangle /rM+ke/r=
\]
\[
\left\langle k\mathbf{v}/r,\mathbf{s}\right\rangle /M+k\left\langle \mathbf{v},\mathbf{A}^{-1}\mathbf{\delta}\right\rangle /rM+ke/r
\]

We are guaranteed that $\left\Vert \delta\right\Vert <r$. We first
note that the quantity $k\left\langle \mathbf{v},\mathbf{A}^{-1}\mathbf{\delta}\right\rangle /r$
is distributed according to $D_{kL^{*}/r,\xi}$, where $\xi<kd/\sigma$.
We note that $\sigma$ relates to the maximum `skew' that results
from inverting the matrix $\mathbf{A}$. Since $\mathbf{\mathbf{A}}^{-1}\mathbf{\delta}$
is fixed for all samples, this `skew' is also fixed, meaning the inner
product $\left\langle \mathbf{v},\mathbf{A}^{-1}\mathbf{\delta}\right\rangle $
is symmetric since the distribution on $\mathbf{v}$ is symmetric. 

We now add some noise $e$ from the distribution $\psi_{\alpha}$
so that the discrete nature of $D_{kL^{*}/r,\xi}$ is effectively
washed out and we are left with a distribution that is essentially
Gaussian as expected by the MIMO oracle. That is the distribution
of the noise is within negligible total variational distance of $\psi_{\beta}$
for $\beta=\sqrt{\xi^{2}+\alpha^{2}/2}<\alpha$. This condition will
be true precisely when the condition given in Lemma 5 is met. We can
see that this holds: 
\begin{equation}
1/\sqrt{(1/k^{2}+(\sqrt{2}\xi/M\alpha)^{2})}\ge k/\sqrt{2}>\eta_{\epsilon}(\mathcal{L}^{*}(k\mathbf{A}/r))
\end{equation}

Therefore, the distribution in equation 16 is the distribution expected
by the MIMO oracle and we can recover the vector $\mathbf{x}$. \end{proof}
In order to relate the MIMO problem to standard lattice problems,
we need the following lemma, given in \cite{R05}, which uses a quantum
computer. 
\begin{lem}
\cite[Lem. 3.14]{R05}. $\mathtt{BDD}_{L,\alpha,k}$ to $D_{L^{*},\sqrt{n}/\left(\sqrt{2}d\right)}$.
There exists an efficient quantum algorithm that, given any $n$-dimensional
lattice L, a number $d<\lambda_{1}(\mathbf{A})/2$, and an oracle
that solves $\mathtt{BDD}_{L,d}$, outputs a sample from $D_{L^{*},\sqrt{n}/\left(\sqrt{2}d\right)}$. 
\end{lem}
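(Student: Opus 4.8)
The plan is to appeal directly to Regev's quantum reduction: the statement concerns only a lattice $L$ and a $\mathtt{BDD}_{L,d}$ oracle, with no dependence whatsoever on the MIMO construction, so the proof of \cite[Lem.~3.14]{R05} carries over unchanged and we may simply cite it. For completeness I would include a sketch; its backbone is the fact that the quantum Fourier transform interchanges Gaussian measures on a lattice and on its dual, so that the only real work is preparing a Gaussian superposition over $L$, and this is where the $\mathtt{BDD}$ oracle is used.

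Concretely, I would first use the $\mathtt{BDD}_{L,d}$ oracle to prepare (an approximation to) the quantum state
\begin{equation}
|\psi_L\rangle \;\propto\; \sum_{\mathbf{y}\in L} \rho_{d/\sqrt{n}}(\mathbf{y})\,|\mathbf{y}\rangle ,
\end{equation}
where the width $d/\sqrt{n}$ is chosen just small enough that a Gaussian perturbation of that width lies within distance $d$ of its mean with overwhelming probability in $n$ --- this tail bound is exactly where the factor $\sqrt{n}$ enters. The preparation goes as follows: build a finely discretized, large-box superposition $\sum_{\mathbf{c}} g(\mathbf{c})\,|\mathbf{Bc}\rangle$ over lattice points written through their integer coefficient vectors; tensor it with a discretized Gaussian register $\sum_{\mathbf{e}}\rho_{d/\sqrt{n}}(\mathbf{e})\,|\mathbf{e}\rangle$; add the two registers to obtain states of the form $|\mathbf{y}+\mathbf{e}\rangle\,|\mathbf{e}\rangle$; apply the $\mathtt{BDD}$ oracle to the first register to recover $\mathbf{y}$; subtract it to reconstruct $\mathbf{e}$; and use that reconstructed copy to uncompute the second register. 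What is left is, up to discretization and truncation error, a Gaussian superposition supported on $L$.

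Next I would apply the quantum Fourier transform over $\mathbb{R}^n/L^*$ (suitably discretized). By Poisson summation the transform of $\sum_{\mathbf{y}\in L}\rho_s(\mathbf{y})\,|\mathbf{y}\rangle$ is proportional to $\sum_{\mathbf{x}\in L^*}\rho_{1/s}(\mathbf{x})\,|\mathbf{x}\rangle$, up to an error that is negligible once $s$ exceeds the relevant smoothing parameter; with $s$ of order $d/\sqrt{n}$ the reciprocal width works out to $r=\sqrt{n}/(\sqrt{2}d)$ once normalization conventions are tracked. The hypothesis $d<\lambda_1(L)/2$ is precisely what guarantees this: via the bound $\eta_{2^{-n}}(L^*)\le\sqrt{n}/\lambda_1(L)$ (Claim~2) it forces $r$ to exceed $\eta_{2^{-n}}(L^*)$, so the Poisson-summation error and the preparation errors are all controlled. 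Measuring the resulting state in the standard basis then yields a sample from a distribution within negligible total variational distance of $D_{L^*,r}$.

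The step I expect to be the main obstacle is the state preparation: it is the only place the $\mathtt{BDD}$ oracle is genuinely needed, and it requires a careful error analysis --- the state lives on a fine grid inside a large box rather than on all of $\mathbb{R}^n$, the oracle is invoked on a superposition (so one must argue both that the perturbation coordinates essentially never exceed $d$ and that the residual garbage from imperfect uncomputation carries negligible weight), and the Fourier transform itself is only approximate. Since all of these estimates are carried out in \cite{R05} and none of them touches the MIMO setting, I would cite that analysis rather than reproduce it.
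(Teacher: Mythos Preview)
Your proposal is correct and matches the paper's approach exactly: the paper does not prove this lemma at all but simply cites \cite[Lem.~3.14]{R05}, noting that it is independent of the MIMO construction. Your sketch of Regev's quantum step (prepare a narrow Gaussian superposition on $L$ via the $\mathtt{BDD}$ oracle, apply the quantum Fourier transform, and measure to obtain a dual-lattice Gaussian sample) is faithful to the cited argument and goes beyond what the paper itself provides.
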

In order for the reduction to hold, we must have $d>\sqrt{n}/\sqrt{2}$,
or the Gaussian distribution will actually grow in each iteration
of the procedure. We show this in the following claim. 
\begin{claim}
The inequality $M\sigma\alpha/\left(k^{2}r\sqrt{2}\right)>\sqrt{n}/\sqrt{2}$
is true given the constraints stated in Theorem 1.
\end{claim}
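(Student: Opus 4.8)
The plan is to verify the inequality by a short chain of substitutions driven by the two quantitative hypotheses of Theorem~1, together with the basis bounds of Claim~7 and the width bound of Lemma~6; this is exactly what licenses choosing, in Lemma~7, a bounding distance $d$ with $\sqrt{2}d>\sqrt{n}$, the condition Lemma~8 needs to narrow the Gaussian at each iteration. First I would clear the common factor $\sqrt{2}$, reducing the target to
\[
M\cdot\frac{\alpha}{k^{2}}\cdot\frac{\sigma}{r}>\sqrt{n}.
\]
Then I would insert the noise hypothesis in the form $\alpha/k^{2}>\sqrt{n}/m$ and the constellation hypothesis $M>m\,2^{n\log\log n/\log n}$, obtaining
\[
M\cdot\frac{\alpha}{k^{2}}\cdot\frac{\sigma}{r}
>\bigl(m\,2^{n\log\log n/\log n}\bigr)\cdot\frac{\sqrt{n}}{m}\cdot\frac{\sigma}{r}
=\sqrt{n}\cdot 2^{n\log\log n/\log n}\cdot\frac{\sigma}{r},
\]
so that it suffices to establish $2^{n\log\log n/\log n}\,\sigma\ge r$: the Gaussian width $r$ used by the sampler in Lemma~7 must not exceed a single sub-exponential factor times $\sigma$, the smallest eigenvalue of $\mathbf{A}^{T}\mathbf{A}$.

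For the upper bound on $r$ I would invoke Lemma~6: after Schnorr's LLL the width is initialized to roughly the length of the largest reduced basis vector, hence by Claim~7 to order $2^{n\log\log n/\log n}\lambda_{n}$, and because the Regev iteration only ever shrinks $r$ while the available bounding distance correspondingly grows, the first iteration is the worst case and it is enough to verify the inequality once. For the lower bound on $\sigma$ I would relate it back to the successive minima via the reduced basis: every basis vector is a lattice vector, so the shortest reduced basis vector has norm at least $\lambda_{1}$; an LLL/Schnorr-reduced basis is skewed by no more than a $2^{n\log\log n/\log n}$ factor; and by the standing assumption $\lambda_{n}<2^{n\log\log n/\log n}\lambda_{1}$ --- together with the escape hatch noted just before the claim, namely that if this ratio were any larger the LLL output would already have solved $\mathtt{GapSVP}$ and $\mathtt{SIVP}$ outright. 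Chaining these bounds yields $2^{n\log\log n/\log n}\,\sigma\ge r$ up to universal constants, which closes the argument.

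The step I expect to be the real obstacle is the lower bound on $\sigma$: making precise how ill-conditioned the reduced basis matrix can be relative to $\lambda_{1},\dots,\lambda_{n}$, keeping track of the lattice scaling implicit in the Regev iteration (so that $\sigma$, a squared singular value, and $r$ and the $\lambda_{i}$, which are lengths, are compared consistently), and then confirming that all the accumulated sub-exponential slack --- from Schnorr's reduction factor, from the basis skew, from the width initialization in Lemma~6, and from the assumption on $\lambda_{n}/\lambda_{1}$ --- still collapses into the single factor $M/m>2^{n\log\log n/\log n}$ that the constellation-size hypothesis supplies. Everything else is bookkeeping: the two substitutions from Theorem~1, the Claim~7 bounds, and the monotonicity of $r$ along the iteration.
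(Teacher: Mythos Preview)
Your proposal is correct and follows essentially the same route as the paper: clear the $\sqrt{2}$, apply the noise hypothesis $m\alpha/k^{2}>\sqrt{n}$ to reduce the target to $M\sigma/(mr)>1$, apply the constellation hypothesis $M/m>2^{n\log\log n/\log n}$, and then bound $\sigma/r$ below by $\lambda_{1}/\lambda_{n}$ via Claim~6 and Lemma~6, finishing with the standing assumption $\lambda_{n}/\lambda_{1}<2^{n\log\log n/\log n}$. The paper's proof is a one-line chain of these same substitutions; your identification of the $\sigma$ lower bound as the delicate step is apt, since the paper simply asserts $\sigma/r>\lambda_{1}/\lambda_{n}$ by citing Claim~6 and Lemma~7 without disentangling the eigenvalue $\sigma$ from the basis-vector norms $\sigma_{i}$.
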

\begin{proof} We first set the constraint that $m\alpha/k^{2}>\sqrt{n}$,
and thus we only require that $M\sigma/mr>1$. We can see this is
true: 
\begin{equation}
\frac{M\sigma}{mr}>\frac{2^{n\log\log n/\log n}\sigma}{r}>\frac{2^{n\log\log n/\log n}\lambda_{1}}{\lambda_{n}}>1
\end{equation}
Here, the first step simply applies the bound given on $M/m$ in Theorem
1, and the second step follows from applying the bounds on $r$ and
$\sigma$ from Lemma 7 and Claim 6. The final step follows from applying
the bound in equation 13 on the quantity $\lambda_{n}/\lambda_{1}$.
\end{proof} Finally, we require the following lemma from \cite[Sec. 3.3]{R05},
which uses both the $\mathtt{BDD}$ and $\mathtt{DGS}$ oracles iteratively
to solve standard lattice problems. By setting $d=M\sigma\alpha/\left(k^{2}r\sqrt{2}\right)$,
we can iterate between the BDD and the DGS oracles, shrinking the
Gaussian distribution with each step, to a limit, and the stated standard
lattice problems. 
\begin{lem}
$\mathtt{DGS}_{L,\sqrt{n}/\left(\sqrt{2}d\right)}$ to standard lattice
problems. For $n$-dimensional lattice $\mathcal{L}(\mathbf{A})$,
$\alpha>0$,\textup{ $m>0$}, $k\in\mathbb{R}$ such that $m\alpha/k^{2}>\sqrt{n}$,
and $M>m\,2^{n\log\log n/\log n}$. Given an oracle which solves $\mathtt{BDD}_{L^{*},d}$
and $\mathtt{DGS}_{L,\sqrt{n}/\left(\sqrt{2}d\right)}$, then there
exists an efficient algorithm that given an $n$-dimensional lattice
$\mathcal{L}(\mathbf{A})$, solves the problems $\mathtt{GapSVP}_{n/\alpha}$
and $\mathtt{SIVP}_{n/\alpha}$. 
\end{lem}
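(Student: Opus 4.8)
The plan is to reuse, essentially verbatim, the iterative reduction of Regev \cite[Sec.~3.3]{R05}, with the MIMO-based bounded-distance decoder of Lemma~7 playing the role of Regev's $\mathtt{LWE}$-based one and Lemma~6 supplying the wide Gaussian samples that prime the iteration. Throughout, write $d_i = M\sigma\alpha/(k^{2}r_{i}\sqrt{2})$ for the bounding distance that Lemma~7 extracts from width-$r_i$ samples on the dual lattice. First I would bootstrap: by Lemma~6 (LLL applied to $\mathbf{A}$, followed by the \cite{GPV08} sampler of Claim~1) we obtain a polynomial number of samples from $D_{\mathcal{L}^{*}(\mathbf{A}),r_{0}}$ with $r_{0}=\Theta(2^{n\log\log n/\log n}\lambda_{n})$, which sits above $\sqrt{2}\,\eta_{\epsilon}$, so the sampler is efficient.

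Next comes the iterative step. At stage $i$ I would feed the width-$r_i$ dual samples into the Lemma~7 oracle to obtain a solver for $\mathtt{BDD}_{\mathcal{L}(\mathbf{A}),d_i}$, then feed that solver into Lemma~8 (\cite[Lem.~3.14]{R05}) to obtain a fresh polynomial number of samples from $D_{\mathcal{L}^{*}(\mathbf{A}),r_{i+1}}$ with $r_{i+1}=\sqrt{n}/(\sqrt{2}\,d_i)=c\,r_i$, where $c=\sqrt{n}\,k^{2}/(M\sigma\alpha)$. The progress inequality is exactly Claim~7, $d_i>\sqrt{n}/\sqrt{2}$, which rearranges to $c<1$; the hypotheses $M>m\,2^{n\log\log n/\log n}$ and $m\alpha/k^{2}>\sqrt{n}$ together with the LLL/Minkowski sandwich $1\le\lambda_{n}/\lambda_{1}<2^{n\log\log n/\log n}$ of (13) are precisely what force $c<1$. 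Hence after $O(\log(r_{0}/\eta_{\epsilon}))=\mathrm{poly}(n)$ iterations the width has been driven to $\Theta(\sqrt{2}\,\eta_{\epsilon}(\mathcal{L}^{*}(\mathbf{A})))$; one cannot push further, since below the smoothing parameter the hypothesis of Lemma~5 fails and the discrete structure is no longer washed out. As in \cite{R05}, care must also be taken that each $d_i$ stays below $\lambda_{1}/2$ so that $\mathtt{BDD}$ has a unique answer.

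Finally I would extract the lattice answer. Once we hold polynomially many samples from $D_{\mathcal{L}(\mathbf{A}),r}$ with $r$ within a constant factor of $\eta_{\epsilon}$ (one further dual step returns us to the primal), the statistics of these samples reveal the short geometry of $\mathcal{L}(\mathbf{A})$: by Claims~2--3 (from \cite{MR04}), $\eta_{\epsilon}$ is sandwiched between $\sqrt{n}/\lambda_{1}(\mathcal{L}^{*})$ and $\sqrt{\omega(\log n)}\,\lambda_{n}$, so the typical norm of the samples pins $\lambda_{n}$ down to a $\sqrt{n}$-factor, yielding $n$ linearly independent lattice vectors of the right length and hence solving $\mathtt{SIVP}_{n/\alpha}$, while the mere feasibility of sampling at this width certifies $\lambda_{1}$ against the threshold $d$, solving $\mathtt{GapSVP}_{n/\alpha}$; the approximation factor $n/\alpha$ is the ratio of the achievable width (which carries the $1/\alpha$ inherited from $d_i$) to the successive minima. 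The classical-only variant for $\mathtt{GapSVP}_{n/\alpha}$ follows by replacing the quantum step of Lemma~8 with Peikert's classical reduction \cite{P09}, available since the analogue of the modulus, $M$, is exponentially large.

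The main obstacle — the only place needing real work beyond bookkeeping — is making $c<1$ and $d_i\in(\sqrt{n}/\sqrt{2},\ \lambda_{1}/2)$ hold \emph{uniformly over all iterations}. This amounts to controlling $\sigma$, the smallest eigenvalue of $\mathbf{A}^{T}\mathbf{A}$ (equivalently the skew that inverting the Gaussian channel introduces in Lemma~7), and showing it never degrades enough to stall the shrink or push the bounding distance out of range; it is for exactly this reason that the constellation bound $M>m\,2^{n\log\log n/\log n}$ is imposed, since it is just large enough to absorb the worst-case $2^{n\log\log n/\log n}$ blow-up of $\lambda_{n}/\lambda_{1}$ produced by LLL reduction.
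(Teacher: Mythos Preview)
Your proposal is correct and takes essentially the same approach as the paper: both rely on Regev's iterative reduction \cite[Sec.~3.3]{R05}, and indeed the paper offers no independent proof of this lemma but simply cites that result, so your sketch actually supplies more detail than the paper does. One minor remark: your closing paragraph worries about $\sigma$ ``degrading'' across iterations, but $\sigma$ is the smallest eigenvalue of $\mathbf{A}^{T}\mathbf{A}$ for the \emph{fixed} (LLL-reduced) basis and does not change from stage to stage---the inequality of Claim~7 is a one-time check on the parameters, not a condition to be re-established at each step.
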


\section{Complexity of Lattice Problems}

The security of any lattice-based cryptosystem is based on the presumed
hardness of lattice problems. In this subsection we limit our discussion
to the $\mathtt{GapSVP_{\gamma}}$ and $\mathtt{SIVP_{\gamma}}$ problems.

One well-known algorithm for solving lattice problems is the LLL algorithm
\cite{LLL82}. The algorithm solves $\mathtt{SIVP_{\gamma}}$ and
can be adapted to solving many other lattice problems as well. The
algorithm runs in polynomial time but only achieves an approximation
factor of $O\left(2^{n}\right)$. There have been a number of improvements
to this algorithm, such as Schnorr's algorithm \cite{S87} but none
that achieve small approximation factors (to within even a polynomial
factor of $n$) that run in polynomial time. All known algorithms
that return exact solutions to lattice problems in fact require a
running time on the order of $2^{n}$, see for example \cite{MV10}
or \cite{AKS01}. The hardness of these problems leads to the following
conjecture, stated in \cite[Conj. 1.1]{MR09}: 
\begin{conjecture}
There is no polynomial time algorithm that approximates lattice problems
to within an approximation factor that is within a polynomial factor
of $n$. 
\end{conjecture}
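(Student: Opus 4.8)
The first thing to make explicit is that this is a \emph{conjecture} rather than a theorem, so the honest ``proof proposal'' is not to prove it but to explain why no proof is available and to organize the evidence behind it. A genuine proof would require an unconditional super-polynomial lower bound on the running time of \emph{every} classical or quantum algorithm that approximates $\mathtt{GapSVP}_{\gamma}$ and $\mathtt{SIVP}_{\gamma}$ for $\gamma=\mathrm{poly}(n)$. No currently known technique produces lower bounds of this type against arbitrary algorithms; such a result would itself be a landmark in computational complexity. The plan, therefore, is to assemble two complementary lines of support --- algorithmic and complexity-theoretic --- and to identify precisely where the obstruction to a proof lies.

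The algorithmic evidence is the primary heuristic basis for the conjecture. The best polynomial-time basis-reduction algorithm, LLL \cite{LLL82}, achieves only an approximation factor of $2^{O(n)}$, and Schnorr's refinement \cite{S87} improves this merely to $2^{\Theta(n\log\log n/\log n)}$, as recorded in Claim~6; both are super-polynomial. At the other extreme, block-reduction and lattice-sieving methods that attain polynomial or exact approximation factors are all known to require running time of order $2^{\Theta(n)}$ \cite{MV10,AKS01}. Despite decades of effort, no algorithm is known that simultaneously runs in polynomial time and achieves a polynomial approximation factor, which is exactly the regime the conjecture rules out; this track record is the empirical ground on which lattice-based cryptography rests \cite{MR09}.

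The complexity-theoretic evidence explains why a proof is out of reach and, surprisingly, why even a resolution of $\mathrm{P}$ versus $\mathrm{NP}$ would not settle the matter. One natural strategy would be to reduce the conjecture to $\mathrm{P}\neq\mathrm{NP}$ by showing that polynomial-factor approximation is $\mathrm{NP}$-hard. This route is blocked: it is known (see the surveys \cite{MR09,MG02}) that for approximation factors $\gamma=\Omega(\sqrt{n/\log n})$ the problem $\mathtt{GapSVP}_{\gamma}$ lies in $\mathrm{NP}\cap\mathrm{coAM}$, so proving $\mathrm{NP}$-hardness in this regime would collapse the polynomial hierarchy. Hence the polynomial-factor regime is widely believed \emph{not} to be $\mathrm{NP}$-hard, and no reduction from a standard worst-case hardness assumption can yield the conjecture. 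The main obstacle, then, is not a missing computation but the complete absence of a framework for proving unconditional lower bounds in precisely the parameter range that is both cryptographically relevant and provably too easy to be $\mathrm{NP}$-hard; for this reason the statement must remain a conjecture, supported by the structural and empirical evidence above rather than derived from a more established assumption.
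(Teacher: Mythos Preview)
Your proposal is correct and aligns with the paper's own treatment: the paper does not prove this statement but presents it explicitly as a conjecture (attributed to \cite[Conj.~1.1]{MR09}), supported by exactly the two strands of evidence you give --- the algorithmic gap between LLL/Schnorr (only $2^{O(n)}$-factor approximations in poly time) and the $2^{\Theta(n)}$-time exact solvers, and the complexity-theoretic classification ($\mathrm{NP}$-hard for constant factors, $\mathrm{NP}\cap\mathrm{coNP}$ for $\sqrt{n}$ factors). Your additional remark that $\mathrm{NP}$-hardness at polynomial factors would collapse the polynomial hierarchy is a nice sharpening the paper does not make explicit, but the overall approach is the same.
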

Within certain approximation factors, the complexity class of solving
lattice problems is known. It is NP-Hard to approximate $\mathtt{GapSVP}$
to within constant factors \cite{A96}, {[}8{]}\textendash {[}10{]}.
For a factor of $\sqrt{n}$, it belongs to the class NP$\cap$CoNP
\cite{GG00}, \cite{AR04}. It should be noted that our results are
based on an approximation factor of $n/c$. While such a strong hardness
result is not known for this regime, constructing algorithms to achieve
such approximation factors within polynomial time seems to be out
of reach. These results are summarized in Table III.

\begin{table}[th]
\centering{}\caption{Hardness results for standard lattice problems}

\begin{tabular}{|c|c|c|}
\hline 
$\gamma$  & Hardness  & Reference\tabularnewline
\hline 
\hline 
$O\left(1\right)$  & NP-Hard  & \cite{A96},\cite{K04},\cite{HR07},\cite{M01}\tabularnewline
\hline 
$\sqrt{n}<\gamma<n$  & NP$\cap$CoNP  & \cite{GG00},\cite{AR04}\tabularnewline
\hline 
$\gamma\sim n$  & \textendash{}  & \cite{R05},\cite{AD97}, This work\tabularnewline
\hline 
$2^{n\log\log n/\log n}$  & P  & \cite{LLL82},\cite{S87}\tabularnewline
\hline 
\end{tabular}
\end{table}

Another interesting result related to the hardness of lattice problems
considers quantum computation. If a quantum computer were to be realized,
this could have profound implications for the field of cryptography.
A quantum computer could efficiently factor numbers and solve the
discrete-logarithm problem, which would allow for virtually all key-exchange
protocols to be broken in polynomial time \cite{S97}. In addition,
algorithms such as Grover's search algorithm could improve exhaustive
searches by a factor of a square root, weakening the security of systems
like AES (the Advanced Encryption Standard, based on the Rijndael
cipher) \cite{G97}. There are currently no known quantum algorithms
that perform significantly better than the best known classical algorithms.
However, it should be noted that quantum algorithms are far less understood
and studied than classical algorithms, and thus we have less assurance
that such an algorithm does not exist. We will still state the following
conjecture, given in \cite[Conj. 1.2]{MR09}, but note that this is
a weaker conjecture than Conjecture 1: 
\begin{conjecture}
There is no polynomial time quantum algorithm that approximates lattice
problems to within polynomial factors. 
\end{conjecture}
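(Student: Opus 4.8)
The plan is necessarily a limited one: Conjecture~2 is a hardness \emph{conjecture} imported verbatim from \cite{MR09}, and no unconditional proof is known, nor should one be expected soon, since a proof would be a landmark complexity-theoretic separation, asserting that a problem lying inside $\mathrm{NP}\cap\mathrm{CoNP}$ cannot even be approximated in polynomial time by a quantum computer. What I would actually do, in the spirit of the rest of this appendix, is marshal the evidence that makes the conjecture credible and then pinpoint where a genuine proof breaks down.

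First I would lay out the algorithmic record. The best known polynomial-time lattice-reduction algorithms, namely LLL \cite{LLL82}, Schnorr's block variant \cite{S87}, and their descendants such as BKZ, achieve only a superpolynomial approximation factor (Claim~6), while every method that drives the factor down to $\mathrm{poly}(n)$, including the sieving and Voronoi-cell algorithms \cite{AKS01,MV10}, is known to run in time $2^{\Omega(n)}$, and this picture has withstood more than three decades of sustained effort. Since any quantum algorithm can be simulated classically, quantum hardness of $\mathrm{poly}(n)$-approximation is \emph{formally weaker} than the classical statement of Conjecture~1, so all of this classical evidence transfers directly. To it I would add the quantum-specific evidence: the most natural quantum route, recasting $\mathtt{GapSVP}$ or $\mathtt{SIVP}$ as a hidden-subgroup problem over the dihedral or symmetric group, has itself resisted an efficient solution, and Grover-type amplitude amplification \cite{G97} only removes a square root from the exponent, leaving the running time exponential. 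Finally I would invoke the worst-case/average-case bridge of Ajtai \cite{A96} together with the reductions of \cite{R05,P09} and of this paper: a $\mathrm{poly}(n)$-factor quantum algorithm for $\mathtt{GapSVP}$ or $\mathtt{SIVP}$ would at once break $\mathtt{LWE}$, every $\mathtt{LWE}$-based cryptosystem, and the MIMO wiretap scheme analyzed here, so Conjecture~2 is exactly as strong as the security of a large, heavily scrutinized family of constructions, and its persistence is itself part of the case for it.

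The step I expect to be the real obstacle, and indeed the reason this must remain a conjecture, is converting any of the above into an \emph{unconditional} lower bound. For $\sqrt{n}<\gamma<n$ the decision problem $\mathtt{GapSVP}_{\gamma}$ lies in $\mathrm{NP}\cap\mathrm{CoNP}$ \cite{GG00,AR04}, so it is not $\mathrm{NP}$-hard unless the polynomial hierarchy collapses, and one therefore cannot route a proof through $\mathrm{NP}$-hardness; moreover we possess no technique for proving superpolynomial quantum, or even classical, lower bounds against problems in $\mathrm{NP}\cap\mathrm{CoNP}$, a regime in which the familiar relativization and natural-proofs barriers all apply. Consequently the honest ``proof'' is precisely the reduction-and-evidence argument assembled above: a fast quantum approximation algorithm would collapse a wide and well-studied cryptographic landscape, and we take the long-standing failure to produce such an algorithm, classical or quantum, as strong, though, as the word ``conjecture'' signals, not conclusive, support for the statement.
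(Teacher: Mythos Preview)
Your proposal is correct in the only sense that matters here: the paper offers no proof of this statement either, because none exists. The paper simply imports the conjecture from \cite[Conj.~1.2]{MR09}, prefaces it with a one-paragraph remark that no known quantum algorithm outperforms the best classical ones (while cautioning that quantum algorithms are less studied), and explicitly flags it as a \emph{weaker} conjecture than Conjecture~1. Your write-up does the same thing at greater length and with more structure---the algorithmic record, the dihedral-HSP and Grover observations, the $\mathrm{NP}\cap\mathrm{CoNP}$ barrier to routing through hardness---all of which is sound and compatible with the paper's brief discussion, but none of which the paper itself spells out.
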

Since we show in our main result that the hardness of decoding MIMO
is based on lattice problems, this implies that MIMO decoding cannot
be performed any faster on a quantum computer. In more general terms,
this means that lattice-based cryptography currently provides promise
for efficient constructions of classical cryptosystems that are secure
against quantum computers.

Besides being conjectured to be hard, many lattice problems have an
additional property that makes them attractive to cryptographers:
for certain problems, there are connections between the average-case
and worst-case complexities. This allows for the construction of systems
which are based on robust proofs of security. This property and its
significance is described next.

The worst-case complexity refers to the complexity of solving the
problem for the worst possible input of a fixed size; whereas average-case
complexity of a problem refers to the average complexity of solving
a problem given some underlying distribution of inputs of a fixed
size (typically uniformly random over all possible inputs). A worst-to-average
case reduction gives a distribution of inputs for which the average
complexity of solving a problem is as hard as the worst case complexity
(potentially of a different problem). The connections between worst-
and average-case complexity of certain lattice problems was first
found by Ajtai \cite{A96}. Ajtai constructed a function that is one-way
(that is, it can be computed in polynomial time, but is hard to invert)
on average based on the worst-case hardness of lattice problems. This
result was used by Ajtai and Dwork to construct a cryptosystem \cite{AD97}.
These worst-to-average case reductions were extended by many, but
most important to this paper is the reduction found in \cite{MR04}.

Basing cryptographic systems on problems where a worst-to-average
case reduction exists is an extremely strong guarantee of security.
It means that it is at least as hard to break the cryptosystem as
it is to solve \emph{any} instance of the related problems. Such a
strong guarantee is not provided by most cryptosystems today. For
example, breaking a cryptosystem that is based on factoring (e.g.
RSA) only implies a solution to factoring numbers of a specific form;
namely, the specific form used to generate RSA keys and not a solution
to worst-case factoring problems. 

Our scheme does not have an average-to-worst case reduction, and it
is not clear that such a reduction is possible. This means that there
may certain structured inputs which may make MIMO decoding easy \textendash{}
for example consider the case that $\mathbf{x}$ is sparse. Finding
a solution for decoding MIMO for sparse signals, would certainly not
lead to a solution to approximating all lattice problems to within
linear factors. A general polynomial-time algorithm that decodes all
MIMO signals under the conditions given in this paper, would, however
lead to a solution to approximating lattice problem, and this is unlikely. 

\section*{Acknowledgement}

The authors would like to thank Dan Boneh for his discussions on lattice-based
cryptography, Martin Hellman for his comments on a preliminary version
of this work, Shlomo Shamai for discussions regarding information-theoretic
secrecy in the context of our model, Mainak Chowdhury for discussions
on algorithms for MIMO decoding and linear codes, and Yonathan Morin
for his useful discussions and comments on a preliminary version of
this work.


\begin{thebibliography}{10}
\bibitem{KL07}J.Katz, and Y. Lindell, \emph{Introduction to Modern
Cryptography}. New York: Chapman \& Hall/CRC, 2007.

\bibitem{DGC03}M. O. Damen, H. El Gamal, and G. Caire, \textquotedbl{}On
maximum likelihood detection and the search for the closest lattice
point\textquotedbl{},\emph{ IEEE Trans. Inf. Theory}, vol. 49, no.
10, pp.2389\textendash 2402, 2003.

\bibitem{MR09}D. Micciancio and O. Regev, ``Lattice-based Cryptography,\textquotedblright{}
in \emph{Post-Quantum Cryptography}, Berlin, German: Springer Berlin/Heidelberg,
pp. 147\textendash 191, 2009.

\bibitem{MG02}D. Micciancio and S. Goldwasser, \emph{Complexity of
Lattice Problems: A Cryptographic Perspective}. Boston, MA: Kluwer
Academic Publishers, Mar. 2002.

\bibitem{A96}M. Ajtai, ``Generating hard instances of lattice problems'',
\emph{Complexity of computations and proofs}, vol. 13 of \emph{Quad.
Mat.}, pp. 1\textendash 32, 2004. Preliminary version in \emph{STOC}
1996.

\bibitem{GG00}O. Goldreich and S. Goldwasser. On the limits of nonapproximability
of lattice problems. Journal of Computer and System Sciences, 60(3):540-563,
2000. Preliminary version in STOC 1998.

\bibitem{AR04}D. Aharonov and O. Regev. Lattice problems in NP intersect
coNP. \emph{J. of the ACM}, 52(5):749\textendash 765, 2005. Preliminary
version in FOCS 2004.

\bibitem{K04}S. Khot. Hardness of approximating the shortest vector
problem in lattices. In \emph{Proc. 45th Annual IEEE Symp. on Foundations
of Computer Science (FOCS)}, pages 126\textendash 135, 2004.

\bibitem{HR07}I. Haviv and O. Regev. Tensor-based hardness of the
shortest vector problem to within almost polynomial factors. In \emph{Proc.
39th ACM Symp. on Theory of Computing (STOC)}, pp. 469\textendash 477,
2007.

\bibitem{M01}D. Micciancio. The shortest vector in a lattice is hard
to approximate to within some constant. \emph{SIAM J. on Computing},
30(6): 2008\textendash 2035, 2001.

\bibitem{MV10}D. Micciancio, and P. Voulgaris, Faster exponential
time algorithms for the shortest vector problem. \emph{Proc. of the
Twenty-First Annual ACM-SIAM Symp. on Discrete Algorithms}, pages
1468\textendash 1480, 2010.

\bibitem{AKS01}M. Ajtai, R. Kumar, and D. Sivakumar. A sieve algorithm
for the shortest lattice vector problem. In \emph{Proc. of the ACM
Symp. on the Theory of Computing}, number 33, pages 601-610, 2001.

\bibitem{R05}O. Regev, ``On lattices, learning with errors, random
linear codes, and cryptography'', \emph{Proc. 37th ACM Symp. on Theory
of Computing (STOC)}, pp. 84\textendash 93, 2005.

\bibitem{RSA}R. Rivest, A. Shamir, and L. Adleman, \textquotedbl{}A
Method for Obtaining Digital Signatures and Public-Key Cryptosystems\textquotedbl{},
\emph{Communications of the ACM} 21 (2): 120\textendash 126, 1978.

\bibitem{DH}W. Diffie, and M. Hellman, \textquotedbl{}New directions
in cryptography\textquotedbl{}. \emph{IEEE Transactions on Information
Theory} 22 (6): 644\textendash 654, 1976.

\bibitem{NC00}M. A. Nielson and I. L. Chuang. \emph{Quantum Computation
and Quantum Information. }Cambridge, MA: Cambridge Univ. Press, 2000.

\bibitem{G97}L. K. Grover. Quantum mechanics helps in searching for
a needle in a haystack. \emph{Phys. Rev. Lett.}, 79(2):325, 1997.

\bibitem{S97}P. W. Shor. Polynomial-time algorithsm for prime factorization
and discrete logarithms on a quantum computer. \emph{SIAM J. Comp.},
26(5):1484\textendash 1509, 1997.

\bibitem{S49}C. E. Shannon, \textquotedbl{}Communication theory of
secrecy systems\textquotedbl{}, \emph{Bell Syst. Tech. J.}, vol. 28,
pp.656\textendash 715, 1949.

\bibitem{W75}A. D. Wyner, \textquotedbl{}The wire-tap channel\textquotedbl{},
\emph{Bell Syst. Tech. J.}, vol. 54, pp.1355\textendash 1387, 1975.

\bibitem{MFHS10}A. Mukherjee, S. A. A. Fakoorian, J. Huang and A.
L. Swindlehurst, ``Principles of physical-layer security in multiuser
wireless networks: survey\textquotedblright , \emph{Commun. Surveys
Tuts}, 16 (3): 1550\textendash 1573, 2010.

\bibitem{SISP10}T. Shimizu , H. Iwai , H. Sasaoka and A. Paulraj,
\textquotedbl{}Secret key agreement based on radio propagation characteristics
in two-way relaying systems\textquotedbl{}, \emph{Proc. 2010 IEEE
Global Commun. Conf.}, pp.1\textendash 6, 2010.

\bibitem{Trappe}W. Trappe, ``The Challenges Facing Physical Layer
Security,'' \emph{IEEE Commun. Mag., }pp. 16\textendash 20, June,
2015.

\bibitem{P09}C. Peikert, ``Public-Key Cryptosystems from the Worst-Case
Shortest Vector Problem'', \emph{Proceedings of the 41st annual ACM
symposium on Theory of computing}, pp. 333\textendash 342, 2009.

\bibitem{G04}A. Goldsmith, \emph{Wireless Communications}. Cambridge,
MA: Cambridge Univ. Press, 2004.

\bibitem{ER05}A. Edelman and N. Rao, \textquotedbl{}Random matrix
theory\textquotedbl{}, \emph{Acta Numerica}, vol. 14, pp.233\textendash 297,
2005.

\bibitem{AG}S. Arora, and R. Ge, \textquotedbl{}New algorithms for
learning in presence of errors.\textquotedbl{} In \emph{Automata,
Languages and Programming}, pages 403\textendash 415, 2011.

\bibitem{C68}R. H. Clarke, ``A statistical theory of mobile radio
reception,'' \emph{Bell Systems Tech. J.}, pp. 957\textendash 1000,
July/August 1968.

\bibitem{J74}W. C. Jakes, Jr., \emph{Microwave Mobile Communications},
Wiley, New York, 1974.

\bibitem{MF02}Y. Mohasseb and M. P. Fitz, ``A 3-D spatio-temporal
simulation model for wireless channels,'' \emph{IEEE J. Sel. Areas
Commun.}, pages 1193\textendash 1203, August 2002.

\bibitem{ECSRR98}R. Ertel, P. Cardieri, K. W. Sowerby, T. Rappaport,
and J. H. Reed, ``Overview of spatial channel models for antenna
array communication systems,'' \emph{IEEE Per. Commun. Mag.}, pp.
10\textendash 22, February, 1998.

\bibitem{LLL82}A. Lenstra, H. Lenstra, Jr., and L. Lovasz, ``Factoring
polynomials with rational coefficients'', \emph{Math. Ann.}, vol.
261, no. 4, pp. 515\textendash 534, 1982.

\bibitem{GPV08}C. Gentry, C. Peikert, and V. Vaikuntanathan. ``Trapdoors
for hard lattices and new cryptographic constructions'', \emph{Proc.
38th Annual ACM Symp. on Theory of Computing (STOC)}, pages 197\textendash 206,
2008.

\bibitem{MR04}D. Micciancio and O. Regev, ``Worst-case to average-case
reductions based on Gaussian measures'', \emph{SIAM J. of Computing},
vol. 37,no. 1, pp. 267\textendash 302, 2007.

\bibitem{AD97}M. Ajtai and C. Dwork. ``A public-key cryptosystem
with worst-case/average-case equivalence'', \emph{Proc. 29th Annual
ACM Symp. on Theory of Computing (STOC)}, pp. 284\textendash 293,
1997.

\bibitem{S87}C. P. Schnorr. ``Factoring integers and computing discrete
logarithms via Diophantime approximation'', \emph{Advances in comptutational
complexity}, pages 171\textendash 182, 1987.

\bibitem{BB13} R. El Bansarkhan, and J. Buchmann. \textquotedbl{}Improvement
and efficient implementation of a lattice-based signature scheme,\textquotedbl{}
\emph{Selected Areas in Cryptography}\textendash SAC 2013. Springer
Berlin Heidelberg, 2014. 48\textendash 67.

\bibitem{IEEE_LS}\emph{IEEE Standard Specification for Public Key
Cryptographic Techniques Based on Hard Problems over Lattices}, 1363.1-2008
- 2009.

\bibitem{ANSI}\emph{Lattice-Based Polynomial Public Key Establishment
Algorithm for the Financial Services Industry}, ANSI X9.98-2010, 2010.

\bibitem{LPR}V. Lyubashevsky, C. Peikert, and O. Regev. \textquotedbl{}On
ideal lattices and learning with errors over rings\textquotedbl{},
\emph{Advances in Cryptology\textendash EUROCRYPT 2010}. Springer,
Heidelberg, pp. 1\textendash 23, 2010.

\bibitem{BGB05}D. Boneh, E. Goh, and X. Boyen, ``Hierarchical Identity
Based Encryption with Constant Size Ciphertext'', \emph{Proc. of
Eurocrypt '05}, LNCS 3493, pages 440\textendash 456, 2005.

\bibitem{PVW08}C. Peikert, V. Vaikuntanathan, and B. Waters, ``A
framework for efficient and composable oblivious transfer'', in \emph{Proc.
of CRYPTO '08,} 2008, pp. 554\textendash 571.

\bibitem{MV03}D. Micciancio, and S. P. Vadhan. ``Statistical zero-knowledge
proofs with efficient provers: Lattice problems and more''. In: Boneh,
D. (ed.) CRYPTO 2003. \emph{LNCS}, vol. 2729, pp. 282\textendash 298.
Springer, Heidelberg (2003).

\bibitem{BPR11}A. Banerjee, C. Peikert, and A. Rosen. ``Pseudorandom
functions and lattices'', \emph{Proc. of CRYPTO} '12: 719-737.

\bibitem{FHE} C. Gentry, ``A fully homomorphic encryption scheme,''
Ph.D. dissertation, Dept. of Comp. Sci., Stanford Univ., Stanford,
CA, 2009.

\bibitem{OAEPP}V. Shoup. ``OAEP Reconsidered.'' A\emph{nnual International
Cryptology Conf.} Springer Berlin Heidelberg, 2001.

\bibitem{OAEP}M. Bellare and P. Rogaway. Optimal asymmetric encryption.
In \emph{Advances in Cryptology\textemdash Eurocrypt \textquoteright 94},
pages 92\textendash 111, 1994.

\bibitem{BellEq}M. Bellare, A. Desai, D. Pointcheval, and P. Rogaway.
Relations among notions of security for public-key encryption schemes.
In \emph{Advances in Cryptology\textendash Crypto \textquoteright 98},
pages 26\textendash 45, 1998.

\bibitem{MG84}S. Goldwater and S. Micali. Probabilistic encryption.
\emph{J. of Comp. and Syst. Sciences}, no. 2, pages 270\textendash 299,
1984.

\bibitem{BellRO}M. Bellare and P. Rogaway. Random oracles are practical:
a paradigm for designing efficient protocols. In \emph{First ACM Conference
on Computer and Communications Security}, pages 62\textendash 73,
1993.

\bibitem{SecretKeyAgreement}R. Ahlswede and I. Csiszàr. ``Common
Randomness in Information Theory and Cryptography\textendash Part
I: Secret Sharing''. \emph{IEEE Trans. Inf. Theory} vol. 39, no. 4,
pp.1121\textendash 1132, 1993.

\bibitem{Babai}L. Babai. On Lovasz' lattice reducition and the nearest
lattice point problem. \emph{Combinatorica}, 6(1):1\textendash 13,
1986.

\bibitem{MiccW} D. Micciancio and M. Walter. Practical, predictable
lattice basis reduction. In \emph{Proc. of EUROCRYPT}, volume 9665
of LNCS, page 1123. Springer, 2016.

\bibitem{rudelson}M. Rudelson and R. Vershynin. ``Non-asymptotic
theory of random matrices: extreme singular values.'' \textbackslash{}emph\{arXiv
preprint\} arXiv:1003.2990 (2010).

\bibitem{ILL89} R. Impagliazzo, L. Levin, and M. Luby. \emph{Pseudo-random
generation from one-way functions} (extended abstract). In \emph{STOC},
pages 12\textendash 24, 1989.

\bibitem{SS15}R. Steinfeld and A. Sakzad. \emph{On Massive MIMO Physical
Layer Cryptosystem}. Preprint, arXiv:1507.08015, 2015. 

\bibitem{ISW} Y. Ishai, Amit Sahai, and D. Wagner. \emph{Private
circuits: Securing hardware against probing attacks}. In CRYPTO, pages
463\textendash 481, 2003.

\bibitem{MR04b} S. Micali and L. Reyzin. \emph{Physically observable
cryptography} (extended abstract). In TTC, pages 278\textendash 297,
2004.
\end{thebibliography}
\end{document}